\newcolumntype{C}[1]{>{\centering\arraybackslash}p{#1}} 
\newcommand{\citecomment}[2][]{\citen{#2}#1\citevar}
\newcommand{\citeone}[1]{\citecomment{#1}}
\newcommand{\citetwo}[2][]{\citecomment[,~#1]{#2}}
\newcommand{\citevar}{\@ifnextchar\bgroup{;~\citeone}{\@ifnextchar[{;~\citetwo}{]}}}
\newcommand{\citefirst}{\@ifnextchar\bgroup{\citeone}{\@ifnextchar[{\citetwo}{]}}}
\newcolumntype{P}[1]{>{\centering\arraybackslash}p{#1}}
\definecolor{Gray}{gray}{0.9}
\newcommand{\argmax}{\operatornamewithlimits{arg\,max}}
\newcommand{\abs}[1]{\left\lvert{#1}\right\rvert}
\newcommand{\norm}[1]{\left\lVert{#1}\right\rVert}
\newcommand{\stkout}[1]{\ifmmode\text{\sout{\ensuremath{#1}}}\else\sout{#1}\fi}
\newtheorem{theorem}{Theorem}
\newtheorem{cor}{Corollary}
\newtheorem{remark}{Remark}
\newtheorem{definition}{Definition}
\newtheorem{problem}{Problem}
\begin{document}

\title{Achieving Positive Covert Capacity over MIMO AWGN Channels}
%
%
%

\author{Ahmed~Bendary,~\IEEEmembership{Student~Member,~IEEE,}
 Amr~Abdelaziz,~
 and~C.~Emre~Koksal,~\IEEEmembership{Senior~Member,~IEEE}

\thanks{This work was presented in part in 2020 IEEE Conference on Communications and Network Security (IEEE CNS 2020).}
\thanks{This work was in part supported by grants NSF CNS 1618566 and 1514260.}

\thanks{Ahmed Bendary and C. Emre Koksal are with the Department of Electrical and Computer Engineering, The Ohio State University, Columbus, OH 43210, USA, e-mail: {bendary.1, koksal.2@osu.edu}.}

\thanks{Amr Abdelaziz is with the Department of Communication Engineering, The Military Technical College, Cairo, Egypt, e-mail: {amrashry@mtc.edu.eg}.}
}
\maketitle
\begin{abstract}

We consider covert communication, i.e., hiding the presence of communication from an adversary for multiple-input multiple-output (MIMO) additive white Gaussian noise (AWGN) channels. We characterize the maximum covert coding rate under a variety of settings, including different regimes where either the number of transmit antennas or the blocklength is scaled up. We show that a non-zero covert capacity can be achieved in the massive MIMO regime in which the number of transmit antennas scales up but under specific conditions. Under such conditions, we show that the covert capacity of MIMO AWGN channels converges the capacity of MIMO AWGN channels. Furthermore, we derive the order-optimal scaling of the number of covert bits 
in the regime where the covert capacity is zero. We provide an insightful comparative analysis of different cases in which secrecy and energy-undetectability constraints are imposed separately or jointly.

\end{abstract}

\begin{IEEEkeywords}
Covert communication, low probability of detection communication, MIMO AWGN, square-root law, secrecy capacity, compound channels, unit-rank MIMO.
\end{IEEEkeywords}
\IEEEpeerreviewmaketitle
\section{Introduction}

\IEEEPARstart{W}{ireless} communication is prone to eavesdropping, and hence, cryptographic techniques are utilized to achieve secure wireless communication. However, in many situations, it is required that the entire communication session remains undetectable to protect the privacy of legitimate parties. This is fundamentally different from information secrecy (IS) that brings into action a set of new challenges that need to be addressed at the physical layer. Existing techniques such as Steganography, which hides information in non-secret files, and tools such as Tor, which provides anonymity for Internet users, are deployed in the application layer.

Recently, there has been a growing interest in low probability of detection (LPD) communication, referred to as covert communication. In a covert communication scenario, a legitimate transmitter, Alice, aims to send a message to a legitimate receiver, Bob, while an illegitimate receiver, Willie, tries to detect the existence of the ongoing communication. While Willie eavesdrops on the channel, Alice's objective is to reliably communicate the message to Bob and at the same time guarantee that Willie’s optimal detector is as inefficient as random guessing. One of the challenging exercises in secure communication is the reliable exchange of bits between two parties, while remaining covert. This problem has many unique challenges and received significant interest lately. One of the main findings is that the square root of the blocklength is an asymptotically-tight upper bound on the information bits that can be transmitted covertly over additive white Gaussian noise (AWGN) channels. Thus, the covert capacity is zero.


Hero had earlier considered a version of covert communication problem over multiple-input multiple-output (MIMO) channels in \cite{hero2003secure}, where the constraint for LPD is different from the typical LPD constraint of the probability of detection error being close to 1. Inspired by Steganography \cite{Fridrich_steganography1}, it has been established that the maximum number of bits that can be transmitted reliably with LPD is $\mathcal{O}(\sqrt{n})$ bits for both discrete memoryless channels (DMC) \cite{wang2016fundamental,Bloch_resolvability} and single-input single-output (SISO) AWGN channels \cite{wang2016fundamental,Bloch_resolvability,bash2013limits} where $n$ is the blocklength. Besides, in MIMO AWGN channels, when the channel state information (CSI) of the illegitimate receiver is only known to have a bounded spectral norm, the maximum number of covert bits is $\mathcal{O}(\sqrt{n})$ bits \cite{amr}. Therefore, the covert capacity is zero. However, in more practical settings, when the illegitimate receiver has uncertainty about the channels' conditions or the transmit time \cite{Survey2015,channel_uncertain,positive_rate2,Wang_noncausal,Magazine_Hanly}, a positive covert capacity can be achieved. Note that, a typical assumption to achieve covert communication is that a secret is shared between the legitimate parties \cite{bash2013limits,wang2016fundamental}. The shared secret can be the codebook itself such that only one codebook is used to transmit one message \cite{bash2013limits}, or a secret key that is added only once to the codeword \cite{Bloch_resolvability,wang2016fundamental}. Throughout this paper, we use the term {\it secret codebook}, i.e., the codebook is kept secret from the illegitimate receiver, to indicate that a secret is shared between the legitimate parties. It is shown that $\mathcal{O}(\sqrt{n})$ bits can be transmitted reliably and covertly using $\mathcal{O}(\sqrt{n})$ pre-shared bits \cite{Bloch_resolvability}. Nonetheless, covert communication can be achieved even when there is no shared secret under the condition that the illegitimate receiver's channel is ``noisier'' than the legitimate receiver's channel \cite{constant_rate,Bloch_resolvability,Survey2015}.

In covert communication, the problem of detection of Alice by Willie is posed as a stochastic signal detection problem. In the AWGN channel scenario, with the assumption that the codebook is kept secret from Willie, the optimal test that minimizes the probability of detection error is the likelihood ratio test \cite[Theorem 13.1.1]{lehmann2006testing}, which reduces to energy detection. That is, with a {\it secret codebook}, the LPD constraint is an energy-undetectability constraint. One may think that communication with the energy-undetectability constraint automatically implies communication with IS. Interestingly, however, an illegitimate receiver can decode the received message even if the existence of communication may not be energy detected. 

More recently, MIMO wireless communication is proposed to improve wireless security as well as to offer diversity and multiplexing gains. Accordingly, extensive work on communication with IS established the secrecy capacity of MIMO wiretap channels \cite{MMSE_approach,Khisti_MIMO,Hassibi_secrecy_capacity}. Further, the massive MIMO antenna array achieves high gain and directivity that offers LPD and high confidentiality against eavesdropping attacks \cite{Ozan,MIMO_against_eavesdropping}. However, MIMO channels have not been thoroughly understood in the setting of covert communication. Especially in the massive MIMO limit, it is unclear whether the square-root law still holds, as the number of transmit antennas is scaled up with the blocklength. 


In this paper, our objective is to show the effect of utilizing MIMO channels to achieve a non-zero covert capacity in contrast to the SISO case, in which the covert capacity is zero. We show that scaling up the number of transmit antennas or exploiting the null-space of the illegitimate receiver's channel achieves a non-zero covert capacity. To investigate this problem, we first start by characterizing the fundamental limits of covert communication over MIMO AWGN channels. However, without scaling up the number of transmit antennas or the existence of a null-space, the covert capacity of MIMO AWGN channels is zero similar to what is reported in the literature for DMC and SISO AWGN channels. Parallel to our work, the authors in \cite{Covert_MIMO} investigated covert communication over MIMO AWGN channels under total variational distance as the covertness metric where binary phase-shift keying and Gaussian signaling are shown to be optimal. However, therein, the number of transmit antennas is not more than the number of receive antennas
, and thus, achieving a positive covert capacity is not investigated. We investigate the fundamental limits with a {\it secret codebook} in Section \ref{sec:covert} and without a {\it secret codebook} in Section \ref{sec:covert and secure}, then, we show how to achieve a positive covert capacity in Section \ref{sec:asymptotic}. 

In particular, we start by providing the system model of MIMO AWGN channels and preliminary definitions relevant to covert communication. We explain the hypothesis statistical testing problem that leads to the energy-undetectability constraint in terms of the n-letter Kullback-Leibler (KL) divergence. Then, we state the associated problems that we investigate in this paper. We focus on two different regimes: 1) the scaling with the blocklength, and 2) the scaling with the number of transmit antennas. 

First, we consider the problem of maximizing the covert information that can be transmitted reliably over MIMO AWGN channels. We present a single-letter characterization of how the covert information scales with the blocklength. We derive an exact order-optimal expression for the scaling. When the covert capacity is zero, only a multiplicative constant of the square root of the blocklength, $\mathcal{O}(\sqrt{n})$, bits can be transmitted covertly and reliably over such channels. This result holds either with or without a secret codebook. In the case that the codebook is not kept secret, we incorporate different notions of IS into the problem (such as weak, strong, and effective secrecy) and consider one of them simultaneously with the energy-undetectability constraint. Besides, we deduce the scaling laws for special MIMO AWGN channels, which models different scenarios, such as well-conditioned MIMO AWGN channels, unit-rank MIMO AWGN channels as well as the compound MIMO AWGN channels when the CSI of the illegitimate receiver is only known to have a bounded spectral norm. Further, we provide a comparative discussion to explain the relationship between communication with IS, covert communication, and energy-undetectable communication. 

Second, we show that the covert capacity does not need to be zero for a sufficiently large number of transmit antennas. Note that we make only mild assumptions on the CSI of the adversary who is trying to detect the communication session. This is in contrast to the existing studies, many of which assume the presence of global CSI to achieve covert communication. Also, we provide a lower bound on the number of transmit antennas that achieves a predefined target probability of detection. Finally, we provide numerical results to illustrate the behavior of covert communication when either the number of transmit antennas or the blocklength scales up. 

The contributions of this paper can be summarized as follows: 
\begin{itemize}
\item With a secret codebook and a known CSI of the illegitimate receiver, we derive order-optimal scaling of the maximum number of covert bits that can be transmitted reliably over MIMO AWGN channels. We show that without the existence of a null-space of the illegitimate receiver's channel, the covert capacity is zero but $\mathcal{O}(\sqrt{n})$ bits can be transmitted covertly and reliably. Thus, without a null-space, the square-root law still holds for MIMO channels as in the literature for SISO channels \cite{bash2013limits,Bloch_resolvability,wang2016fundamental}.
\item Without a secret codebook but with a known CSI of the illegitimate receiver, we derive order-optimal scaling of the maximum number of covert bits that can be transmitted reliably with IS over MIMO AWGN channels. We show that without the existence of a null-space of the illegitimate receiver's channel, the covert capacity is zero but $\mathcal{O}(\sqrt{n})$ bits can be transmitted covertly and reliably with IS, which coincides with \cite{Bloch_resolvability} and Model 1 in \cite{constant_rate}. 
\item With a secret codebook but without the CSI of the illegitimate receiver, we provide conditions under which the covert capacity of MIMO AWGN channels converges to the capacity of MIMO AWGN channels (without the KL constraint) for a sufficiently large number of transmit antennas, i.e., the square-root law of covert communication can be overcome for MIMO AWGN Channels. This result is parallel to the findings for IS in \cite{Ozan,MIMO_against_eavesdropping}, which deal with eavesdropping attacks. Besides, we provide a lower bound on the number of transmit antennas that achieves a predefined target probability of detection.
\end{itemize}

{\em Organization:} The paper is organized as follows. The system model, preliminary definitions, and problem formulation are presented in Section \ref{sec:model}. The scaling of the maximum number of covert bits with a secret codebook is provided in Section \ref{sec:covert}. The scaling of the maximum number of covert bits without a secret codebook but with IS is provided in Section \ref{sec:covert and secure}. In Section \ref{sec:asymptotic}, the asymptotics of covert communication with the number of transmit antennas is developed. Furthermore, we provide numerical examples in Section \ref{numerical} and the paper is concluded in Section \ref{sec:conclusion}.

{\em Notations:} In the rest of this paper, $\max\,(0,\cdot)$ is denoted $[\cdot]^+$, the transpose and the conjugate transpose are denoted by $(\cdot)^{T}$, and $(\cdot)^{ \dag}$, respectively. The identity matrix of a size $N$ is denoted by $\mathbf{I}_N$, the trace and the determinant of a matrix ${\mathbf{A}}$ are denoted by $\mathbf{tr}(\mathbf{A})$ and $\abs{\mathbf{A}}$, respectively. A diagonal matrix with diagonal elements $(a_1, a_2,\dots , a_m)$ is denoted by $\mathbf{diag}(a_1, a_2,\dots , a_m)$. When $\mathbf{A}-\mathbf{B}$ is a positive semi-definite, we write $\mathbf{A}\succeq\mathbf{B}$. The distribution of a circularly symmetric complex Gaussian random vector $\mathbf{X}$ with a mean vector $\mathbf{\mu}$ and a covariance matrix $\mathbf{Q}$ is denoted by $\mathcal{CN}(\mathbf{\mu},\mathbf{Q})$. Moreover, the expectation and the variance of a random vector $\mathbf{X}$ are denoted by $\mathbb{E}[\mathbf{X}]$ and $\mathbf{var}[\mathbf{X}]$, respectively. The entropy of a discrete random variable $X$ is $H(X)$ and the differential entropy of a continuous random variable $X$ is $h(X)$, while the mutual information between two random variables $X$ and $Y$ is denoted by $\mathrm{I}(X;Y)$. Further, $\varliminf$, and $\norm{.}_{op}$ are the limit inferior, and the operator (spectral) norm of a matrix $\mathbf{A}$, i.e., the maximum eigenvalue of $\mathbf{A}$, respectively. An asymptotically tight upper bound on a function $f(n)$ is denoted by $\mathcal{O}(g(n))$, i.e., there exist constants $m$ and $n_0>0$ such that $0 \leq f(n) \leq m\,g(n)$ for all $n>n_0$. The KL divergence between two distributions is denoted as $\mathcal{D}(\mathbb{P}_{1} \parallel \mathbb{P}_{0})\,\triangleq\mathbb{E}_{\mathbb{P}_{1}} \left[\log f_{1} - \log f_{0}\right],$ where $f_{1}$ and $f_{0}$ are the density functions of $\mathbb{P}_{1}$ and $\mathbb{P}_{0}$, respectively. The logarithm function is the natural logarithm and information is measured in nats. 

\section{System Model}\label{sec:model}
\begin{figure}[h]
 \centering
 \includegraphics[scale=.58]{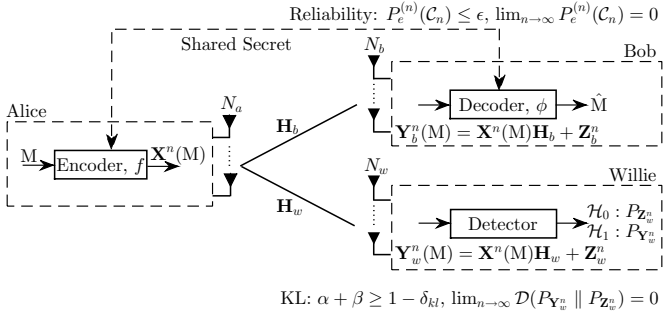}
 \caption{Covert communication over MIMO AWGN channels with a secret codebook.\label{fig:model}}
\end{figure}

In this paper, we consider MIMO AWGN channels. In this model, a legitimate transmitter, Alice, equipped with $N_a$ antennas, aims to send a covert message to a legitimate receiver, Bob, equipped with $N_b$ antennas, while an illegitimate receiver, Willie, equipped with $N_w$ antennas tries to detect the existence of the ongoing communication. The received signals at Bob and Willie, respectively, can be written as:
\begin{equation}\begin{aligned}
\label{eq:inout_security}
\mathbf{Y}_b&=\mathbf{H}_b\,\mathbf{X} + \mathbf{Z}_b, \\
\mathbf{Y}_w&=\mathbf{H}_w\,\mathbf{X} + \mathbf{Z}_w,
\end{aligned} \end{equation} 
where $\mathbf{X} \in \mathbb{C}^{N_a}$ is the transmitted signal with an average power constraint $\mathbf{tr}(\mathbf{Q})\leq P$, $\mathbf{Q}=\mathbb{E}[\mathbf{X}\mathbf{X}^{\dag}]$, $\mathbf{Z}_b \in \mathbb{C}^{N_b}$ and $\mathbf{Z}_w \in \mathbb{C}^{N_w}$ are independent and identically-distributed (i.i.d.) random vectors at Bob and Willie, respectively, $\mathbf{Z}_u\sim \mathcal{CN}(\mathbf{0},{\bf\Sigma}_b=\sigma_u^2\, \mathbf{I}_{N_u})$ and $ u\in\left\{b,w\right\}$. Moreover, ${\mathbf{H}_b} \in \mathbb{C}^{N_b \times N_a}$, denoted by Bob's channel, and $\mathbf{H}_w \in \mathbb{C}^{N_w \times N_a}$, denoted by Willie's channel, are the channels' coefficient matrices of Alice-Bob's and Alice-Willie's channels, respectively. Throughout this paper, $\mathbf{H}_b$ and $\mathbf{H}_w$ are assumed to be deterministic. More precisely, in MIMO AWGN channels, the channel matrices are time-invariant throughout the communication session \cite{tse2005fundamentals}. We start by assuming that both are known. Then, in Section \ref{sec:asymptotic}, we relax this assumption. Across $n$ channel uses, the transmitted sequence is defined by ${\bf X}^n\triangleq\left\{{\bf X}_1,{\bf X}_2,\dots,{\bf X}_n\right\}$ and ${\bf Y}^n_b$, ${\bf Y}^n_w$, ${\bf Z}^n_b$, and ${\bf Z}^n_w$ are defined similarly. The channel is memoryless and stationary such that to send a message, Alice sends ${\bf X}^n$ across $n$ uses. Bob and Willie receive ${\bf Y}_{b}^n$ and ${\bf Y}_{w}^n$, respectively, according to the following conditional density function:
\begin{equation}\label{memoryless}
 f_{{\bf Y}_{u}^n|{\bf X}^n}({\bf y}_u^n|{\bf x}^n)=\prod_{i=1}^n f_{{\bf Y}_{u}|{\bf X}}({\bf y}_{{u},i}|{\bf x}_i),\ u\in\left\{b,w\right\}.
\end{equation}
Next, we formulate problems that are addressed in this paper based on some preliminary definitions. 
\begin{definition}\label{code_definition}
A $(2^{{\lceil nR\rceil}},n,\epsilon)$-code consists of:
\begin{enumerate}
\item A uniformly distributed message set, $\mathcal{M}_n\triangleq[1:2^{\lceil nR\rceil}]$.
\item An encoder, $f:\mathcal{M}_n\mapsto \mathbb{C}^{n}$, that assigns a codeword (random vector), ${x}^n(m)$, to each message, $m\in\mathcal{M}_n$, under an average transmit power constraint $P$ on every codeword, ${\bf x}^n(m)\triangleq\left\{{\bf x}_1(m),\ \dots,{\bf x}_n(m)\right\}$, as follows:
\begin{equation}
\sum_{i=1}^n{\bf x}^{\dag}_i(m){\bf x}_i(m)\leq n\,P.
\end{equation}
\item A decoder, $\phi:\mathbb{C}^{n} \mapsto \mathcal{M}_n\cup{\left\{e\right\}}$, that assigns an estimate, $\hat{m}\in\mathcal{M}_n$, or an error message, $e$, to each received sequence, ${y}^{n}$.
\item The encoder-decoder pair satisfies the average probability of decoding error constraint: $P_e^{(n)}(\mathcal{C}_n)=\mathbb{P}\left[M\neq\hat{M}\right]\leq \epsilon$ where $\mathcal{C}_n$ is the codebook associated with the code.
\end{enumerate} 
\end{definition}

\begin{definition}[{\bf A secret codebook}]\label{secret codebook}
The secret codebook denotes either 1) a codebook that is kept secret and only one codebook is used to transmit one message \cite{bash2013limits} or 2) one-time padding the codeword with a secret of a sufficient length \cite{wang2016fundamental}.
\end{definition}

In the following, we discuss the energy-undetectability constraint for covert communication. This constraint is adopted in the literature and in this paper as well. {\bf Willie's objective} is to detect whether there is communication between Alice and Bob or not. When the codebook is kept secret, the only information at Willie becomes unknown random instead of being in a finite set of discrete possibilities. Thus, Willie cannot employ coherent detection. Specifically, with a secret codebook and known channel, $\mathbf{H}_w$, the problem reduces to stochastic signal detection using the received observations, ${\mathbf{Y}^n_w}={\mathbf{y}^n_w}$. Therefore, Willie employs statistical hypothesis testing, which is now optimal to detect the presence of communication. ``Alice is transmitting" is the true hypothesis, $\mathcal{H}_1$, where the probability distribution of Willie's observation is denoted by $\mathbb{P}_{\mathbf{Y}_w^n}$. ``Alice is not transmitting" is the null hypothesis, $\mathcal{H}_0$, where the probability distribution of Willie's observation is denoted by $\mathbb{P}_{\mathbf{Z}_w^n}$. The detection problem is given as follows:
\begin{equation}\begin{aligned}\label{HST}
\mathcal{H}_0:&\ \mathbf{Y}_w^n={\bf Z}_w^n, \\
\mathcal{H}_1:&\ \mathbf{Y}_w^n=\mathbf{H}_w\,{\bf X}^n+{\bf Z}_w^n.
\end{aligned} \end{equation} 

The probability of rejecting $\mathcal{H}_0$ when it is true, type \textbf{I} error, is denoted by $\alpha$, while the probability of rejecting $\mathcal{H}_1$ when it is true, type \textbf{II} error, is denoted by $\beta$. The optimal test that minimizes the sum of the detection error probabilities, i.e., minimizes $\alpha+\beta$, over all possible tests, is the likelihood ratio test \cite[Theorem 13.1.1]{lehmann2006testing}, which reduces to energy detection, and is given by:
\begin{equation}
\parallel\mathbf{y}^n_w\parallel^2 \underset{\mathcal{H}_0}{\overset{\mathcal{H}_1}{\gtreqqless}}{\tau},
\end{equation} 
where $\tau$ is the optimum threshold that minimizes $\alpha+\beta$. More precisely, with a secret codebook and known channel, $\mathbf{H}_w$, Willie cannot employ a coherent detection to decode reliably. Besides, the optimal detector is an energy detector with a threshold, $\tau$, that minimizes $\alpha+\beta$.

In contrast, {\bf Alice's objective} is to guarantee that Willie’s test is as inefficient as a blind test for which $\alpha+\beta=1$ \cite{Bloch_resolvability}. The minimum sum of the detection error probabilities by the likelihood ratio test is given by \cite{lehmann2006testing}:
\begin{equation}\begin{aligned}
\alpha + \beta = 1 - \mathcal{V}(\mathbb{P}_{\mathbf{Y}_w^n},\mathbb{P}_{\mathbf{Z}_w^n}),
\end{aligned}\end{equation}
where $\mathcal{V}(\mathbb{P}_{\mathbf{Y}_w^n},\mathbb{P}_{\mathbf{Z}_w^n})\triangleq \frac{1}{2}\parallel \mathbb{P}_{\mathbf{Y}_w^n}-\mathbb{P}_{\mathbf{Z}_w^n}{\parallel}_1$ is the variational distance between two distributions, $\mathbb{P}_{\mathbf{Y}_w^n}$ and $\mathbb{P}_{\mathbf{Z}_w^n}$. Further, to lower bound $\alpha+\beta$, we use the KL divergence to upper bound the variational distance using Pinsker’s inequality \cite{bash2013limits,bloch2019}:
\begin{equation}
\mathcal{V}(\mathbb{P}_{\mathbf{Y}_w^n},\mathbb{P}_{\mathbf{Z}_w^n}) \leq \sqrt{\frac{1}{2} \mathcal{D}(\mathbb{P}_{\mathbf{Y}_w^n} \parallel \mathbb{P}_{\mathbf{Z}_w^n})},
\end{equation} 
where 
\begin{equation}
\mathcal{D}(\mathbb{P}_{\mathbf{Y}_w^n} \parallel \mathbb{P}_{\mathbf{Z}_w^n})\triangleq\mathbb{E}_{\mathbb{P}_{\mathbf{Y}_w^n}} \left[\log f_{\mathbf{Y}_w^n}(\mathbf{Y}_w^n) - \log f_{\mathbf{Z}_w^n}(\mathbf{Z}_w^n)\right],\end{equation}
with $f_{\mathbf{Y}_w^n}(\mathbf{y}_w^n)$ and $f_{\mathbf{Z}_w^n}(\mathbf{z}_w^n)$ are the density functions of $\mathbb{P}_{\mathbf{Y}_w^n}$ and $\mathbb{P}_{\mathbf{Z}_w^n}$, respectively. Accordingly, to guarantee that Willie’s optimal detector is as inefficient as random guessing, and hence, achieve covert communication, Alice chooses the desired probability of detection, $\delta_{kl}$, to upper bound $\mathcal{D}(\mathbb{P}_{\mathbf{Y}_w^n} \parallel \mathbb{P}_{\mathbf{Z}_w^n})$. Consequently, the sum of the detection error probabilities is bounded as $\alpha + \beta \geq 1-\delta_{kl}$, regardless of the operating point on Willie’s ROC curve for any $\alpha$ \cite{bloch2019}. Hence, Alice can achieve covert communication using the optimal input covariance matrix that satisfies the following KL constraint:
\begin{equation}
\label{eq:metric}
\mathcal{D}(\mathbb{P}_{\mathbf{Y}_w^n} \parallel \mathbb{P}_{\mathbf{Z}_w^n}) \leq 2\,\delta_{kl}^2,\ \delta_{kl}\geq 0.
\end{equation}

It is worth mentioning that the KL divergence in the previous equation is more restrictive than the variational distance and than $\mathcal{D}(\mathbb{P}_{\mathbf{Z}_w^n} \parallel \mathbb{P}_{\mathbf{Y}_w^n})$ as well \cite{Gaussian_signaling}. Although both KL divergences suffice to bound the variational distance, the optimal signaling has not been derived yet when using $\mathcal{D}(\mathbb{P}_{\mathbf{Z}_w^n} \parallel \mathbb{P}_{\mathbf{Y}_w^n})$ as reported in \cite{Gaussian_signaling}. Note that covert communication under different metrics is investigated in \cite{bloch2019}.

\begin{remark}\label{remark KL necessary and suff}
With a secret codebook, although bounding the KL divergence (which is an energy-undetectability constraint) does not constrain the ROC curve tightly, minimizing $\mathcal{D}(\mathbb{P}_{\mathbf{Y}_w^n} \parallel \mathbb{P}_{\mathbf{Z}_w^n})$ is a sufficient condition to guarantee that Willie's test is ineffective \cite{bloch2019}. In other words, bounding the KL divergence by an arbitrary small $\delta_{kl}\geq 0$ bounds the total variational distance. This can be inferred from Pinsker's inequality. On the other hand, guaranteeing a zero variational distance asymptotically implies a zero KL divergence. This holds by the definitions of the total variational distance and the KL divergence. Besides, with a secret codebook, the KL asymptotic condition, $\lim_{n\rightarrow \infty}\mathcal{D}(\mathbb{P}_{\mathbf{Y}_w^n} \parallel \mathbb{P}_{\mathbf{Z}_w^n})= 0$, corresponds to the LPD notion in the literature \cite{wang2016fundamental,Bloch_resolvability,bash2013limits}. 
\end{remark}

\begin{definition}\label{def}
A $(2^{{\lceil nR\rceil}},n,\epsilon,\delta_{kl})$-code is a $(2^{{\lceil nR\rceil}},n,\epsilon)$-code that satisfies the KL constraint: $\mathcal{D}(\mathbb{P}_{\mathbf{Y}_w^n} \parallel \mathbb{P}_{\mathbf{Z}_w^n}) \leq 2\,\delta_{kl}^2$. 
\end{definition}

\begin{definition} \label{maximal_covert_coding}
Maximal covert coding rate:
\begin{equation}
{R}(n,\epsilon,\delta_{kl})\triangleq\sup \left\{R:\,\exists\ \mbox{a}\ (2^{\lceil nR\rceil},n,\epsilon,\delta_{kl})\mbox{-code}\right\}.
\end{equation}
\end{definition}


\begin{definition}\label{the_scaling_L}
The scaling, with the blocklength, of the maximum number of nats that can be transmitted covertly, while satisfying the average probability of decoding error constraint and the average power constraint, is defined as follows \cite{wang2016fundamental}:
\begin{equation}\label{eq:L}
L\triangleq \lim_{\epsilon \downarrow 0} \varliminf_{n \rightarrow \infty} \sqrt{\frac{n}{2\,\delta_{kl}^2}}{R}(n,\epsilon,\delta_{kl}).
\end{equation}
\end{definition}

\begin{problem} (With a secret codebook) Characterize the scaling, $L$, of the maximum number of covert nats over MIMO AWGN channels as defined in \eqref{eq:L}.
\end{problem}

This problem is investigated thoroughly in Section \ref{sec:covert} and results for special cases of MIMO AWGN channels are deduced therein. Note that, scaling variable $L$ is an asymptotic quantity and it does not depend on $n$. In Section \ref{sec:covert and secure}, we consider the case when there is no secret codebook. Hence, the KL constraint jointly with the IS constraint is considered. To formally state this problem, we provide related definitions and the problem statement therein. Interestingly, without the knowledge of Willie's CSI, we still can achieve a positive covert capacity by transmitting a very narrow beamwidth using massive MIMO. Thus, again, preventing the illegitimate receiver from detecting the ongoing communication. This problem is discussed in Section \ref{sec:asymptotic}.
\begin{problem} (With unknown CSI of Willie)
\begin{enumerate}
\item Can the square-root law of covert communication be overcome if the number of transmit antennas is scaled up? In particular, under what conditions can we achieve the following: $\lim_{N_a\rightarrow \infty}{R}(n,\epsilon,\delta_{kl})=\lim_{N_a\rightarrow \infty}{R}(n,\epsilon)$ for any given $\delta_{kl}\geq 0$? where ${R}(n,\epsilon)$ is the maximal coding rate (without the KL constraint).
\item What is the lower bound on the number of transmit antennas that satisfies a predefined target probability of detection?
\end{enumerate}
\end{problem}

The following table summarizes the main results and the underlying assumptions in this paper.

\begin{table*}[h]
\centering
\caption{Main results and assumptions.}
{\begin{tabular}{|c|c|c|c|c|c|c|c|c|}
 \hline
 {Covert capacity} & Quantity &{Willie's CSI} & {Codebook} & {Transmit antennas} & {Result}& {Section} \\ \hline
 {Positive/Zero} & Scaling expression &Known & {Secret} & {Arbitrary} & Theorem \ref{thm:Kn_covert}& { \ref{sec:covert}} \\ \hline
 {Positive/Zero} & Exact scaling &Known & {Secret} & {Arbitrary} & Theorem \ref{thm:Scaling}& { \ref{sec:covert}} \\ \hline
 {Positive/Zero} & Scaling expression &Known & {IS} & {Arbitrary} & Corollary \ref{thm:sec_covert}& { \ref{sec:covert and secure}} \\ \hline
{Positive/Zero} & Exact scaling &Known & {IS} & {Arbitrary} & Theorem \ref{thm:secrecy_Scaling}& { \ref{sec:covert and secure}} \\ \hline
{Positive} & Achievable rate &Known + Null-space& {Secret} & {Arbitrary} & Corollary \ref{positive_rate_cor}& { \ref{sec:covert}} \\ \hline
{Positive} & Coding rate of unit-rank&Unknown & {Secret} & {Sufficiently large} & Theorem \ref{prop:one}& { \ref{sec:asymptotic}} \\ \hline
{Positive} & Coding rate of multi-path&Unknown & {Secret} & {Sufficiently large} & Theorem \ref{prop:two}& { \ref{sec:asymptotic}} \\ \hline
 \end{tabular}}
\end{table*}
\section{Covert Communication over MIMO AWGN Channels with a Secret Codebook}\label{sec:covert}
In this section, the codebook is kept secret between Alice and Bob, and hence, the KL constraint is sufficient to achieve covert communication. Thus, covertness is investigated by analyzing the KL constraint. The following theorem extends the result in \cite{wang2016fundamental} for DMC to MIMO AWGN channels with infinite input and output alphabets. In this theorem, a single-letter characterization of the scaling is provided in terms of maximizing the capacity of MIMO AWGN channels subject to a single-letter KL divergence instead of the $n$-letter KL divergence.
\begin{theorem}
\label{thm:Kn_covert}
The scaling of the maximum number of covert nats that can be transmitted reliably over MIMO AWGN channels is given by:
\begin{equation}\label{eq:thm1}
\begin{aligned}
L= \varliminf_{n \rightarrow \infty} \sqrt{\frac{n}{2\,\delta_{kl}^2}}&\max_{\substack{\mathbf{Q}_n \succeq \mathbf{0} \\ 
\mathbf{tr}(\mathbf{Q}_n) \leq P}}
& & \mathrm{\log\abs{\frac{1}{\sigma_{\it b}^2}\,{\bf H}_{\it b}\,{\bf Q}_{\it n}\,{\bf H}_{\it b} ^\dag+{\bf I}_{N_{\it b}}}} \\
& \text{subject to:}
& &\mathcal{D}(\mathbb{P}_{\mathbf{Y}_w}\parallel\mathbb{P}_{\mathbf{Z}_w})\leq\frac{2\,\delta_{kl}^2}{n}.
\end{aligned}
\end{equation}

Moreover, the input distribution that maximizes the first-order approximation of the covert coding rate, while minimizing $\mathcal{D}(\mathbb{P}_{\mathbf{Y}_w}\parallel \mathbb{P}_{\mathbf{Z}_w})$, is the zero-mean circularly symmetric complex Gaussian distribution with a covariance matrix ${\bf Q}_n$.
\end{theorem}
 \begin{IEEEproof} The full proof can be found in Appendix \ref{App:proof_thm1}. We give a brief sketch as follows. First, we give a converse proof in terms of both the probability of decoding error and the KL divergence. We start with Fano's and data processing inequalities to derive an upper bound on the coding rate as in the proof of the channel coding theorem. Then, we derive a lower bound on the n-letter KL divergence in terms of the single-letter KL divergence similar to the approach in \cite{Hou_dependentRV} and \cite{wang2016fundamental}. Second, we provide an achievability proof by exploiting random coding according to a sequence of input distributions. However, when the achievable covert rate is zero, the achievability proof needs special treatment as pointed out in \cite{wang2016fundamental}. We generalize the achievability proof of Theorem 1 in \cite{wang2016fundamental} for DMC channels to prove the remaining part of the achievability of MIMO AWGN channels.
\end{IEEEproof}

We prove in Theorem \ref{thm:Kn_covert} that Gaussian signaling is optimal for covert communication over MIMO AWGN channels when using the KL divergence $\mathcal{D}(\mathbb{P}_{\mathbf{Y}_w^n} \parallel \mathbb{P}_{\mathbf{Z}_w^n})$, which is consistent with the recent findings in \cite{Gaussian_signaling} for SISO AWGN channels. However, Gaussian signaling is not optimal when using the KL divergence $\mathcal{D}(\mathbb{P}_{\mathbf{Z}_w^n}\parallel \mathbb{P}_{\mathbf{Y}_w^n})$ and further research is still needed to find the optimal signaling in general as reported in \cite{Gaussian_signaling}. We note that the scaling expression in Theorem \ref{thm:Kn_covert} does not hold in general when using the other KL divergence. Although this expression is achievable under a constraint on $\mathcal{D}(\mathbb{P}_{\mathbf{Z}_w^n} \parallel \mathbb{P}_{\mathbf{Y}_w^n})$, further research is needed to derive the optimal signaling that maximizes the mutual information (not necessarily the log determinant expression) using that KL divergence. Also, the scaling in Theorem \ref{thm:Kn_covert} and in other results, specifically, the square-root law, holds in general for covert communication over MIMO AWGN channels regardless of the used KL divergence. The only difference is the optimal power allocation under each metric.

\begin{remark}\label{remark nullspace}
The scaling expression in Theorem \ref{thm:Kn_covert} holds in general even when there exists a null-space of Willie's channel. However, when there exists a null-space, $L=\infty$. On the other hand, although it is not difficult to derive the achievability proof for this expression in general, it needs special treatment as pointed out in \cite{wang2016fundamental} since the achievable covert rate is zero in this case. Otherwise, in the case that a null-space exists, the KL constraint is not active and the optimal solution is reached at the maximum transmit power. Thus, $\mathcal{O}({n})$ covert nats can be transmitted reliably, and the covert capacity is an appropriate metric that is neither zero nor infinity. Therefore, the average power constraint is meaningful for MIMO AWGN channels, although it is inactive in SISO AWGN channels due to the KL constraint, which forces the average power to decay with the blocklength \cite{wang2016fundamental}. Also, Appendix \ref{App:power allocation} shows the power allocation for the optimization problem in Theorem \ref{thm:Kn_covert} in general, which includes different regimes when there exists a null-space or not.
\end{remark}

In the following, we do not impose any rank constraint on Bob's or Willie's channels. We use the generalized singular value decomposition \cite{Paige_GSVD} to give the single-letter KL divergence, which is derived in Appendix \ref{App:KL_divergence} as follows:
\begin{equation}\begin{aligned}
\mathcal{D}(\mathbb{P}_{\mathbf{Y}_w}\parallel\mathbb{P}_{\mathbf{Z}_w})=\sum_{i=1}^{N}\left[{\frac{q_i\,{\lambda}_{w,i}}{\sigma_{\it w}^2}}-\log\left(\frac{q_i\,{\lambda}_{w,i}}{\sigma_{\it w}^2}+1\right)\right],
\end{aligned} \end{equation}
where 
\begin{equation}
N=\text{rank}\left(
\begin{matrix}
{\bf H}_{\it w}\\
{\bf H}_{\it b}
\end{matrix}\right),
\end{equation}
$q_i$ is the transmit power in the $i^{th}$ eigen-direction, and ${\lambda}_{w,i}\geq 0,\ \forall i\in\left\{1,\dots,N\right\},$ are the eigenvalues of Willie's channel. We note that the optimal power allocation is a decreasing function of the blocklength. However, we omit this for notational convenience. Let $c_i$ be the normalized KL divergence in the $i^{th}$ eigen-direction, i.e., 
\begin{equation}\label{optimal_power_implicit}
c_i=\left[{\frac{q_i^*\,{\lambda}_{w,i}}{\sigma_{\it w}^2}-\log\left(\frac{q_i^*\,{\lambda}_{w,i}}{\sigma_{\it w}^2}+1\right)}\right]/{\mathcal{D}(\mathbb{P}_{\mathbf{Y}_w}\parallel\mathbb{P}_{\mathbf{Z}_w})},
\end{equation}
$\forall i\in\left\{1,\dots,N\right\}$, where $q_i^*$ is the optimal transmit power in the $i^{th}$ eigen-direction that achieves the maximum covert coding rate. Thus, $c_i$ is fixed and is independent of the blocklength $\forall i$, and $\sum_{i=1}^{N}c_i=1$. Let ${\lambda}_{b,i}\geq 0,\ \forall i\in\left\{1,\dots,N\right\},$ be the eigenvalues of Bob's channel. The following theorem extends the scaling law in \cite{wang2016fundamental} for SISO AWGN channels to MIMO AWGN channels.
\begin{theorem}
\label{thm:Scaling}
The scaling of the maximum number of nats that can be transmitted covertly and reliably over MIMO AWGN channels is given by:
\begin{equation}
L=\sum_{i=1}^{N}\frac{\sqrt{2\,c_i}\,{\sigma_w^2}\,{\lambda}_{\it b,i}}{\sigma_{\it b}^2\,{\lambda}_{w,i}},
\end{equation}
while ensuring that the illegitimate receiver's sum of the detection error probabilities is lower bounded by $(1-\delta_{kl})$.
\end{theorem}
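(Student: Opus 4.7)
The plan is to leverage Theorem~\ref{thm:Kn_covert}, which already reduces the scaling question to a single-letter optimization, together with the eigendecomposition of the single-letter KL divergence displayed immediately above the theorem. Diagonalizing in Bob's eigen-basis, the objective becomes $\log\abs{\frac{1}{\sigma_b^2}\mathbf{H}_b\mathbf{Q}\mathbf{H}_b^\dag + \mathbf{I}_{N_b}} = \sum_{i=1}^N \log\left(1 + q_i\lambda_{b,i}/\sigma_b^2\right)$, where $q_i$ is the power allocated to Bob's $i$-th eigen-direction, while Willie's single-letter KL contribution reads $\sum_{i=1}^N[q_i\tilde{\lambda}_{w,i}/\sigma_w^2 - \log(1+q_i\tilde{\lambda}_{w,i}/\sigma_w^2)]$. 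The problem is thus reduced to a finite-dimensional power allocation over $(q_1,\ldots,q_N)$ with a scalar KL budget of $2\delta_{kl}^2/n$.

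Since the maximal covert coding rate is assumed diminishing, the optimal $q_i^*$ must tend to $0$ as $n\to\infty$: otherwise the KL contribution from direction $i$ would remain bounded below, eventually violating $\mathcal{D}\leq 2\delta_{kl}^2/n$. Exploiting this smallness, I would Taylor-expand in $q_i^*$. The rate contribution becomes $q_i^*\lambda_{b,i}/\sigma_b^2$ at leading order; and since the linear term cancels inside $x-\log(1+x)$, the KL contribution is $\tfrac{1}{2}(q_i^*\tilde{\lambda}_{w,i}/\sigma_w^2)^2$ at leading order. Feeding these approximations into the definition of $c_i$ in \eqref{optimal_power_implicit} yields $\tfrac{1}{2}(q_i^*\tilde{\lambda}_{w,i}/\sigma_w^2)^2 \approx c_i\cdot 2\delta_{kl}^2/n$, and solving gives $q_i^* \approx 2\delta_{kl}\sigma_w^2\sqrt{c_i}/(\sqrt{n}\,\tilde{\lambda}_{w,i})$.

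Substituting back into the linearized sum-rate and multiplying by the normalization $\sqrt{n/(2\delta_{kl}^2)}$ from \eqref{eq:L} causes $\delta_{kl}$ and $\sqrt{n}$ to collapse, yielding exactly $L = \sum_{i=1}^N \sqrt{2c_i}\,\sigma_w^2\lambda_{b,i}/(\sigma_b^2\tilde{\lambda}_{w,i})$. Since the maximal rate itself scales as $1/\sqrt{n}$, the total number of reliably transmittable covert nats, $n\cdot R(n,\epsilon,\delta_{kl})$, scales as $\mathcal{O}(\sqrt{n})$. The lower bound $\alpha+\beta \geq 1-\delta_{kl}$ on Willie's detection-error sum is inherited directly from Theorem~\ref{thm:Kn_covert} via Pinsker's inequality applied to the chosen KL budget.

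The main obstacle will be making the Taylor argument rigorous in the $\varliminf$ sense: the dominant rate contribution per eigen-direction is $\Theta(1/\sqrt{n})$, so after multiplication by $\sqrt{n}$ it contributes $\Theta(1)$, and I must verify that the discarded remainders $O((q_i^*)^2)$ in the rate and $O((q_i^*)^3)$ in the KL are small enough not to perturb the limit. This should follow because both are of order $1/n$ or smaller while the normalization is only $\sqrt{n}$, so their scaled contributions vanish. A secondary subtlety is that $c_i$ is implicitly defined through the optimizer $q_i^*$, so one must check that the KKT conditions of the linearized program are consistent with the fixed-point characterization in \eqref{optimal_power_implicit}; this becomes transparent once the linearized problem is recognized as a standard KL-budget-constrained allocation whose unique optimum both determines and is parameterized by the shares $\{c_i\}$.
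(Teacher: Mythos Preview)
Your proposal is correct and follows essentially the same route as the paper: both start from Theorem~\ref{thm:Kn_covert}, diagonalize in Bob's eigen-basis, use the small-$q_i$ approximations $x-\log(1+x)\approx x^2/2$ for the KL and $\log(1+x)\approx x$ for the rate, solve for $q_i^*\sim (2\sigma_w^2/\tilde{\lambda}_{w,i})\sqrt{c_i\delta_{kl}^2/n}$, and substitute back. The only stylistic difference is that the paper replaces your Taylor-plus-remainder argument with explicit two-sided logarithm inequalities ($\log(1+x)\geq x-x^2/2$ and $\log(1+x)\leq x$ for the converse, $\log(1+x)\geq 2x/(2+x)$ for the achievability), which packages the converse and achievability as clean nonasymptotic bounds that coincide in the $\varliminf$---this sidesteps the remainder bookkeeping you flagged as the main obstacle.
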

\begin{IEEEproof}The full proof can be found in Appendix \ref{App:proof_thm2}. We give a brief sketch as follows. We give an upper bound on the scaling by exploiting a lower bound on the KL divergence and with the aid of the converse result of Theorem \ref{thm:Kn_covert}. Similarly, using the achievability of Theorem \ref{thm:Kn_covert}, we give a lower bound on the scaling by using an upper bound on the KL divergence. In the limit, both bounds coincide to the same constant.
\end{IEEEproof}

It is worth mentioning that Theorem \ref{thm:Scaling} extends the result in \cite{wang2016fundamental} for SISO AWGN channels to MIMO AWGN channels and both results coincide when $\sigma_b=\sigma_w$, $\lambda_{b,i}=\lambda_{w,i}$, $\forall i$, $N_a=N_b=1$ and using a real channel input. Then, the scaling is $L=1$ $\sqrt{\mbox{nat}}$. Without the existence of a null-space, the covert capacity is zero but the covert information is $\mathcal{O}(\sqrt{n})$ nats. Otherwise, $L=\infty$. Besides, Theorem \ref{thm:Scaling} shows the parameters that affect the covert coding rate such as: 1) the available eigen-directions between Alice and Bob, 2) Bob and Willie's channel power gain and distances from Alice, 3) the noise power density at Bob and Willie, and 4) the fraction of transmitted power in each eigen-direction.

\subsection{Scaling Laws of Specific MIMO AWGN Channels}
\subsubsection{{\bf Well-Conditioned MIMO AWGN Channels}}
In a rich scattering environment, the channel matrix is well-conditioned where the eigenvalues are independent and approximately identical, and hence, equal power allocation is optimal. The scaling, in this case, is given in the following corollary.
\begin{cor}
The scaling of the maximum number of covert nats that can be transmitted reliably over well-conditioned MIMO AWGN channels is given by:
\begin{equation}
L=\frac{\sqrt{2\,N}\,{\sigma_w^2}\,{\lambda}_{\it b}}{\sigma_{\it b}^2\,{\lambda}_{w}},
\end{equation}
while ensuring that the illegitimate receiver's sum of the detection error probabilities is lower bounded by $(1-\delta_{kl})$.
\end{cor}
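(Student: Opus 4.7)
The plan is to derive the corollary directly from Theorem \ref{thm:Scaling} by specializing the eigenvalue structure and the optimal power allocation to the well-conditioned regime. Theorem \ref{thm:Scaling} gives
\[
L = \sum_{i=1}^{N} \frac{\sqrt{2\,c_i}\,\sigma_w^2\,\lambda_{b,i}}{\sigma_b^2\,\tilde{\lambda}_{w,i}},
\]
so the only two quantities that depend on the channel realization are the eigenvalue pairs $(\lambda_{b,i},\tilde{\lambda}_{w,i})$ and the normalized per-direction KL weights $c_i$ defined in \eqref{optimal_power_implicit}. I would therefore handle these two pieces in sequence.

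First, I would invoke the well-conditioned assumption: in a rich-scattering environment the singular values of $\mathbf{H}_b$ are (approximately) identical, so $\lambda_{b,i}=\lambda_b$ for every $i$, and by the same isotropy argument the rotated Willie eigenvalues satisfy $\tilde{\lambda}_{w,i}=\tilde{\lambda}_w$ for every $i$. With both ratios $\lambda_{b,i}/\tilde{\lambda}_{w,i}$ equal across eigen-directions, the optimization in Theorem \ref{thm:Kn_covert} becomes symmetric in the per-direction powers $q_i$, subject to the joint trace and single-letter KL constraints. Standard Lagrangian/concavity arguments (the objective $\log\det(\cdot)$ is Schur-concave and the constraint set is symmetric) then force the optimizer to be $q_i^\star = q^\star$ for all $i$. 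This is the step I would be most careful with: I would make explicit that symmetry of both constraints plus concavity of the covert rate yields uniform power, since this is the one modeling-level claim the corollary implicitly relies on.

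Second, with $q_i^\star$ and $\tilde{\lambda}_{w,i}$ uniform in $i$, the numerator of \eqref{optimal_power_implicit} is the same for every eigen-direction; since the $c_i$ sum to one by construction, this forces $c_i = 1/N$ for all $i\in\{1,\dots,N\}$. Substituting $\lambda_{b,i}=\lambda_b$, $\tilde{\lambda}_{w,i}=\tilde{\lambda}_w$, and $c_i = 1/N$ into the expression from Theorem \ref{thm:Scaling} gives
\[
L = \sum_{i=1}^{N}\frac{\sqrt{2/N}\,\sigma_w^2\,\lambda_b}{\sigma_b^2\,\tilde{\lambda}_w} = N \cdot \frac{\sqrt{2/N}\,\sigma_w^2\,\lambda_b}{\sigma_b^2\,\tilde{\lambda}_w} = \frac{\sqrt{2N}\,\sigma_w^2\,\lambda_b}{\sigma_b^2\,\tilde{\lambda}_w},
\]
which is the claimed scaling. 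The detection-error lower bound $\alpha+\beta \geq 1-\delta_{kl}$ is inherited verbatim from Theorem \ref{thm:Scaling} since the KL constraint is unchanged, so no separate argument is needed there. Overall the corollary is essentially an exercise in exploiting the symmetry afforded by the well-conditioned assumption; the only non-mechanical step is justifying the uniformity of $q_i^\star$, which follows from the concavity-plus-symmetry argument above.
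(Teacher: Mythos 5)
Your derivation is correct and is essentially the same route the paper takes: the paper's own justification for this corollary is the one-sentence observation preceding it that in the well-conditioned regime the eigenvalues are approximately identical and hence equal power allocation is optimal, which gives $c_i = 1/N$, $\lambda_{b,i}=\lambda_b$, $\tilde{\lambda}_{w,i}=\tilde{\lambda}_w$, and then $\sum_{i=1}^N \sqrt{2/N} = \sqrt{2N}$ as you compute. Your symmetry-plus-concavity argument for uniform $q_i^\star$ is a sound and slightly more explicit version of the step the paper leaves informal.
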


\subsubsection{{\bf Bounded Spectral Norm of Willie's CSI}} (A compound channel setting) We consider a special case of the compound channel setting as in \cite[Sec. V]{schaefer2015secrecy} such that Bob's CSI is known and the CSI of Willie is unknown but lies in the set of channels with a bounded spectral norm,
\begin{equation}\label{spectral norm}
\mathcal{S}_w = \left\{\mathbf{H}_w: \frac{\norm{\mathbf{H}_w\,\mathbf{H}_w^{ \dag}}_{op}}{\sigma_{\it w}^2} \leq \hat{\lambda}_w\right\},
\end{equation}
where $\norm{\mathbf{H}_w\,\mathbf{H}_w^{ \dag}}_{op}$ represents the largest possible power gain of Willie's channel, i.e., the worst-case Willie's channel is isotropic. Hence, the set $\mathcal{S}_w$ incorporates all possible matrices, $\mathbf{H}_w$, such that $\frac{\mathbf{H}_w\,\mathbf{H}_w^{ \dag}}{\sigma_{\it w}^2}\leq\hat{\lambda}_w\,\mathbf{I}_{N_a}$, which implies that Willie's channel has limited capabilities such as a low receiver sensitivity or Willie cannot approach a certain area around the transmitter. Also, such a compound channel setting is a scenario where Alice is trying to hide the presence of the communication session in the existence of multiple adversaries. It is worth noting that the secrecy capacity of such a class of channels, i.e., the secrecy capacity of the compound channel, is the worst-case secrecy capacity \cite[Theorem 4]{schaefer2015secrecy}. Similarly, we show in Appendix \ref{App:compound} that the scaling of covert information of the compound MIMO AWGN channels equals the worst-case scaling over such class of channels using results in \cite{schaefer2015secrecy} and \cite{compound_results}. This means that there exists a code that achieves this scaling over the whole class. The following corollary gives the scaling of covert information of the compound MIMO AWGN channels.
\begin{cor}\label{compound_channel_1}
The scaling of the maximum number of covert nats that can be transmitted reliably over the compound MIMO AWGN channels with a known legitimate receiver's CSI is given as follows:
\begin{equation} 
L=\sum_{i=1}^{N}\frac{\sqrt{2\,c_i}\,\lambda_{\it b,i}}{\sigma_{\it b}^2\,\hat{\lambda}_w},
\end{equation}
while ensuring that the illegitimate receiver's sum of the detection error probabilities is lower bounded by $(1-\delta_{kl})$. Moreover, when the legitimate receiver's channel is well-conditioned, the scaling is given as:
\begin{equation} 
L=\frac{\sqrt{2\,N}\,{\lambda}_{\it b}}{\sigma_{\it b}^2\,\hat{\lambda}_w}.
\end{equation}
\end{cor}
\subsubsection{{\bf Unit-Rank MIMO AWGN Channels}} In this subsection, unit-rank MIMO channels are analyzed using the physical modeling of MIMO channels in \cite{tse2005fundamentals}. Without loss of generality, the focus is on evenly-spaced uniform linear antenna arrays. We consider the line-of-sight (LoS) MIMO channels and define $\mathbf{u}_a(\Omega_a)$ as the unit spatial transmit signature in the directional cosine, $\Omega_a\triangleq \cos\phi_a$, where $\phi_a$ is the angle of departure of the LoS from the transmit antenna array. Similarly, we define $\mathbf{u}_r(\Omega_r)$ as the unit spatial receive signature in the directional cosine, $\Omega_r\triangleq \cos\phi_r$, where $\phi_r$ is the angle of incidence of the LoS onto the receive antenna array where $r\in\left\{b,w\right\}$. 

Under the assumption of smaller array dimension than the distance between the transmitter and the receiver and equal LoS path attenuations, $\xi_r$, for all transmit-receive antenna pairs, the channel gain matrix is given by:
\begin{equation}
{\bf H}_r =\sqrt{\lambda_r}\,\exp{\left(-j2\pi d\right)}\,\mathbf{u}_r(\Omega_r)\,\mathbf{u}_a ^\dag(\Omega_a), 
\end{equation}
where $\lambda_r$ is a unique non-zero singular value such that $\lambda_r=\xi_r^2\,N_a\,N_r$, $N_a$ and $N_r$ are the number of transmit and receive antennas, respectively, $r\in\left\{b,w\right\}$ and $d$ is the distance between the first transmit and the first receive antennas normalized to the carrier wavelength. Consequently, the channel gain matrix is given by:
\begin{equation}
{\bf H}^\dag_r\,{\bf H}_r =\lambda_r\,\mathbf{u}_a(\Omega_a)\,\mathbf{u}_a ^\dag(\Omega_a).
\end{equation}

In addition, the angle, $\theta$, between any two different spatial transmit signatures is related to the directional transmit cosines as follows:
\begin{equation}
\abs{\cos\theta}=\abs{\mathbf{u}_a^\dag(\Omega_1)\,\mathbf{u}_a(\Omega_2)}=\abs{\frac{\sin(\pi\,L_a\,\Omega_a)}{N_a\sin(\pi\,\,L_a\,\Omega_a/N_a)}},
\end{equation}
where $\Omega_a\triangleq\Omega_2-\Omega_1$, $L_a\triangleq N_a\,\Delta_a$ is the length of the transmit antenna array normalized with respect to the carrier wavelength, and $\Delta_a$ is the normalized transmit antenna separation. Moreover, the quantity $\abs{\cos\theta}=\abs{f(\Omega_a)}$ is a periodic function with a period $N_a/L_a$ and nulls at $\Omega_a=k/L_a$, $k=1,\dots,N_a-1$. Fig. \ref{fig:pattern} depicts the periodic function $\abs{f(\Omega_2-\Omega_1)}$ for a fixed transmit direction, $\Omega_1$, different values of the number of transmit antennas, and the normalized array length. 
\begin{figure}[h]
 \centering
 \includegraphics[scale=0.3]{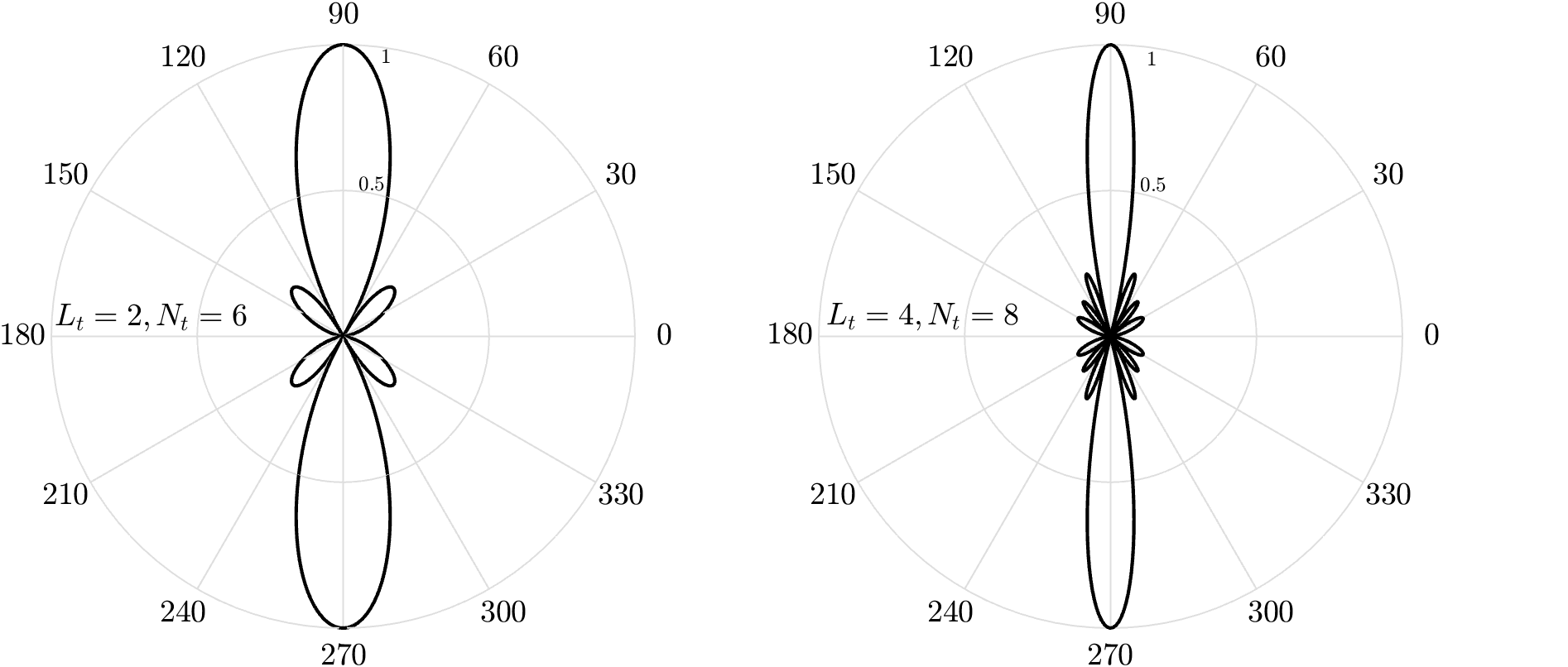}
 \caption{Transmit beam pattern aimed at $90^{\circ}$ for different values of the number of transmit antennas, $N_a$, and the normalized array length, $L_a$.}
 \label{fig:pattern}
\end{figure}

Alice tries to find the optimal spatial transmit signature in the directional cosine, $\Omega_a^*$, that maximizes the achievable covert coding rate. The optimal directional cosine, $\Omega_a^*$, can be characterized as follows:
\begin{equation}
\begin{aligned}\label{optimal_transmit}
\Omega_a^*=&\argmax_{-1\leq\Omega_a\leq 1}
& & \mathrm{\abs{f(\Omega_a-\Omega_b)}^2} \\
& \text{subject to:}
& &{\frac{P\,\tilde{\lambda}_{w}}{\sigma_w^2}}-\log{\left(\frac{P\,\tilde{\lambda}_{w}}{\sigma_w^2}+1\right)}\leq\frac{2\delta_{kl}^2}{n},
\end{aligned}
\end{equation}
where $\tilde{\lambda}_b={\lambda}_b\abs{f(\Omega_a-\Omega_b)}^2$ and $\tilde{\lambda}_w={\lambda}_w\abs{f(\Omega_a-\Omega_w)}^2$ are the eigenvalues of Bob's and Willie's channels, respectively, after projecting on the directional cosine, $\Omega_a$. Therefore, the following corollary gives the scaling for unit-rank MIMO AWGN channels.
\begin{cor}
The scaling of the maximum number of nats that can be transmitted covertly and reliably over unit-rank MIMO AWGN channels is given by:
\begin{equation}
L=\frac{\sqrt{2}\,{\sigma_w^2}\,\tilde{\lambda}_{\it b}}{\sigma_{\it b}^2\,\tilde{\lambda}_{w}}=\frac{\sqrt{2}\,{\sigma_w^2}\,{\xi}^2_{\it b}\,N_b\,\abs{f(\Omega_a^*-\Omega_b)}^2}{\sigma_{\it b}^2\,\xi^2_{w}\,N_w\abs{f(\Omega_a^*-\Omega_w)}^2},
\end{equation}
where $\xi_b$ and $\xi_w$ are the LoS path attenuations of the legitimate and the illegitimate receivers' channels, respectively, while ensuring that the illegitimate receiver's sum of the detection error probabilities is lower bounded by $(1-\delta_{kl})$.
\end{cor}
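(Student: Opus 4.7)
The plan is to specialize Theorem~\ref{thm:Scaling} to the unit-rank case and then translate the eigenvalues appearing there into the physical beam-pattern quantities $|f(\cdot)|^2$. First, I would observe that in the LoS model under consideration, both ${\bf H}_b$ and ${\bf H}_w$ are rank-one matrices, so only one eigen-direction can carry signal energy that actually reaches Bob: allocating power to any direction orthogonal to the unique non-zero right singular vector of ${\bf H}_b$ is wasted, since $\lambda_{b,i}=0$ for $i\geq 2$. Hence, the sum in Theorem~\ref{thm:Scaling} collapses to the single term $i=1$, and since $\sum_i c_i=1$ with only one non-vanishing term we must have $c_1=1$. This already gives the reduction $L=\sqrt{2}\,\sigma_w^2\,\tilde{\lambda}_b/(\sigma_b^2\,\tilde{\lambda}_w)$.

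Second, I would make explicit how $\tilde{\lambda}_b$ and $\tilde{\lambda}_w$ depend on Alice's choice of transmit direction. Using the physical decomposition ${\bf H}_r^\dag{\bf H}_r=\lambda_r\,\mathbf{u}_a(\Omega_r)\,\mathbf{u}_a^\dag(\Omega_r)$ with $\lambda_r=\xi_r^2\,N_a\,N_r$, the eigenvalue of the effective channel after projecting the input onto the unit spatial signature $\mathbf{u}_a(\Omega_a)$ is
\begin{equation*}
\tilde{\lambda}_r=\mathbf{u}_a^\dag(\Omega_a)\,{\bf H}_r^\dag\,{\bf H}_r\,\mathbf{u}_a(\Omega_a)=\lambda_r\,|\mathbf{u}_a^\dag(\Omega_a)\,\mathbf{u}_a(\Omega_r)|^2=\lambda_r\,|f(\Omega_a-\Omega_r)|^2,
\end{equation*}
for $r\in\{b,w\}$. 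Substituting $\lambda_r=\xi_r^2\,N_a\,N_r$ and cancelling the common factor $N_a$ in the ratio $\tilde{\lambda}_b/\tilde{\lambda}_w$ yields the claimed expression.

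Third, I would argue that the direction to plug in is exactly the maximizer $\Omega_a^*$ of the program in \eqref{optimal_transmit}. Because $\mathbf{H}_b$ is rank one, the maximization in Theorem~\ref{thm:Kn_covert} is without loss of optimality restricted to rank-one covariances of the form $\mathbf{Q}=P\,\mathbf{u}_a(\Omega_a)\,\mathbf{u}_a^\dag(\Omega_a)$, so the only remaining degrees of freedom are the scalar power $P$ and the direction $\Omega_a$. In the diminishing-rate regime $P$ must scale so that the single-letter KL divergence exactly meets the constraint $2\,\delta_{kl}^2/n$, leaving $\Omega_a$ as the only free variable; maximizing Bob's rate $\log(1+P\tilde{\lambda}_b/\sigma_b^2)$ under this equality KL constraint reduces to maximizing $|f(\Omega_a-\Omega_b)|^2$ subject to the constraint involving $\tilde{\lambda}_w$, which is exactly \eqref{optimal_transmit}. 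Plugging $\Omega_a^*$ into the previous step yields the stated $L$, while the lower bound on $\alpha+\beta$ is inherited verbatim from Theorem~\ref{thm:Scaling}.

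The main technical obstacle will be the third step: verifying that restricting to rank-one covariances aligned with a spatial signature $\mathbf{u}_a(\Omega_a)$ is without loss of optimality, and verifying that the KL constraint from Theorem~\ref{thm:Kn_covert} reduces cleanly to the scalar form used in \eqref{optimal_transmit} once $\mathbf{Q}$ is parameterized by $(P,\Omega_a)$. The other two steps are essentially bookkeeping once the rank-one structure is exploited, and the $\mathcal{O}(\sqrt{n})$ order follows immediately from Theorem~\ref{thm:Scaling} because the unit-rank model is just the $N=1$ instance of that theorem.
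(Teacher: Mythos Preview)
Your proposal is correct and follows essentially the same route as the paper. The paper does not give a separate proof for this corollary: it states it as an immediate specialization of Theorem~\ref{thm:Scaling} to the rank-one setting, having already defined $\tilde{\lambda}_b=\lambda_b|f(\Omega_a-\Omega_b)|^2$ and $\tilde{\lambda}_w=\lambda_w|f(\Omega_a-\Omega_w)|^2$ and set up the optimization \eqref{optimal_transmit} in the text immediately preceding the corollary; your three steps simply make that implicit argument explicit, and the extra justification you flag in step three (that a rank-one $\mathbf{Q}$ aligned with a spatial signature loses nothing, since power outside Bob's unique singular direction contributes zero to his rate while only increasing the KL divergence at Willie) is a detail the paper takes for granted.
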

\begin{remark}
The scaling law of SISO AWGN channels, with isotropic transmit and receive antennas, is given by $L=\frac{\sqrt{2}\,{\sigma_w^2}\,{\xi}^2_{\it b}}{\sigma_{\it b}^2\,\xi^2_{w}}$, which coincides with the result in \cite{wang2016fundamental} when $\sigma_b=\sigma_w$, $\xi_{b}=\xi_{w}$ and the channel input is real.
\end{remark}

Alice can achieve a non-zero covert rate by transmitting in the spatial transmit signature in the null directional cosine of Willie's channel, i.e., $\Omega_n=\Omega_w+k/L_a$, $k=1,\dots,N_a-1$. In this case, the KL constraint is not active and $\delta_{kl}=0$. Hence, the gradient of the Lagrange function of the covert coding rate, given by \eqref{eq_transmit_signature} in Appendix \ref{App:proof_prop1}, can be rewritten as:
\begin{equation}
\left[q^*+\frac{\sigma_{\it b}^2}{\tilde{\lambda}_b}\right]^{-1}=\mu,\ \ \Omega_b\neq\Omega_w+\frac{k}{L_a}\mod{\frac{1}{\Delta}},
\end{equation}
where $\tilde{\lambda}_b={\lambda}_b\abs{f(\Omega_a)}^2$ and $\Omega_a=\Omega_n-\Omega_b=\Omega_w-\Omega_b+k/L_a$, $k=1,\dots,N_a-1$. The following corollary gives the achievable positive covert rate.
\begin{cor}\label{positive_rate_cor}
A positive covert rate can be achieved over unit-rank MIMO AWGN channels by transmitting in the spatial transmit signature in the null directional cosine of the illegitimate receiver's channel, i.e., $\Omega_n=\Omega_w+k/L_a$, $k=1,\dots,N_a-1$, namely, null steering. Hence, the following covert rate is achievable:
\begin{equation}
\begin{aligned}
R_{c}&=\lim_{n\rightarrow \infty}{R}_{a}(n,\epsilon,\delta_{kl})\\
&\ =\log\left(\frac{P{\lambda}_b\abs{f(\Omega_w-\Omega_b+k/L_a)}^2}{\sigma_{\it b}^2}+1\right),
\end{aligned}
\end{equation}
where ${R}_a(n,\epsilon,\delta_{kl})$ is an achievable non-diminishing covert coding rate and the choice of $k$ that maximizes $R_{c}$ is given by: 
\begin{equation}\label{choice_of_k}
k=\argmax_{1\leq i\leq N_a-1} {\abs{f(\Omega_w-\Omega_b+\frac{i}{L_a})}}.
\end{equation}
\end{cor}

Fig. \ref{fig:steering} shows an example of a transmit beam that its null is steered in the direction of Willie, while achieving a positive covert rate. Similarly, Alice can achieve a positive covert coding rate by utilizing the null-space precoding if a null-space between Alice and Willie exists. Thus, the loss in the capacity of unit-rank MIMO AWGN channels due to steering is the difference between the capacity and the positive covert rate achieved in Corollary \ref{positive_rate_cor}.
\begin{figure}[h]
 \centering\includegraphics[scale=0.45]{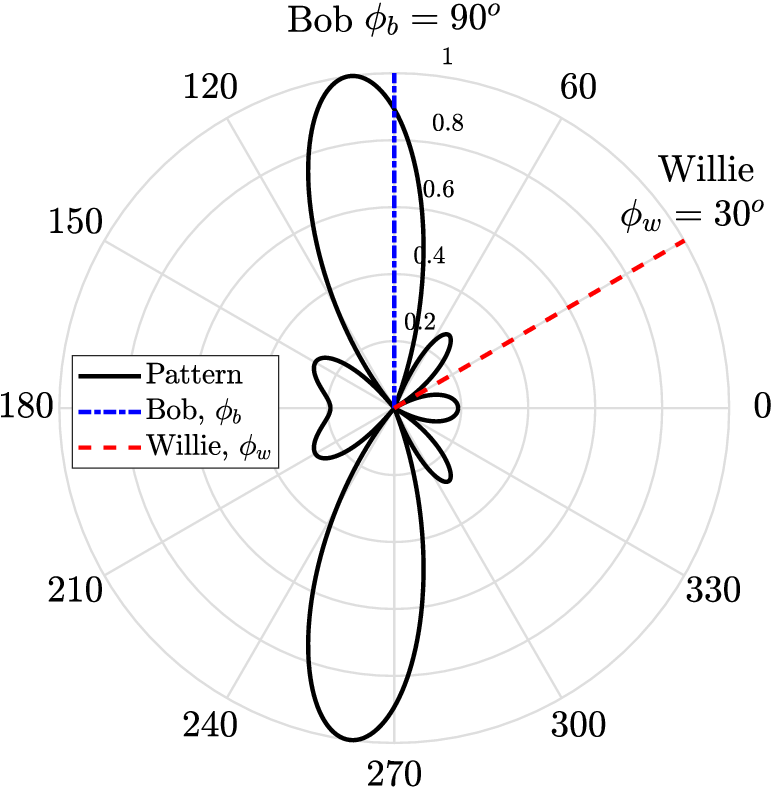}
 \caption{Steered transmit beam for $N_a=4$, and $L_a=2$.}
 \label{fig:steering}
\end{figure}

\section{Covert Communication without a Secret Codebook:\\ Joint Information Secrecy and Energy-Undetectability}\label{sec:covert and secure}
Wireless secrecy has been an active topic of research for more than a decade now \cite{Hassibi_secrecy_capacity,Liang_book}. One may be inclined to think that, energy-undetectable communication is sufficient to achieve IS automatically. Indeed, if Willie cannot detect the presence of a session in the first place, how can Willie decode the message? Furthermore, the capacity of energy-undetectable communication is zero in many cases, while existing results show that the IS can be achieved with a positive capacity. 

We will show that, surprisingly, this insight is not accurate. The major difference here is the assumption of the knowledge of the codebook. If one assumes the knowledge of the codebook at Willie, the optimal detector for the session between Alice and Bob is no longer the energy detector. Indeed, given the codebook, the transmitted signal belongs to a finite set of discrete possibilities. Thus, the optimal test is coherent detection, which reduces to a linear correlator under Gaussian noise, rather than energy detection. Therefore, an energy detector failing to identify the active session does not necessarily imply covertness. We will show that there are scenarios in which a session between Alice and Bob is energy-undetectable, yet IS condition is not satisfied, i.e., Willie can decode a few messages even at an arbitrarily low amount of received energy. Thus, to transmit a covert message, one should guarantee both confidentiality and energy-undetectability. Fig. \ref{Fig_secrecy_vs_covert} depicts the relation between communication with IS, covert communication, and energy-undetectable communication.
\begin{figure}[h]
 \centering
 \includegraphics[scale=.5]{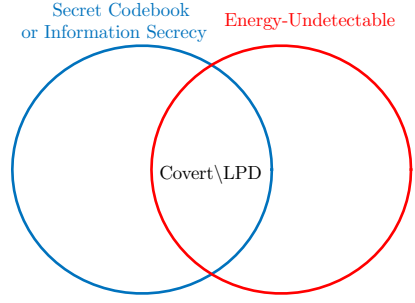}
 \caption{Covert communication, communication with IS, and energy-undetectable communication.\label{Fig_secrecy_vs_covert}}
\end{figure}

\begin{definition}\label{def_noisier}
The illegitimate receiver's channel is called noisier (less capable) than the legitimate receiver's channel in the $i^{th}$ eigen-direction, $1\leq i\leq N$, when
\begin{equation}
\frac{{\lambda}_{\it b,i}}{\sigma_{\it b}^2}>\frac{{\lambda}_{\it w,i}}{\sigma_{\it w}^2}.
\end{equation}
\end{definition}

In this section, the codebook is not kept secret but Willie's channel is noisier than Bob's channel as in \cite{Bloch_resolvability,constant_rate}. Although we do not impose such a constraint, noisier Willie's channel is required to achieve IS. To state the main problem of this section, we start with a modified version of Definition \ref{def}, which is stated as follows:
\begin{definition}\label{code_IS}
A $(2^{{\lceil nR\rceil}},n,\epsilon,\delta_{kl},\delta_{s})$-code consists of:
\begin{enumerate}
\item A uniformly distributed message set, $\mathcal{M}_n$.
\item A stochastic encoder, $f:\mathcal{M}_n\times\mathcal{T}_n\mapsto \mathbb{C}^{n}$, that assigns a codeword ${x}^n(m,t)$ to each message\footnote{The randomization variable $T$ is independent of the message $M$.} $m\in\mathcal{M}_n$ and $t\in\mathcal{T}_n$ under an average transmit power constraint $P$ on every codeword, ${\bf x}^n(m,t)\triangleq\left\{{\bf x}_1(m,t),\ \dots,{\bf x}_n(m,t)\right\}$, as follows: $\sum_{i=1}^n{\bf x}^{\dag}_i(m,t)\,{\bf x}_i(m,t)\leq nP$.
\item The decoder, $\phi$.
\item The encoder-decoder pair satisfies:\\
a) Average probability of decoding error constraint: $P_e^{(n)}(\mathcal{C}_n)\leq \epsilon$.\\
b) KL constraint: $\mathcal{D}(\mathbb{P}_{\mathbf{Y}_w^n} \parallel \mathbb{P}_{\mathbf{Z}_w^n}) \leq 2\,\delta_{kl}^2$. \\
c) IS constraint (one of the following):
\begin{equation}\label{eq:constraints}
\begin{array}{lrl}
\mbox{Weak secrecy:} &\mathbb{S}=\frac{1}{n}\mathrm{I}(M;{\bf Y}_w^n)&\leq \delta_{s},\\
\mbox{Strong secrecy:}&\mathbb{S}=\mathrm{I}(M;{\bf Y}_w^n)&\leq \delta_{s},\\
\mbox{Effective secrecy:}&\mathbb{S}=\mathcal{D}(\mathbb{P}_{M,\mathbf{Y}_w^n} \parallel \mathbb{P}_M\mathbb{P}_{\mathbf{Z}_w^n})&\leq \delta_{s},
\end{array}
\end{equation}
where $\mathbb{P}_M$ is the probability distribution of the message, $M$, while $\mathbb{P}_{M,\mathbf{Y}_w^n}$ is the joint probability distribution of the message and Willie's observation and $\mathcal{D}(\mathbb{P}_{M,\mathbf{Y}_w^n} \parallel \mathbb{P}_M\,\mathbb{P}_{\mathbf{Z}_w^n})=\mathrm{I}(M;{\bf Y}_w^n)+\mathcal{D}(\mathbb{P}_{\mathbf{Y}_w^n} \parallel \mathbb{P}_{\mathbf{Z}_w^n})$.
\end{enumerate} 
\end{definition}

The asymptotic condition, $\lim_{n\rightarrow \infty}\mathbb{S}= 0$, corresponds to different IS notions that exist in the literature. The distribution of the randomization variable $T$ is known to all parties and the realizations are kept secret from both Bob and Willie. However, with a secret codebook, as in Section \ref{sec:covert}, the realizations of $T$ are known to Bob and are kept secret from Willie. It should be clear that communication without a secret codebook but with IS is achieved at the expense of a lower secrecy rate than when a codebook is kept secret \cite{Liang_book}.
\begin{definition}
Maximal covert coding rate with IS:
\begin{equation}
{R}(n,\epsilon,\delta_{kl},\delta_{s})\triangleq\sup \left\{R:\,\exists\ \mbox{a}\ (2^{\lceil nR\rceil},n,\epsilon,\delta_{kl},\delta_{s})\mbox{-code}\right\}.
\end{equation}
\end{definition}
\begin{definition}
The scaling, with the blocklength, of the maximum number of nats that can be transmitted covertly with IS, while satisfying the average probability of decoding error constraint and the average power constraint, is defined as follows:
\begin{equation}
\label{eq:L_s}
L_{S}\triangleq \lim_{\epsilon \downarrow 0} \varliminf_{n \rightarrow \infty} \sqrt{\frac{n}{2\,\delta_{kl}^2}}{R}(n,\epsilon,\delta_{kl},\delta_{s}).
\end{equation}
\end{definition}
\begin{problem} (Without a secret codebook) Characterize the scaling, $L_S$, of the maximum number of covert nats with IS over MIMO AWGN channels as defined in \eqref{eq:L_s}.
\end{problem}

To investigate the scaling law without a secret codebook, the secrecy capacity should be considered first. The secrecy capacity is the maximum achievable rate such that $R=R_e$ \cite{Liang_book} where $R_e\leq\liminf_{n\rightarrow \infty}{\frac{1}{n}H(M|{\bf Y}^n_w)}$, denoted as the equivocation rate, is Willie’s uncertainty about the message, $M$, given the channels' outputs ${\bf Y}^n_w$, i.e., the secrecy level at Willie. Hence, the secrecy capacity is given by:
\begin{equation}
C_s=\max_{(R,R)\in \mathcal{C_E}}R,
\end{equation}
where $(R,R)$ is the rate–equivocation pair and $\mathcal{C_E}$ is the capacity-equivocation region. Moreover, the secrecy capacity of MIMO wiretap channels is defined as follows \cite{Hassibi_secrecy_capacity}:
\begin{equation}\label{secrecy_capacity}
C_s=\max_{\substack{\mathbf{Q}_n \succeq \mathbf{0} \\ \mathbf{tr}(\mathbf{Q}_n) \leq P}}\,\left[R_b({\bf Q}_n)-R_w({\bf Q}_n)\right]^+,
\end{equation}
where $R_u({\bf Q}_n)=\log\abs{\frac{1}{\sigma_{u}^2}\,{\bf H}_{u}\,{\bf Q}_n\,{\bf H}_{u} ^\dag+{\bf I}_{N_{u}}}$ and $u\in\left\{b,w\right\}$. Using a similar approach that is adopted in the previous section, the following corollary gives the scaling law of covert communication without a secret codebook but with IS over MIMO AWGN channels. We provide a sketch for the achievability proof in Appendix \ref{App:proof_corollary5}.
\begin{cor}
\label{thm:sec_covert}
The scaling of the maximum number of covert nats that can be transmitted reliably with IS over MIMO AWGN channels is given by:
\begin{equation}\label{optimization_secrecy}
\begin{aligned}
L_S= \varliminf_{n \rightarrow \infty} \sqrt{\frac{n}{2\,\delta_{kl}^2}}&\max_{\substack{\mathbf{Q}_n \succeq \mathbf{0} \\ 
\mathbf{tr}(\mathbf{Q}_n) \leq P}}
& & {\left[R_b({\bf Q}_n)-R_w({\bf Q}_n)\right]^+} \\
& \text{subject to:}
& &\mathcal{D}(\mathbb{P}_{\mathbf{Y}_w}\parallel\mathbb{P}_{\mathbf{Z}_w})\leq\frac{2\,\delta_{kl}^2}{n}.
\end{aligned}
\end{equation}

Moreover, the input distribution that maximizes the first-order approximation of the covert coding rate with IS, while minimizing $\mathcal{D}(\mathbb{P}_{\mathbf{Y}_w}\parallel \mathbb{P}_{\mathbf{Z}_w})$, is the zero-mean circularly symmetric complex Gaussian distribution with a covariance matrix $\mathbf{Q}_n$.
\end{cor}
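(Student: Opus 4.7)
The plan is to mirror the proof of Theorem~\ref{thm:Kn_covert}, replacing the Shannon expression $\log\abs{\frac{1}{\sigma_{\it b}^2}\mathbf{H}_b\mathbf{Q}\mathbf{H}_b^\dag+\mathbf{I}_{N_b}}$ with the MIMO wiretap secrecy-capacity expression $[R_b(\mathbf{Q})-R_w(\mathbf{Q})]^+$ from \eqref{secrecy_capacity}. Two claims must be verified: the value of $L_S$ stated in \eqref{optimization_secrecy}, and the optimality of a zero-mean circularly symmetric complex Gaussian input. I will handle the converse and achievability in turn, then check that the Gaussian choice is simultaneously optimal for the secrecy rate and for the KL minimization.

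\emph{Converse.} Starting from any sequence of $(2^{\lceil nR\rceil},n,\epsilon,\delta_{kl},\delta_s)$ codes, I will combine Fano's inequality with the standard MIMO wiretap converse \cite{Hassibi_secrecy_capacity,Liang_book} to obtain $R \leq \tfrac{1}{n}\sum_{i=1}^n[R_b(\mathbf{Q}_i)-R_w(\mathbf{Q}_i)]^+ + o(1)$, where $\mathbf{Q}_i=\mathbb{E}[\mathbf{X}_i\mathbf{X}_i^\dag]$. Concavity of the per-letter secrecy rate in $\mathbf{Q}$ together with Jensen's inequality lets me replace $\{\mathbf{Q}_i\}$ by the averaged covariance $\overline{\mathbf{Q}}=\tfrac{1}{n}\sum_i\mathbf{Q}_i$, which still satisfies the trace constraint. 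The single-letterization of the $n$-letter KL divergence used in the proof of Theorem~\ref{thm:Kn_covert} (following \cite{Hou_dependentRV,wang2016fundamental}) then converts the energy-undetectability constraint into $\mathcal{D}(\mathbb{P}_{\mathbf{Y}_w}\parallel\mathbb{P}_{\mathbf{Z}_w})\leq 2\delta_{kl}^2/n$ evaluated at $\overline{\mathbf{Q}}$, exactly matching the constraint in \eqref{optimization_secrecy}. Scaling by $\sqrt{n/(2\delta_{kl}^2)}$ and taking $\varliminf$ gives the upper bound.

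\emph{Achievability.} I fix a feasible $\mathbf{Q}$ for the single-letter optimization in \eqref{optimization_secrecy} and generate a random codebook with i.i.d.\ codewords drawn from $\mathcal{CN}(\mathbf{0},\mathbf{Q})$, then apply the wiretap binning construction of \cite{Hassibi_secrecy_capacity,Liang_book} so that the stochastic encoder attains secrecy rate $R_b(\mathbf{Q})-R_w(\mathbf{Q})$ under any of the IS notions in \eqref{eq:constraints}. Reliability at Bob follows from the achievability argument of Theorem~\ref{thm:Kn_covert}: since $\mathbf{Q}\to\mathbf{0}$ in the diminishing-rate regime, Chebyshev's inequality yields $\mathbb{I}(\mathbf{X}^n;\mathbf{Y}_b^n)\to n\,R_b(\mathbf{Q})$ in probability. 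Because the input is i.i.d.\ Gaussian, $\mathbb{P}_{\mathbf{Y}_w^n}$ factorizes, so the $n$-letter KL divergence equals $n\cdot\mathcal{D}(\mathbb{P}_{\mathbf{Y}_w}\parallel\mathbb{P}_{\mathbf{Z}_w})$ and the single-letter constraint $\leq 2\delta_{kl}^2/n$ transfers to the $n$-letter bound $\leq 2\delta_{kl}^2$. The scaled liminf of $R(n,\epsilon,\delta_{kl},\delta_s)$ then matches the converse.

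\emph{Gaussian optimality and the main obstacle.} For a fixed trace constraint, the zero-mean Gaussian input with covariance $\mathbf{Q}$ attains \eqref{secrecy_capacity} \cite{Hassibi_secrecy_capacity}; moreover, among all input distributions with covariance $\mathbf{Q}$, the Gaussian maximizes $h(\mathbf{Y}_w)$, and since $\mathbb{P}_{\mathbf{Z}_w}$ is itself Gaussian, it minimizes $\mathcal{D}(\mathbb{P}_{\mathbf{Y}_w}\parallel\mathbb{P}_{\mathbf{Z}_w})$, so the KL constraint is \emph{simultaneously} relaxed. The delicate point I anticipate is ensuring that the wiretap binning superimposed on the i.i.d.\ Gaussian codebook does not inflate the induced output distribution at Willie away from the product Gaussian target used in the KL bound; this should be resolved by standard resolvability/soft-covering arguments, under which the average induced output at Willie is indistinguishable in KL from that produced by the i.i.d.\ Gaussian input, and concentration over the random codebook yields the claim for a typical realization. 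Once this compatibility is verified, the remainder is a routine composition of Theorem~\ref{thm:Kn_covert}'s proof with the classical MIMO wiretap capacity.
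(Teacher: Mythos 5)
Your approach mirrors the paper's own: the paper states this corollary as following ``using a similar approach'' to Theorem~\ref{thm:Kn_covert} and does not provide a standalone proof. However, there is a genuine gap in your converse. You invoke ``concavity of the per-letter secrecy rate in $\mathbf{Q}$ together with Jensen's inequality'' to pass from the per-letter covariances $\{\mathbf{Q}_i\}$ to the averaged covariance $\overline{\mathbf{Q}}$. But $R_b(\mathbf{Q})-R_w(\mathbf{Q})$ is a difference of two concave log-determinant functions and is therefore neither concave nor convex in $\mathbf{Q}$ in general; the paper itself says so explicitly in the Remark immediately following the corollary (``the objective function in the optimization problem in the scaling with IS \eqref{optimization_secrecy} is not convex in general''). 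Taking the positive part $[\cdot]^+$ does not repair this. The Jensen step works in Theorem~\ref{thm:Kn_covert} precisely because $\log\abs{\tfrac{1}{\sigma_b^2}\mathbf{H}_b\mathbf{Q}\mathbf{H}_b^\dag+\mathbf{I}_{N_b}}$ is concave in $\mathbf{Q}$; that property is exactly what is destroyed when you subtract $R_w(\mathbf{Q})$, so the step cannot carry over verbatim.

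Two ways to close the gap. First, you could invoke the MIMO wiretap converse of \cite{Hassibi_secrecy_capacity,Khisti_MIMO} directly, whose single-letterization is obtained via channel-enhancement / Sato-type outer bounds rather than Jensen, and then verify that the single covariance $\mathbf{Q}$ produced by that argument is compatible with the KL single-letterization as in \eqref{eq:converse1_prt2}. Second, and more in keeping with the paper's diminishing-rate regime, you could observe that the KL constraint forces $\mathbf{Q}\to\mathbf{0}$ as $n\to\infty$, under which the secrecy-rate objective is asymptotically linear in $\mathbf{Q}$ (its first-order expansion is $\mathbf{tr}\left(\left(\tfrac{1}{\sigma_b^2}\mathbf{H}_b^\dag\mathbf{H}_b-\tfrac{1}{\sigma_w^2}\mathbf{H}_w^\dag\mathbf{H}_w\right)\mathbf{Q}\right)$), so a first-order version of the averaging argument survives; this is essentially how the paper proceeds in Theorem~\ref{thm:secrecy_Scaling}, where it works per eigen-direction rather than with an explicit optimal $\mathbf{Q}$. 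Your achievability sketch (wiretap binning on an i.i.d.\ Gaussian codebook plus the resolvability compatibility check for the KL target) and your Gaussian-optimality paragraph are sound and match the paper's intent.
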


\begin{remark}
The objective function in the optimization problem in the scaling with IS \eqref{optimization_secrecy} is not convex in general. Thus, we do not provide an explicit solution for this optimization problem. Instead, we derive our results in terms of the optimal power allocation in each eigen-direction as in the normalized KL divergence in each eigen-direction \eqref{optimal_power_implicit}. The optimal input covariance matrix that achieves the secrecy capacity in \eqref{secrecy_capacity} is investigated for special cases, for example, as in \cite{Optimum_Loyka}. 
\end{remark}
\begin{theorem}
\label{thm:secrecy_Scaling}
The scaling of the maximum number of covert nats that can be transmitted reliably with IS over MIMO AWGN channels is given by:
\begin{equation}
L_{S}=\sum_{i=1}^{N}\sqrt{2\,c_i}\left[\frac{\,{\sigma_w^2}\,{\lambda}_{\it b,i}}{\sigma_{\it b}^2\,{\lambda}_{w,i}}-1\right]^+,
\end{equation}
while ensuring that the illegitimate receiver's sum of the detection error probabilities is lower bounded by $(1-\delta_{kl})$.
\end{theorem}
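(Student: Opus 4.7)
The strategy mirrors the proof of Theorem~\ref{thm:Scaling}, starting from the optimization in Corollary~\ref{thm:sec_covert}: I would replace the objective $R_b(\mathbf{Q})$ with the secrecy rate $[R_b(\mathbf{Q})-R_w(\mathbf{Q})]^+$ while reusing the SVD-based decomposition of the single-letter KL divergence from Appendix~\ref{App:KL_divergence} and the same per-direction normalized budget $c_i$ defined in \eqref{optimal_power_implicit}. Diagonalizing $\mathbf{Q}$ in Bob's right-singular basis, each eigen-direction contributes $\log(1+q_i\lambda_{b,i}/\sigma_b^2)-\log(1+q_i\tilde\lambda_{w,i}/\sigma_w^2)$ to the secrecy rate and $q_i\tilde\lambda_{w,i}/\sigma_w^2-\log(1+q_i\tilde\lambda_{w,i}/\sigma_w^2)$ to the KL divergence.

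For the converse I would lower-bound each KL summand by its leading second-order Taylor term and solve the resulting inequality for $q_i$, obtaining $q_i\leq \frac{2\sigma_w^2\delta_{kl}\sqrt{c_i}}{\tilde\lambda_{w,i}\sqrt{n}}(1+o(1))$. Next I would apply $\log(1+x)\leq x$ to the Bob term and $\log(1+x)\geq x-x^2/2$ to the Willie term (they enter with opposite signs), giving per stream $q_i\bigl[\lambda_{b,i}/\sigma_b^2-\tilde\lambda_{w,i}/\sigma_w^2\bigr]+o(1/\sqrt{n})$. Streams with $\sigma_w^2\lambda_{b,i}/(\sigma_b^2\tilde\lambda_{w,i})\leq 1$ cannot improve the secrecy objective and are assigned $q_i=0$ by the maximizer, which is what produces the $[\cdot]^+$ structure. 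Multiplying through by $\sqrt{n/(2\delta_{kl}^2)}$ and summing yields the claimed upper bound on $L_S$. Achievability follows by the complementary inequalities: pick $q_i^{*}=\frac{2\sigma_w^2\delta_{kl}\sqrt{c_i}}{\tilde\lambda_{w,i}\sqrt{n}}$ on directions where the factor is positive and $q_i^{*}=0$ elsewhere, upper-bound each KL summand by $\tfrac{1}{2}(q_i^{*}\tilde\lambda_{w,i}/\sigma_w^2)^2$ to verify the KL constraint, and reverse the log inequalities above to lower-bound the secrecy rate by the same expression. The two bounds coincide in the scaled limit; operational covertness (Willie's detection-error sum $\geq 1-\delta_{kl}$) then follows from Pinsker's inequality exactly as in Theorem~\ref{thm:Kn_covert}.

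The main obstacle is the non-convexity of $[R_b(\mathbf{Q})-R_w(\mathbf{Q})]^+$ flagged in the remark after Corollary~\ref{thm:sec_covert}: one cannot claim a priori that diagonalizing $\mathbf{Q}$ in Bob's eigenbasis is optimal for the full secrecy capacity, and the $[\cdot]^+$ in the secrecy capacity formula is applied to the sum rather than termwise. I would sidestep both issues by exploiting that the KL constraint forces $q_i=O(1/\sqrt{n})$, so $R_b$ and $R_w$ are linear in $\mathbf{Q}$ up to $o(1/\sqrt{n})$ cross terms after normalization by $\sqrt{n/(2\delta_{kl}^2)}$. In this linearized regime the optimization decouples across eigen-directions, the diagonal allocation in Bob's basis is asymptotically optimal, and the sum-level $[\cdot]^+$ collapses to the termwise $[\cdot]^+$ appearing in the statement.
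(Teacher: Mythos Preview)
Your proposal is correct and follows essentially the same route as the paper: start from Corollary~\ref{thm:sec_covert}, reuse the per-direction KL decomposition and the $c_i$ budget from \eqref{optimal_power_implicit}, bound the power by $q_i\le \frac{2\sigma_w^2}{\tilde\lambda_{w,i}}\sqrt{c_i\delta_{kl}^2/n}$, and sandwich $\log(1+x)$ on Bob's and Willie's terms with opposite-direction inequalities. The only cosmetic difference is that the paper uses $\frac{x}{1+x}\le\log(1+x)$ (and $\frac{2x}{2+x}\le\log(1+x)$ for the KL upper bound) where you use $x-\tfrac{x^2}{2}\le\log(1+x)$; both yield the same limit after scaling by $\sqrt{n/(2\delta_{kl}^2)}$, and your explicit linearization argument for why the sum-level $[\cdot]^+$ collapses to the termwise $[\cdot]^+$ is in fact more careful than what the paper writes out.
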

\begin{IEEEproof}
The full proof can be found in Appendix \ref{App:proof_thm3}. We derive upper and lower bounds similar to the proof of Theorem \ref{thm:Scaling} by incorporating Corollary \ref{thm:sec_covert}. 
\end{IEEEproof}

Without the existence of a null-space, Theorem \ref{thm:secrecy_Scaling} coincides with Theorem 2 in \cite{constant_rate} for binary symmetric (BSC) channels where Alice can transmit $\mathcal{O}(\sqrt{n})$ covert nats that can be transmitted reliably with IS to Bob (without a secret codebook), namely, hidable and deniable nats. Also, it coincides with the results in \cite{Bloch_resolvability} for DMC and AWGN channels. Although we do not constrain the quality of any channel, interestingly, the expression in Theorem \ref{thm:secrecy_Scaling} is given in terms of the ratio between the quality of Bob's and Willie's channels, $\frac{{\lambda}_{\it u,i}}{\sigma_{\it u}^2},\ \forall i,\ u\in\left\{b,w\right\}$, which is inherited from the IS constraint. In each eigen-direction, this ratio should be greater than 1 to achieve a positive scaling in this direction. Thus, covert communication without a secret codebook but with IS cannot be achieved unless Bob's channel is less noisy than Willie's channel at least in one eigen-direction. In contrast, covert communication with a secret codebook can be achieved regardless of the quality of any channel but at a rate that depends on that ratio.

As long as the channel to Willie is noisier than the channel to Bob, Alice exploits the dominance of the channel quality of Bob over the channel quality of Willie in each eigen-direction. This allows using an ensemble of the public codebook and transmitting messages with IS to Bob and without being decoded by Willie \cite{Liang_book,constant_rate}. To ensure this, the stochastic encoder uses dummy messages and chooses the confidential message uniformly with a randomization rate that is determined by Willie's channel quality \cite{Poor_PLS}. Hence, Willie can be overwhelmed by the dummy messages and cannot decode but still has the ability to detect, i.e., the IS constraint is necessary but not sufficient to achieve covert communication. Therefore, the KL constraint (which is also not sufficient without the IS constraint) can be added to guarantee that Willie cannot detect the transmission as well as cannot decode. A closely related metric is the effective secrecy capacity that is developed in \cite{effectivesecrecy} where Willie tries to detect whether the transmission is meaningful or not, namely, stealth communication with IS. The effective secrecy capacity is positive and is similar to the weak and the strong secrecy capacities if there is a distribution $\mathbb{P}_{\mathbf{Z}_w^n}$ such that $\mathbb{P}_{\mathbf{Y}_w^n}=\mathbb{P}_{\mathbf{Z}_w^n}$.

\subsection{Scaling Laws of Specific MIMO AWGN Channels}
\begin{cor}
The scaling of the maximum number of covert nats that can be transmitted reliably with IS over well-conditioned MIMO AWGN channels is given by:
\begin{equation}
L_{S}=\sqrt{2\,N}\left[\frac{\,{\sigma_w^2}\,{\lambda}_{\it b}}{\sigma_{\it b}^2\,{\lambda}_{w}}-1\right]^+.
\end{equation}
\end{cor}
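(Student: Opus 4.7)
The plan is to derive this corollary as a direct specialization of Theorem \ref{thm:secrecy_Scaling} to the well-conditioned case, leveraging the symmetry across eigen-directions to determine both the optimal power split and the normalized KL contributions $c_i$ in closed form.

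First, I would invoke Theorem \ref{thm:secrecy_Scaling}, which gives the general scaling
\[
L_{S}=\sum_{i=1}^{N}\sqrt{2\,c_i}\left[\frac{\sigma_w^2\,{\lambda}_{b,i}}{\sigma_{b}^2\,\tilde{\lambda}_{w,i}}-1\right]^+.
\]
Next, I would impose the well-conditioned assumption exactly as in the analogous corollary for the scenario with a secret codebook: all nonzero singular values of Bob's channel coincide (so $\lambda_{b,i}=\lambda_b$ for $i=1,\dots,N$) and likewise the rotated eigenvalues of Willie's channel satisfy $\tilde{\lambda}_{w,i}=\tilde{\lambda}_w$. This collapses the bracketed term into a single direction-independent quantity $\bigl[\sigma_w^2\lambda_b/(\sigma_b^2\tilde{\lambda}_w)-1\bigr]^+$ that can be pulled out of the sum.

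The second step is to argue that equal power allocation is optimal in this symmetric setting, which is standard for well-conditioned channels (as noted in the preceding well-conditioned corollary with a secret codebook). Since the objective is symmetric in the per-direction powers $q_i$ and the KL constraint \eqref{optimal_power_implicit} decomposes as a sum of identical per-direction terms, the optimum has $q_i^*=P/N$ for all $i$. Substituting into the definition of $c_i$ in \eqref{optimal_power_implicit} yields identical contributions, and since $\sum_{i=1}^N c_i=1$ we get $c_i=1/N$. Plugging $c_i=1/N$ back gives
\[
L_{S}=N\sqrt{\frac{2}{N}}\left[\frac{\sigma_w^2\,\lambda_b}{\sigma_{b}^2\,\tilde{\lambda}_w}-1\right]^+=\sqrt{2N}\left[\frac{\sigma_w^2\,\lambda_b}{\sigma_{b}^2\,\tilde{\lambda}_w}-1\right]^+,
\]
as claimed.

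The only genuine subtlety, and the place I would spend the most care, is justifying optimality of equal power allocation in the secrecy setting, since the objective $[R_b(\mathbf{Q})-R_w(\mathbf{Q})]^+$ in \eqref{optimization_secrecy} is not concave in $\mathbf{Q}$ in general (as explicitly flagged in the remark following Corollary \ref{thm:sec_covert}). However, in the diminishing-rate regime the per-direction powers tend to zero, so a first-order expansion of both $R_b$ and $R_w$ together with the first-order expansion of the KL constraint makes the optimization separable and symmetric across the $N$ equivalent eigen-directions; symmetric concavification of the first-order surrogate then forces $q_i^*=P/N$. Once this symmetry argument is in place the arithmetic above closes the proof.
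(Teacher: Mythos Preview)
Your proposal is correct and matches the paper's approach: the paper states this corollary without proof, treating it as an immediate specialization of Theorem~\ref{thm:secrecy_Scaling} under the well-conditioned assumption (equal eigenvalues and equal power allocation, hence $c_i=1/N$), exactly as you outline. Your added discussion of the non-convexity subtlety and the first-order symmetry argument in the diminishing-rate regime is more careful than what the paper makes explicit, but it fills in precisely the gap the paper leaves implicit.
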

\begin{cor}\label{compound_channel_2}
The scaling of the maximum number of covert nats that can be transmitted reliably with IS over the compound MIMO AWGN channels with a known legitimate receiver's CSI is given as follows:
\begin{equation} 
L_S=\sum_{i=1}^{N}\sqrt{2c_i}\left[\frac{{\lambda}_{\it b,i}}{\sigma_{\it b}^2\hat{\lambda}_{w}}-1\right]^+.
\end{equation}
Moreover, when the legitimate receiver's channel is well-conditioned, the scaling is given as:
\begin{equation} 
L_S=\sqrt{2\,N}\left[\frac{{\lambda}_{\it b}}{\sigma_{\it b}^2\,\hat{\lambda}_{w}}-1\right]^+.
\end{equation}
\end{cor}
\begin{cor}
The scaling of the maximum number of covert nats that can be transmitted reliably with IS over unit-rank MIMO AWGN channels is given by:
\begin{equation}\begin{aligned}
L_{S}&=\sqrt{2}\left[\frac{\,{\sigma_w^2}\,\tilde{\lambda}_{\it b}}{\sigma_{\it b}^2\,\tilde{\lambda}_{w}}-1\right]^+\\
&=\sqrt{2}\left[\frac{{\sigma_w^2}\,{\xi}^2_{\it b}\,N_b\,\abs{f(\Omega_a^*-\Omega_b)}^2}{\sigma_{\it b}^2\,\xi^2_{w}\,N_w\,\abs{f(\Omega_a^*-\Omega_w)}^2}-1\right]^+.
\end{aligned}\end{equation}
\end{cor}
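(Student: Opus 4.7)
The plan is to derive this corollary as a direct specialization of Theorem \ref{thm:secrecy_Scaling} to the unit-rank LoS geometry introduced earlier in Section \ref{sec:covert}. The key structural observation is that both $\mathbf{H}_b^{\dag}\mathbf{H}_b=\lambda_b\,\mathbf{u}_a(\Omega_b)\,\mathbf{u}_a^{\dag}(\Omega_b)$ and $\mathbf{H}_w^{\dag}\mathbf{H}_w=\lambda_w\,\mathbf{u}_a(\Omega_w)\,\mathbf{u}_a^{\dag}(\Omega_w)$ are rank one, so in the general scaling sum of Theorem \ref{thm:secrecy_Scaling} only a single eigen-direction carries transmit energy and contributes; every other summand vanishes.

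First I would instantiate Theorem \ref{thm:secrecy_Scaling} with effective dimension $N=1$. Because all of Alice's transmit power is concentrated in the one non-trivial eigen-direction aligned with $\mathbf{u}_a(\Omega_a^{*})$, the normalization $\sum_{i}c_i=1$ introduced in \eqref{optimal_power_implicit} forces $c_1=1$. The only rotated eigenvalues that survive are $\lambda_{b,1}=\tilde{\lambda}_b$ and $\tilde{\lambda}_{w,1}=\tilde{\lambda}_w$, where the tildes denote the projections along the rate-maximizing transmit direction $\Omega_a^{*}$ characterized by \eqref{optimal_transmit}. Substituting these three quantities into Theorem \ref{thm:secrecy_Scaling} yields the first displayed form
\begin{equation}
L_{S}=\sqrt{2}\left[\frac{\sigma_w^2\,\tilde{\lambda}_b}{\sigma_b^2\,\tilde{\lambda}_w}-1\right]^+.
\end{equation}

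Second, to promote this to the closed-form ratio in the second displayed equation, I would invoke the physical LoS model developed for unit-rank MIMO, namely $\lambda_r=\xi_r^2\,N_a\,N_r$ together with $\tilde{\lambda}_r=\lambda_r\,|f(\Omega_a^{*}-\Omega_r)|^2$ for $r\in\{b,w\}$. Plugging both into the numerator and denominator of the ratio cancels the common $N_a$ factor and leaves exactly the expression stated in the corollary; the positivity bracket $[\cdot]^+$ is inherited unchanged from Theorem \ref{thm:secrecy_Scaling} and encodes the familiar wiretap requirement that Bob's effective channel be less noisy than Willie's in the chosen eigen-direction.

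The step that needs the most care is verifying that the stated configuration on $\Omega_b$ actually places us in the diminishing-rate regime to which Theorem \ref{thm:secrecy_Scaling} applies, as opposed to the null-steering regime of Corollary \ref{positive_rate_cor}; otherwise the transmit direction producing a perfect null on Willie would yield a non-diminishing covert rate and the $\sqrt{n}$-law in the present corollary would fail to be tight. Concretely, this amounts to checking that the KL-feasible $\Omega_a^{*}$ solving \eqref{optimal_transmit} keeps $\tilde{\lambda}_w>0$ so that the ratio in the bracket is well-defined and Theorem \ref{thm:secrecy_Scaling} applies verbatim. Once this regime check is made, the remainder is pure algebraic substitution and no further technical machinery is required.
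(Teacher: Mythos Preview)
Your proposal is correct and matches the paper's intended approach: the paper states this corollary without proof, treating it as an immediate specialization of Theorem~\ref{thm:secrecy_Scaling} to the unit-rank LoS model, exactly as you outline (single active eigen-direction, $c_1=1$, then substitution of $\tilde{\lambda}_r=\lambda_r\,|f(\Omega_a^{*}-\Omega_r)|^2$ with $\lambda_r=\xi_r^2 N_a N_r$). Your flagging of the regime check---that $\tilde{\lambda}_w>0$ so that Theorem~\ref{thm:secrecy_Scaling} applies rather than the null-steering positive-rate regime of Corollary~\ref{positive_rate_cor}---is a legitimate point the paper leaves implicit; you handle it appropriately by identifying it as the one non-algebraic step.
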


\subsection{Covert communication, communication with IS, and energy-undetectable communication}\label{discussion}

We conclude this part of the paper with a brief discussion about the relationship between communication with IS, covert communication, and energy-undetectable communication (under the KL constraint). Without the KL constraint and Bob knows the codebook, Bob utilizes an optimum decoder, which is a linear correlator, that minimizes the probability of decoding error. Hence, for a sufficiently large $n$, communication is reliable with an information rate $R=\lim_{n\rightarrow\infty}\frac{1}{n}\mathrm{I}(M;{\bf Y}^n_b)=\mathcal{O}(1)$. Meanwhile, with a secret codebook and the one-time pad is utilized, the information leakage to Willie (about a message $M$) is zero \cite{Shannon}, i.e., $\mathrm{I}(M;{\bf Y}^n_w)=0$. Also, without a secret codebook but IS is utilized, the information leakage is zero for a sufficiently large $n$, i.e., $\lim_{n\rightarrow\infty}\mathrm{I}(M;{\bf Y}^n_w)=0$. In both cases, Willie cannot decode reliably. However, with the knowledge of the codebook construction and the encoding criteria, Willie can perform statistical hypothesis testing to detect whether there is a transmission or not as given in \eqref{HST}. In this case, the optimum detector that minimizes the sum of detection error probabilities is an energy detector. Although Willie cannot decode reliably, the energy detector can detect the presence of communication.

To prevent communication from being energy detected and achieve covert communication, the transmitted power should be a decreasing function of the message length to satisfy the KL constraint. Thus, for a sufficiently large message length, the transmitted power goes to zero and $\alpha+\beta$ goes to one. Hence, the number of possible messages cannot be increased, while achieving a small probability of decoding error. This illustrates that the information-theoretic capacity of the energy-undetectable channel, which satisfies the KL constraint, is zero. Although the information rate becomes zero, i.e., $R=\lim_{n\rightarrow\infty}\frac{1}{n}\mathrm{I}(M;{\bf Y}^n_b)=0$, the probability of decoding error at Bob still decays exponentially with the codeword length \cite{bash2013limits} and reliable information $\mathcal{O}(\sqrt{n})$ nats can be transmitted covertly, i.e., $\mathrm{I}(M;{\bf Y}^n_b)=\mathcal{O}(\sqrt{n})$ nats.

Now, we further consider the case when Alice transmits energy-undetectable information, i.e., the KL constraint is satisfied, but communication is without either a secret codebook or IS. Although the information rate is zero due to the KL constraint, both Bob and Willie can decode and get information $\mathcal{O}(\sqrt{n})$ nats reliably, i.e., both $\mathrm{I}(M;{\bf Y}^n_b)$ and $\mathrm{I}(M;{\bf Y}^n_w)$ are $\mathcal{O}(\sqrt{n})$ nats. Thus, Willie is still able to detect the presence of communication by a linear correlator when communication is without both a secret codebook and IS even if the transmitted signal is energy-undetectable. This means that the KL constraint is not sufficient without either a secret codebook or IS. To achieve covert communication, Willie should not be able either to decode the message or to detect the signal energy. In other words, to transmit a covert message, one should guarantee both confidentiality and energy-undetectability. Table \ref{discussion_table} concludes this discussion\footnote{The notation $a(n)\rightarrow b$ is equivalent to $\lim_{n\rightarrow\infty}a(n)=b$.}. Also, Fig. \ref{Fig_secrecy_vs_covert} captures the main idea. 

\begin{table*}
\centering
\caption{Different cases for Covert communication, communication with IS and energy-undetectable communication.\label{discussion_table}}
\begin{tabular}{|c|c|c|c|c|c|}
 \hline
 {Cases} & {Codebook} & {Power} & {Communication with IS} & {Energy-undetectable Communication}& {Covert Comm.} \\ \hline
 \text{1} & {Public} & {Const.} & {No: $\mathrm{I}(M;{\bf Y}^n_w)=\mathcal{O}({n})$} & {No: $\alpha+\beta\rightarrow0$, $\frac{1}{n}\mathrm{I}(M;{\bf Y}^n_w)=\mathcal{O}({1})$}& {No} \\ \hline
 \text{2} & {Public} & {$\propto\frac{1}{\sqrt{n}}$} & {No: $\mathrm{I}(M;{\bf Y}^n_w)=\mathcal{O}(\sqrt{n})$} & {Yes: $\alpha+\beta\rightarrow1$, $\frac{1}{n}\mathrm{I}(M;{\bf Y}^n_w)\rightarrow0$}& {No} \\ \hline
 \text{3} & {Secret} & {Const.} & {Yes: $\mathrm{I}(M;{\bf Y}^n_w)= 0,\ {\forall n}$} & {No: $\alpha+\beta\rightarrow0$, $\frac{1}{n}\mathrm{I}(M;{\bf Y}^n_w)=0,\ {\forall n}$}& {No} \\ \hline
 \text{4} & {Secret} & {$\propto\frac{1}{\sqrt{n}}$} & {Yes: $\mathrm{I}(M;{\bf Y}^n_w)=0,\ {\forall n}$} & {Yes: $\alpha+\beta\rightarrow1$, $\frac{1}{n}\mathrm{I}(M;{\bf Y}^n_w)=0,\ {\forall n}$}& {Yes} \\ \hline
 \text{5} & {Public with IS} & {$\propto\frac{1}{\sqrt{n}}$} & {Yes: $\mathrm{I}(M;{\bf Y}^n_w)\rightarrow 0$} & {Yes: $\alpha+\beta\rightarrow1$, $\frac{1}{n}\mathrm{I}(M;{\bf Y}^n_w)\rightarrow0$}& {Yes} \\ \hline
 \end{tabular}
\end{table*}


\subsection{Using the product distribution in the achievability proofs}\label{using the product distribution} With a secret codebook, according to Definition \ref{secret codebook}, the probability of each codeword observed at Willie is given by a product distribution\footnote{The product distribution is the product of the marginal distributions.}. The reason is that the received codeword seems to Willie as if it is just sampled at the moment of the transmission. This scenario is adopted in the achievability proofs in \cite{bash2013limits}. 

In general, under an average cost constraint, codewords that do not satisfy the constraint are discarded. This violates the i.i.d. assumption such that we cannot use the product distribution for Bob in the achievability proofs. Similarly, when the codebook is public, the product distribution cannot be used to derive the single-letter KL divergence at Willie. However, for a large enough blocklength, randomly-generated codewords satisfy the constraint by the typical average lemma. Satisfying the input cost constraint prevents discarding any codeword, and thus, the i.i.d. assumption is still valid. Equivalently, instead of discarding the violating codewords, we let the decoder declare an error, the probability of which vanishes for a sufficiently large blocklength. As a result, although the codebook is public, using the product distribution for both Bob and Willie is a plausible assumption. Also, when the codebook is public, we use the IS constraint to prevent Willie from decoding, and thus, the detection problem reduces to energy detection. Therefore, the product distribution can be safely used.

In an identical scenario, the product distribution is used for the average probability of decoding error analysis as in \cite[Theorem 9.1.1]{cover2012elements}, \cite[Theorem 3.2]{elgamal_kim_2011}, \cite[Theorem 1.1]{bash2013limits}, and \cite{wang2016fundamental}. Even if the channel suffers from a fading process with memory, the memoryless channel and the product distribution assumptions are still valid when the CSI is known to Alice and Bob \cite[Ch 2]{mimo_channel_goldsmith}. Also, under a peak power constraint, the i.i.d. assumption is addressed successfully in the achievability proof of \cite[Theorem 1.2]{bash2013limits} by using a sub-optimal decoder at Bob that considers each observation independently. For more involved achievability proofs that exploit a modified version of typicality, we refer to \cite[Remark 3, eq. 76]{Bloch_resolvability} and \cite[eq. 18]{constant_rate}. Therein, random coding satisfies both the probability of decoding error and the energy-undetectability constraints with probability $1-\theta_n$, where $\theta_n$ decays exponentially and even super exponentially in some scenarios.
\section{Can We Overcome the square-Root Law?}\label{sec:asymptotic}

This section answers an interesting question and gives important insight into the asymptotic behavior of covert communication with unknown CSI of Willie's channel. In this regime, Alice transmits directly to Bob without steering the null of its transmit beam to the direction of Willie's channel. Hence, the covert capacity with unknown CSI of Willie's channel is zero. However, increasing the number of transmit antennas overcomes the square-root law of covert communication, and hence, a positive covert capacity is achieved. As the number of transmit antennas goes to infinity, namely, the massive MIMO limit \cite{amr}, the maximal covert coding rate converges to the maximal coding rate of MIMO AWGN channels. The same concept applies to communication with IS but the scope of this section is limited to covert communication.

In the light of Theorem \ref{thm:Kn_covert}, this section considers the same problem: covert communication with a secret codebook. Thus, in this section, we study the effect of increasing the number of transmit antennas for the same optimization problem, while Theorem \ref{thm:Kn_covert} considers an arbitrary number of transmit antennas. Specifically, we consider the first-order approximation of the maximum covert coding rate but under a different asymptotic regime.

\subsection{Unit-Rank MIMO Channels}
With unknown CSI of Willie's channel, Alice transmits in the spatial transmit signature in the directional cosine of Bob's channel. To give more insight utilizing the antenna array design, we consider the received pattern at Willie, which is given by $\abs{f(\Omega)}=\abs{f(\Omega_b-\Omega_w)}$, in the following two cases:
\begin{itemize}
 \item Case (1): Fixed normalized array length, $L_a$, of the transmit antenna array:
 \begin{equation}\label{case1}
 \lim_{N_a\rightarrow\infty} \abs{\frac{\sin(\pi\,L_a\,\Omega)}{N_a\,\sin(\pi\,L_a\,\Omega/N_a)}}=\frac{\sin(\pi\,L_a\,\Omega)}{\pi\,L_a\,\Omega}. 
 \end{equation}
This means that as the number of transmit antennas increases, the main lobe does not change and all other lobes decreases, which increases the maximum covert coding rate as long as Willie is not aligned to the main lobes that are centered at $\phi_b$ and $2\pi-\phi_b$, respectively, with a beamwidth equals to $2/L_a$.
\item Case (2): Fixed normalized antenna separation, $\Delta_a$:
\begin{equation}\label{case2}
 \lim_{N_a\rightarrow\infty} \abs{\frac{\sin(\pi\,N_a\,\Delta_a\,\Omega)}{N_a\,\sin(\pi\,\Delta_a\,\Omega)}}=0.
 \end{equation}
In this case, the width of the main lobe decreases and becomes very directive, pencil beam, as well as all other lobes decrease substantially. Thus, Willie can not receive the transmitted signal wherever Willie is not aligned to the spatial transmit direction of Bob, i.e., $\Omega_b\neq\Omega_w+\frac{k}{L_a}\mod{\frac{1}{\Delta_a}},\ k=1,\dots,N_a-1$. Therefore, a positive covert capacity can be achieved with zero probability of detection due to the high beamforming capability.
\end{itemize}
\begin{theorem}\label{prop:one}
Without the knowledge of the CSI of the illegitimate receiver and for a finite blocklength, the KL constraint of unit-rank MIMO AWGN channels is satisfied $\forall \delta_{kl}\geq 0$ as $N_a\rightarrow \infty$, under the following condition:
\begin{enumerate}
 \item Either a fixed normalized antenna separation, while the illegitimate receiver is not aligned to the spatial transmit direction of the legitimate receiver, i.e., $\Omega_b\neq\Omega_w+\frac{k}{L_a}\mod{\frac{1}{\Delta_a}},\ k=1,\dots,N_a-1$.
 \item Or a fixed normalized array length, while the illegitimate receiver is not aligned to the main lobes, which are centered at $\phi_b$ and $2\pi-\phi_b$, respectively, with a beamwidth equals to $2/L_a$.
\end{enumerate}

Moreover, a lower bound on the number of transmit antennas that achieves a predefined target probability of detection, $\delta_{kl}$, is given by: 
\begin{equation}
N_a\geq \frac{P\xi_w^2N_w\abs{\sin(\pi L_a\Omega)}^2}{\sigma_w^2\abs{\sin(\pi\,\Delta_a\,\Omega)}^2}\left[-W_{-1}\left(-e^{-\frac{2\delta_{kl}^2}{n}-1}\right)-1\right]^{-1},
\end{equation}
where $W_{-1}$ is a branch of the Lambert function\footnote{Bounds on and applications of the $W_{-1}$ branch of the Lambert function can be found in \cite{Lambert}.}. 
\end{theorem}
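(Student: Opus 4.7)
The plan is to specialize the single-letter KL divergence appearing in Theorem \ref{thm:Kn_covert} to the rank-1 LoS channel, reduce the covertness condition to a scalar inequality in the effective SNR at Willie, and then invert it analytically via the Lambert $W$ function. With no CSI of Willie, Alice spends her entire power budget $P$ along the unit spatial signature $\mathbf{u}_a(\Omega_b)$ aimed at Bob, so Willie's single rotated eigenvalue becomes $\tilde{\lambda}_w=\lambda_w|f(\Omega)|^2$ with $\Omega\triangleq\Omega_b-\Omega_w$. The KL sum displayed just before Theorem \ref{thm:Scaling} collapses to a single term $\mathcal{D}(\mathbb{P}_{\mathbf{Y}_w}\parallel\mathbb{P}_{\mathbf{Z}_w})=x-\log(x+1)$ with
\begin{equation*}
x \;\triangleq\; \frac{P\tilde{\lambda}_w}{\sigma_w^2}\;=\;\frac{P\,\xi_w^2\,N_w\,\sin^2(\pi L_a\Omega)}{\sigma_w^2\,N_a\,\sin^2(\pi\Delta_a\Omega)},
\end{equation*}
where I have substituted $\lambda_w=\xi_w^2N_aN_w$, the closed form for $|f(\Omega)|^2$, and the identity $L_a=N_a\Delta_a$.

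For the asymptotic (existence) part, I would invoke the elementary inequality $x-\log(x+1)\le x^2/2$ on $x\ge 0$, so that showing $x\to 0$ is enough to force $\mathcal{D}\to 0\le 2\delta_{kl}^2/n$ for every $\delta_{kl}\ge 0$. In the fixed-$\Delta_a$ regime the numerator $\sin^2(\pi N_a\Delta_a\Omega)$ is bounded by $1$ while the denominator $N_a\sin^2(\pi\Delta_a\Omega)$ grows linearly in $N_a$ once the grating-lobe exclusion $\Omega\ne k/\Delta_a$ is imposed, yielding $x=\mathcal{O}(1/N_a)$. In the fixed-$L_a$ regime I would appeal to the limit \eqref{case1} together with the beam-pattern discussion immediately preceding the theorem: the main lobe of width $2/L_a$ is pointed at $\Omega_b$, and for any Willie outside that main lobe the side-lobe contributions shrink with $N_a$, again driving $x$ to zero.

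For the explicit lower bound on $N_a$, set $c\triangleq 2\delta_{kl}^2/n$ and seek the largest $x$ satisfying $x-\log(x+1)\le c$. Exponentiating the equality version gives $(x+1)e^{-(x+1)}=e^{-c-1}$; the substitution $u\triangleq-(x+1)$ turns this into a Lambert-$W$ equation $ue^u=-e^{-c-1}$. Since $x\ge 0$ forces $u\le -1$, the admissible root lies on the $W_{-1}$ branch, yielding the scalar threshold $x\le -W_{-1}(-e^{-c-1})-1$; substituting the closed form for $x$ and rearranging produces the claimed bound on $N_a$. I expect two subtle points to require care: (i) branch selection for the Lambert function, which is pinned down by the constraint $u\le -1$ together with the domain $-e^{-c-1}\in[-1/e,0)$ that holds for all $c\ge 0$, guaranteeing that $W_{-1}$ is well-defined; and (ii) certifying the vanishing of $x$ in the fixed-$L_a$ regime, where the pointwise limit \eqref{case1} must be combined with Willie's angular position outside the main lobe to conclude that $\tilde{\lambda}_w$ does not overwhelm the decay of the beam pattern.
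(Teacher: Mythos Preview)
Your proposal is correct and follows essentially the same route as the paper: both reduce the single-letter KL divergence to the scalar form $x-\log(1+x)$ with $x=P\lambda_w|f(\Omega)|^2/\sigma_w^2$, invoke the array-pattern limits \eqref{case1}--\eqref{case2} to drive $x\to 0$ in the two regimes, and invert the equality $x-\log(1+x)=2\delta_{kl}^2/n$ via the $W_{-1}$ branch to obtain the displayed bound on $N_a$. The paper's appendix additionally carries out a Lagrangian/KKT analysis of the optimization \eqref{optimization_problem}, but that material is used for the subsequent corollary on capacity convergence rather than for Theorem~\ref{prop:one} itself, so your coverage of the theorem is complete.
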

\begin{IEEEproof} The full proof can be found in Appendix \ref{App:proof_prop1}. We give a brief sketch as follows. With unknown Willie's CSI, Alice transmits in the spatial transmit signature in the directional cosine of Bob's channel. It suffices only to investigate the KL constraint for a sufficiently large number of transmit antennas. Utilizing the antenna array design, we show that the KL constraint is satisfied as $N_a$ goes to infinity, i.e., Willie does not receive the transmitted signal at all. Further, to estimate the number of transmit antennas that achieves a predefined target probability of detection, $\delta_{kl}$, we solve the KL constraint using the maximum transmit power.
\end{IEEEproof}

This result suggests that even without the knowledge of Willie's CSI, the maximal covert coding rate of unit-rank MIMO AWGN channels is asymptotically (with the number of transmit antennas) not diminishing and converges to the maximal coding rate of unit-rank MIMO AWGN channels. Hence, Alice can transmit $\mathcal{O}(n)$ nats covertly and reliably under some conditions. Thus, for a sufficiently large number of transmit antennas, Alice can achieve the capacity of unit-rank MIMO AWGN channels without being detected.

\begin{cor}
Under the stated conditions and for a sufficiently large number of transmit antennas, the covert capacity of unit-rank MIMO AWGN channels converges to the capacity of unit-rank MIMO AWGN channels, while the KL constraint is satisfied, i.e., $\lim_{n\rightarrow \infty}\lim_{N_a\rightarrow \infty}{R}(n,\epsilon,\delta_{kl})=\lim_{n\rightarrow \infty}\lim_{N_a\rightarrow \infty}{R}(n,\epsilon)$, for any given $\delta_{kl}\geq 0$. 
\end{cor}
\subsection{Multi-Path MIMO Channels}
Consider there are multiple reflected paths in addition to a LoS path where the $i^{th}$ path has an attenuation $\xi_i$ and makes an angle $\phi_{t,i}$ with the transmit antenna array and an angle $\phi_{r,i}$ with the receive antenna array, $\forall i$ and $r\in\left\{b,w\right\}$. The channel matrix, $\mathbf{H}$, is given by \cite{tse2005fundamentals} as follows:
\begin{equation}
\mathbf{H}=\sum_{i=1}\xi_i^p\sqrt{N_r\,N_t}\exp\left(-j\,2\,\pi\, d_i^p\right) \mathbf{u}_{r,i}^p(\Omega^p_{r,i})\,\mathbf{u}^{p\dag}_{t,i}(\Omega^p_{t,i}), 
\end{equation} 
where $\xi_i^p$ is the attenuation of the $i^{th}$ path, $d_i^p$ is the distance between the first transmit antenna and the first receive antenna along the $i^{th}$ path normalized to the carrier wavelength, $\mathbf{u}_{r,i}^p(\Omega_{r,i}^p)$ and $\mathbf{u}^{p}_{t,i}(\Omega_{t,i}^p)$ are the unit spatial receive and transmit signatures in the directional receive and transmit cosines of the $i^{th}$ path, $\forall i$, respectively. In addition, the channel matrix is full-rank if there exist at least $N$ paths such that $\Omega_{t,i}^p\neq\Omega_{t,j}^p\mod{\frac{1}{\Delta_a}}\mbox{ and }\Omega_{r,i}^p\neq\Omega_{r,j}^p\mod{\frac{1}{\Delta_r}}\ \forall i,j\ \mbox{and}\ i\neq j$, where $r\in\left\{b,w\right\}$.

Moreover, the channel matrix is well-conditioned if there exist at least $N$ paths such that the angular separation at the transmit array, $\abs{\Omega_t^p}=\abs{\Omega_{t,i}^p-\Omega_{t,j}^p},\ \forall i,j\ \mbox{and}\ i\neq j$, and at the receive array, $\abs{\Omega_r^p}=\abs{\Omega_{r,i}^p-\Omega_{r,j}^p},\ \forall i,j\ \mbox{and}\ i\neq j$, of each two paths are no less than $1/L_t$ and $1/L_r$, respectively, otherwise, paths are not resolvable.

The individual physical paths can be aggregated to the resolvable paths similar to the resolvable channel taps when modeling the multi-path fading channel \cite{tse2005fundamentals}. Hence, the $(i,j)^{th}$ channel gain in the angular domain consists of all paths whose transmit and receive directional cosines are within an angular window of width $1/L_t$ and $1/L_r$ around $l/L_t$ and $k/L_r$, respectively. Therefore, the orthonormal basis of the transmitted signal space, $\mathbb{C}^{N_t}$, and the received signal space, $\mathbb{C}^{N_r}$, are given by:
\begin{equation}
\mathcal{S}_u\triangleq\left\{\mathbf{u}_{u}\left(0\right), \mathbf{u}_{u}\left(\frac{1}{L_u}\right),\dots,\mathbf{u}_{u}\left(\frac{N_u-1}{L_u}\right)\right\},\end{equation}
where $u\in\left\{a,b,w\right\}$.

Let $\mathbf{U}_{t}$ and $\mathbf{U}_{r}$ be unitary matrices whose columns are the orthonormal vectors in $\mathcal{S}_a$ and $\mathcal{S}_r$, respectively. Then, the angular domain representation of the channel matrix is given by:
\begin{equation}
\begin{aligned}
\mathbf{H}^{g} &\triangleq \mathbf{U}_{r}^\dag\mathbf{H}\mathbf{U}_{t}\\
&=\sum_{i=1}\xi_i^p\sqrt{N_r\,N_t}\exp\left(-j\,2\,\pi d_i^p\right) \mathbf{u}_{r,i}^g(\Omega^g_{r,i})\mathbf{u}^{g\dag}_{t,i}(\Omega^g_{t,i})\\ 
&\stackrel{(a)}{=}\sum_{i=1}^{N_t}\xi_i\sqrt{N_r\,N_t}\exp\left(-j\,2\pi d_i\right) \mathbf{u}_{r,i}(\Omega_{r,i})\mathbf{u}^{\dag}_{t,i}(\Omega_{t,i}), \end{aligned} 
\end{equation}
where $\mathbf{u}_{t,i}^g(\Omega^g_{t,i})=\mathbf{U}_{t}^\dag\,\mathbf{u}^p_{t,i}(\Omega_{t,i}^p)$ and $\mathbf{u}^g_{r,i}(\Omega^g_{r,i})=\mathbf{U}_{r}^\dag\,\mathbf{u}^p_{r,i}(\Omega^p_{r,i})$, $ \forall i$, are the angular domain representation of the unit spatial transmit signature, $\mathbf{u}^p_{t,i}(\Omega^p_{t,i})$, in the directional cosine, $\Omega^p_{t,i}$ and the unit spatial receive signature, $\mathbf{u}^p_{r,i}(\Omega^p_{r,i})$, in the directional cosine, $\Omega^p_{r,i}$, of the $i^{th}$ path, respectively. In addition, $\mathbf{u}_{r,i}(\Omega_{r,i})$ and $\mathbf{u}_{t,i}(\Omega_{t,i})$ are the orthonormal vectors in $\mathcal{S}_t$ and $\mathcal{S}_r$, respectively, where $\xi_i$ is the attenuation of the $i^{th}$ angular window, $d_i$ is the distance between the first transmit antenna and the first receive antenna along the $i^{th}$ angular window normalized to the carrier wavelength, and (a) follows by aggregating all paths, i.e., unresolvable paths in the angular domain, that contribute in the same basis vector. For the $i^{th}$ path to contribute in the $l^{th}$ transmit basis vector and in the $k^{th}$ receive basis vector if:
\begin{equation}
\abs{\Omega^g_{u,i}-\frac{l}{L_u}}<\frac{1}{L_u},\ \forall i\ \mbox{and}\ l\in\left\{0,\dots,N_u-1\right\},
\end{equation}
where $u\in\left\{a,b,w\right\}$. Therefore, the angular representations of Bob's and Willie's channels power gain in terms of the resolvable paths are given by:
\begin{equation}
\mathbf{H}_u^{g\dag}\,\mathbf{H}_u^{g}=\sum_{i=1}^{N_a}\lambda_{u,i}\,\mathbf{u}_{u,i}(\Omega_{u,i})\,\mathbf{u}^{\dag}_{u,i}(\Omega_{u,i}), 
\end{equation}
where $u\in\left\{b,w\right\}$ and $\lambda_{u,i}=\xi^2_{u,i}\,N_u\,N_a$. 
\begin{theorem}\label{prop:two}
Without the knowledge of the CSI of the illegitimate receiver and for a finite blocklength, the KL constraint of multi-path MIMO AWGN channels is satisfied $ \forall \delta_{kl}\geq 0$ as $N_a\rightarrow \infty$, under the following condition:
\begin{enumerate}
 \item Either a fixed normalized antenna separation, while the illegitimate receiver is not aligned to any of the spatial transmit directions of the legitimate receiver, i.e., $\Omega_{b,i}\neq\Omega_{w,j}+\frac{k}{L_a}\mod{\frac{1}{\Delta_a}},\ k=1,\dots,N_a-1,\ \forall i\ \mbox{and}\ j$.
 \item Or a fixed normalized array length, while the illegitimate receiver is not aligned to the lobes that are centered at $\phi_{b,i}$ and $2\pi-\phi_{b,i},\ \forall i$, respectively, with a beamwidth equals to $2/L_a$.
\end{enumerate}
\end{theorem}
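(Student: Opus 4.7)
The plan is to extend the proof of Theorem~\ref{prop:one} to the multi-path setting by exploiting the angular-domain decomposition developed immediately before the theorem statement. Alice's strategy is to beamform into the orthonormal transmit signatures $\{\mathbf{u}_{b,i}(\Omega_{b,i})\}$ that diagonalize the angular representation of Bob's channel, distributing the total power $P$ across these $N_a$ directions; the KL constraint at Willie is then controlled by the leakage of these beams into Willie's angular signatures through the transmit beam pattern $f(\cdot)$.

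First, I reduce the $n$-letter KL constraint to the single-letter form derived earlier in the paper,
\begin{equation*}
\mathcal{D}(\mathbb{P}_{\mathbf{Y}_w}\parallel\mathbb{P}_{\mathbf{Z}_w}) = \sum_{i=1}^{N_a}\left[\frac{q_i\,\tilde{\lambda}_{w,i}}{\sigma_w^2} - \log\left(\frac{q_i\,\tilde{\lambda}_{w,i}}{\sigma_w^2}+1\right)\right],
\end{equation*}
so that it suffices to show that the rotated eigenvalues $\tilde{\lambda}_{w,i}$, i.e.\ Willie's channel gains viewed along Bob's $i$-th transmit direction, can be driven arbitrarily small as $N_a\to\infty$ under the stated conditions.

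Next, using the angular representation $\mathbf{H}_w^{g\dag}\mathbf{H}_w^g=\sum_j \lambda_{w,j}\,\mathbf{u}_{w,j}(\Omega_{w,j})\,\mathbf{u}_{w,j}^\dag(\Omega_{w,j})$ given just before the theorem statement, I expand
\begin{equation*}
\tilde{\lambda}_{w,i}=\sum_j \lambda_{w,j}\,\abs{\mathbf{u}_{w,j}^\dag(\Omega_{w,j})\,\mathbf{u}_{b,i}(\Omega_{b,i})}^2=\sum_j \xi_{w,j}^2\,N_w\,N_a\,\abs{f(\Omega_{b,i}-\Omega_{w,j})}^2,
\end{equation*}
and then invoke the limits~\eqref{case1} and~\eqref{case2} already exploited in the unit-rank proof. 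Under a fixed antenna separation $\Delta_a$, the non-alignment hypothesis $\Omega_{b,i}\neq\Omega_{w,j}+k/L_a \pmod{1/\Delta_a}$ forces $\abs{f(\Omega_{b,i}-\Omega_{w,j})}^2$ to decay like $1/N_a^2$, so $\tilde{\lambda}_{w,i}\to 0$ uniformly in $i$ and the KL divergence vanishes for any fixed $P$. Under a fixed array length $L_a$, the assumption that Willie lies outside every Bob path's main lobe keeps $\abs{f}^2$ bounded by the sinc sidelobe level, and the KL constraint is enforced by a per-direction power-allocation argument mirroring the one in Theorem~\ref{prop:one}, scaling the $q_i$'s with $N_a$ if necessary to meet any target $\delta_{kl}$.

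The main obstacle is uniform control of the double sum over Bob's and Willie's angular paths: unlike the unit-rank case, there is one cross-gain $\abs{f(\Omega_{b,i}-\Omega_{w,j})}$ per transmit--receive path pair. What rescues the argument is that, although the angular bases have dimension $N_a$, only the $\lambda_{w,j}$ corresponding to Willie's physical resolvable paths are nonzero, and that count is fixed by the propagation environment and independent of $N_a$. Consequently the double sum collapses to a finite linear combination whose summands either vanish (fixed $\Delta_a$) or stay uniformly bounded (fixed $L_a$) under the stated non-alignment conditions, and the argument of Theorem~\ref{prop:one} carries over path-by-path to establish the claim.
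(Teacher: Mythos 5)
Your proposal follows the paper's route closely for the core of the argument: you pass to the angular-domain decomposition of $\mathbf{H}_w^{g\dag}\mathbf{H}_w^g$, expand the rotated eigenvalue as $\tilde\lambda_{w,i}=\sum_j\lambda_{w,j}\abs{f(\Omega_{b,i}-\Omega_{w,j})}^2$ exactly as in the paper's Appendix~\ref{App:proof_prop2}, and invoke the beam-pattern limits \eqref{case1}--\eqref{case2} used in the unit-rank proof. Your observation that the inner sum over Willie's paths has only a bounded, $N_a$-independent number of nonzero terms is a genuine and useful clarification the paper leaves implicit: with $\lambda_{w,j}=\xi_{w,j}^2N_wN_a$ growing like $N_a$ and $\abs{f}^2$ decaying like $1/N_a^2$ under fixed $\Delta_a$, each summand is $O(1/N_a)$, and boundedness of the number of contributing $j$'s is exactly what prevents the sum from stalling at $O(1)$.

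Where you part ways with the paper is the fixed-$L_a$ case. The paper treats both regimes uniformly: Alice keeps transmitting at full power on Bob's eigendirections, and the proof asserts via \eqref{case1}--\eqref{case2} that the left-hand side of the KL constraint vanishes without any power rescaling. You instead propose to ``scale the $q_i$'s with $N_a$ if necessary,'' citing Theorem~\ref{prop:one}; but Theorem~\ref{prop:one}'s proof contains no such rescaling, and, more to the point, reducing power makes the KL constraint hold trivially while forfeiting the purpose of the result. The theorem feeds the corollary that the covert capacity converges to the unconstrained capacity, and that conclusion requires the KL constraint to become inactive \emph{at full transmit power}, not at power shrinking to zero. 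So the fixed-$L_a$ branch of your argument is neither what the paper does nor what the downstream claim needs; it should be replaced by the paper's direct appeal to the limit \eqref{case1} with the full-power covariance left unchanged.
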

\begin{IEEEproof}The full proof can be found in Appendix \ref{App:proof_prop2}. We provide a brief sketch as follows. We rewrite the first-order approximation of the maximum covert coding rate using the angular domain representation of Bob's and Willie's channels. Then, we investigate the received pattern at Willie for a sufficiently large number of transmit antennas.
\end{IEEEproof}

It should be clear that the stated conditions in the previous theorem can not be easily satisfied. If so, Alice can asymptotically achieve the capacity of multi-path MIMO AWGN channels without being detected.

\section{Numerical Illustration}\label{numerical}

In this section, we provide some numerical results to illustrate the behavior of covert communication when either the number of transmit antennas or the blocklength scales up. Using the first-order approximation of the maximum coding rate of unit-rank MIMO AWGN channels, the number of achievable covert nats is given by:
\begin{equation}
n\,B\log\left(1+\frac{q\,\xi_b^2N_aN_b}{B\,\sigma_b^2}\right),\end{equation}
where $B$ is the occupied bandwidth, $q=\min\left\{P_{kl},P\right\}$ and \begin{equation}
P_{kl}=\frac{\sigma_w^2}{\xi_w^2N_a\,N_w|{f(\Omega)}|^2}\left[-W_{-1}\left(-e^{-\frac{2\,\delta_{kl}^2}{n}-1}\right)-1\right],
\end{equation}
while the achievable non-covert nats is given by: 
\begin{equation}
nB\log\left(1+\frac{P\,\xi_b^2N_a\,N_b}{B\,\sigma_b^2}\right).
\end{equation}

The simulation parameters are given as follows. The distance between Alice-Bob and Alice-Willie is $d=1$ km. We picked parameters that are typical in a millimeter-wave system, for which massive MIMO is a highly relevant regime. The LoS attenuation is $\xi_b^2=\xi_w^2=L\,d^{-\upsilon}$ where $\upsilon=2$ is the free-space path-loss exponent and $L=3.3\times 10^{-3}$ is the path-loss constant at frequency $73$ GHz. The noise power densities for both Bob and Willie are assumed to be equal and are given by $\sigma_b^2=\sigma_w^2=-174$ dBm/Hz. The maximum transmit power is $P=10$ dBm, the occupied bandwidth is $B=5$ MHz, $\Omega_w=\pi/4$, $\Omega_b=\pi/2$ and the probability of detection is $\delta_{kl}=10^{-2}$.

\begin{figure}[h]
 \centering\includegraphics[scale=0.41]{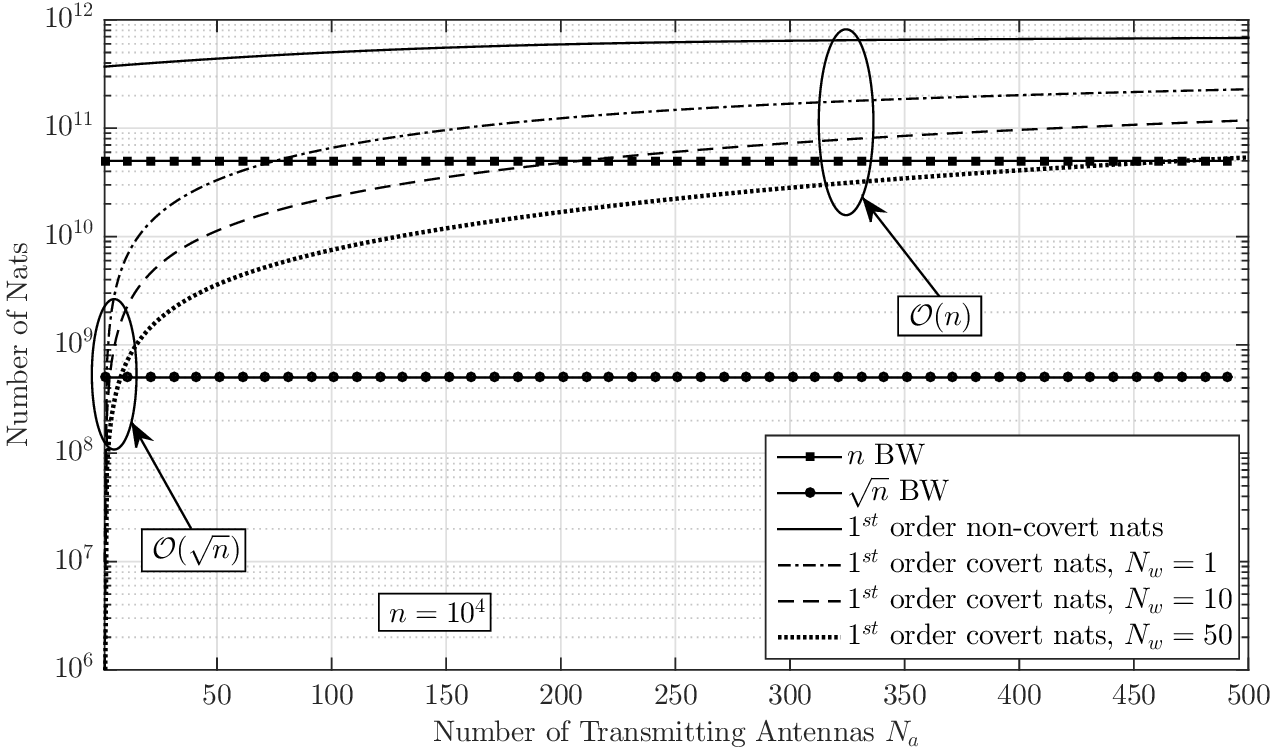}
 \caption{First-order approximation of maximum covert nats and non-covert nats vs. the number of transmit antennas.}
 \label{fig:nats_Na}
\end{figure}
\begin{figure}[h]
 \centering\includegraphics[scale=0.41]{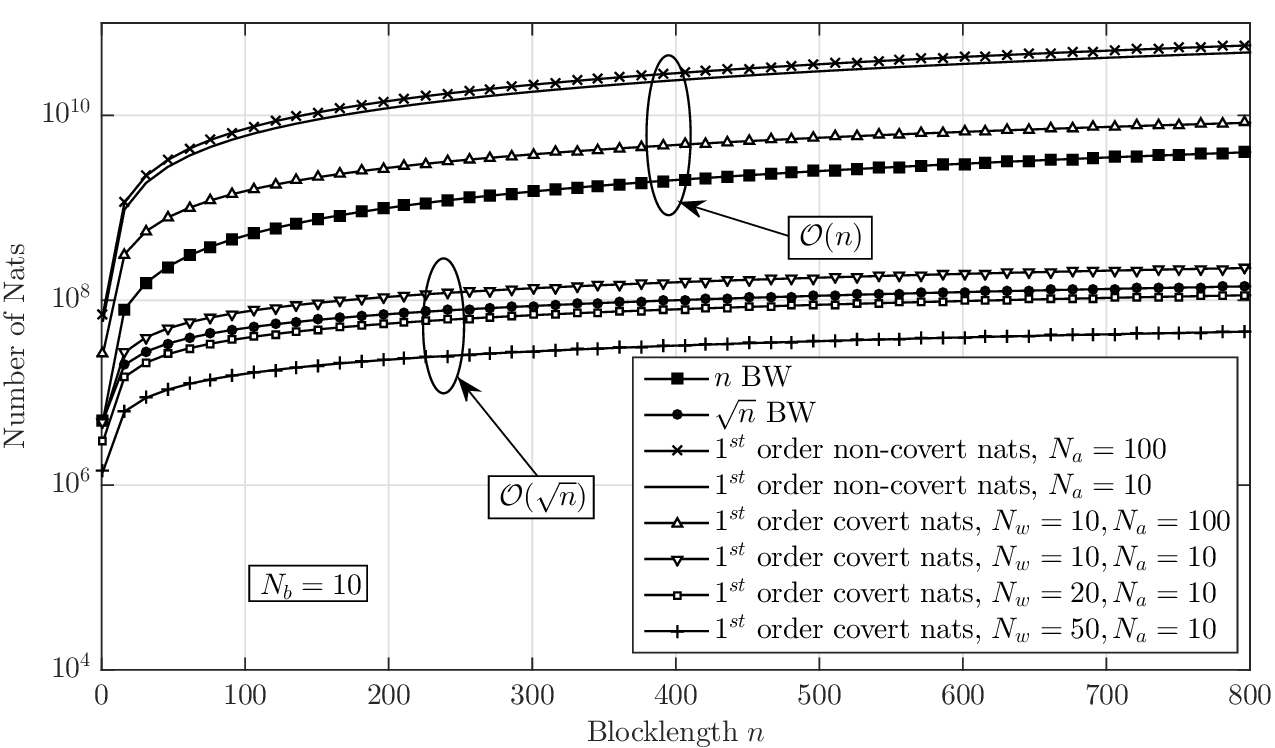}
 \caption{First-order approximation of maximum covert nats and non-covert nats vs. number of channel uses.}
 \label{fig:nats_n}
\end{figure}

Fig. \ref{fig:nats_Na} and \ref{fig:nats_n} compare the maximum number of achievable covert and non-covert nats that can be transmitted over unit-rank MIMO AWGN channels using the first-order approximation. Clearly, in Fig. \ref{fig:nats_Na}, the number of achievable covert nats converges to the number of achievable non-covert nats, for a blocklength, $n=10^4$, as the number of transmit antennas increases and for a different number of Willie's antennas, $N_w=1,10,50$. The number of achievable covert nats is $\mathcal{O}(n)$ instead of $\mathcal{O}(\sqrt{n})$. Although the number of antennas of Willie is increased from $1$ to $50$, the number of achievable covert nats is still $\mathcal{O}(n)$ for a large number of transmit antennas. In the case of a small number of transmit antennas, $N_a<10$, The number of achievable covert nats is $\mathcal{O}(\sqrt{n})$.

For a small number of transmit antennas, $N_a=10$, as in Fig. \ref{fig:nats_n}, the number of achievable covert nats is $\mathcal{O}(\sqrt{n})$ for a different number of Willie's antennas, $N_w=10,20,50$. In contrary, the number of achievable covert nats is $\mathcal{O}({n})$ for a fixed large number of transmit antennas, $N_a=100$.

\section{Conclusion}\label{sec:conclusion}
We studied the basic limits of covert communication over MIMO AWGN channels. One of the major findings is that, it is possible to achieve a positive covert capacity if the number of transmit antennas scales up sufficiently fast. We also showed that, in the massive MIMO limit, the covert capacity of MIMO AWGN channels is identical to its non-covert counterpart, as long as the illegitimate receiver is not aligned to any spatial transmit direction of the legitimate receiver. For an arbitrary number of transmit antennas, the achievable scheme involves the utilization of the null-space between the transmitter and the illegitimate receiver. 

On the other hand, the covert capacity is zero if there is no null-space between the transmitter and the illegitimate receiver. In this case, the maximum number of covert bits scales as $\mathcal{O}(\sqrt{n})$ with or without a secret codebook. Similar to communication with IS, covert communication without a secret codebook can be achieved only if the illegitimate receiver’s channel is noisier than the legitimate receiver’s channel. 

Future research could be directed towards investigating covert communication and communication with IS for delay-constrained applications, under different attack models, and exploiting artificial noise transmission to limit the illegitimate receiver’s detection capabilities.

\appendices
\section{Proof of Theorem \ref{thm:Kn_covert}}\label{App:proof_thm1}

\begin{IEEEproof}{\bf Converse:} 
We show that if there exists a sequence of $(2^{{\lceil nR\rceil}},n,\epsilon,\delta_{kl})$-codes with a vanishing average probability of decoding error, the scaling must be no greater than the right side of \eqref{eq:thm1}. Consider a given $(2^{\lceil nR\rceil},n,\epsilon,\delta_{kl})$-code\footnote{For a given codebook, ${X}_1,{X}_2,\dots,{X}_n$ are correlated, and hence, each of the observations ${Y}_{b,1},{Y}_{b,2},\dots,{Y}_{b,n}$ are correlated \cite{Wang_noncausal,Bloch_resolvability} and similarly ${Y}_{w,1},{Y}_{w,2},\dots,{Y}_{w,n}$.} that satisfies the KL constraint, $\mathcal{D}(\mathbb{P}_{\mathbf{Y}_w^n}\parallel \mathbb{P}_{\mathbf{Z}_w^n})\leq 2\,\delta_{kl}^2$, and $\lim_{n \rightarrow \infty} P^{(n)}_e=0$. 
\newline
{\bf Analysis:} 

1) {\em { Probability of decoding error analysis:}} Starting with Fano's and data processing inequalities: Let $R(n,\epsilon,\delta_{kl})=R$.
\begin{equation}\label{eq:converse1_prt1}
\begin{aligned}
nR=&\ H(M)=\ \mathrm{I}(M;{\bf Y}^n_b)+H(M|{\bf Y}^n_b)\\
\leq&\ \mathrm{I}(M;{\bf Y}^n_b)+n\,\epsilon_n
=\ h({\bf Y}^n_b)-h({\bf Y}^n_b|M)+n\,\epsilon_n\\
\overset{(a)}{\leq}&\ h({\bf Y}^n_b)-h({\bf Y}^n_b|M,{\bf X}^n)+n\,\epsilon_n\\
\overset{(b)}{=}&\ h({\bf Y}^n_b)-h({\bf Y}^n_b-{\bf H}_b\,{\bf X}^n|M,{\bf X}^n)+n\,\epsilon_n\\
=&\ h({\bf Y}^n_b)-h({\bf Z}^n_b|M,{\bf X}^n)+n\,\epsilon_n\\
\overset{(c)}{=}&\ h({\bf Y}^n_b)-h({\bf Z}^n_b)+n\,\epsilon_n\\
=&\ h({\bf Y}^n_b)-n\,\log\abs{\pi\,{\bf\Sigma}_b}+n\,\epsilon_n\\
\overset{(d)}{\leq}&\ \sum_{i=1}^nh({\bf Y}_{b,i})-n\,\log\abs{\pi\,{\bf\Sigma}_b}+n\,\epsilon_n\\
\overset{(e)}{\leq}&\ \sum_{i=1}^n\,\log\abs{\pi\,({\bf H}_b\,{\bf Q}_n\,{\bf H}_b ^\dag+{\bf\Sigma}_b)}-n\,\log\abs{\pi\,{\bf\Sigma}_b}+n\,\epsilon_n\\
=&\ n\,\log\abs{\pi\,({\bf H}_b\,{\bf Q}_n\,{\bf H}_b ^\dag+{\bf\Sigma}_b)}-n\,\log\abs{\pi\,{\bf\Sigma}_b}+n\,\epsilon_n\\
=&\ n\,\log\abs{\frac{1}{\sigma_b^2}\,{\bf H}_b\,{\bf Q}_n\,{\bf H}_b ^\dag+{\bf I}_{N_b}}+n\,\epsilon_n\\
\overset{(f)}{=}&\ n\,C_{c}({\bf Q}_n)+n\,\epsilon_n,
\end{aligned}\end{equation}
where $n\,\epsilon_n=1+P^{(n)}_e\,nR(n,\epsilon,\delta_{kl})$, $\epsilon_n$ tends to zero as $n$ goes to infinity by the assumption that $\lim_{n \rightarrow \infty} P^{(n)}_e=0$, $\mathbf{Q}_n$ is a decreasing function of $n$, ${\bf H}$ is stationary over $n$ channel uses, (a) follows since conditioning does not increase the entropy, (b) since translation does not change the entropy, (c) since the noise is independent of both the message and the transmitted codeword, (d) follows from the chain rule of entropy and removing conditioning, (e) follows since the maximum differential entropy of a continuous random vector is attained when the random vector has a zero-mean circularly symmetric complex Gaussian distribution with a covariance matrix, $\mathbf{Q}_n$, that attains the maximum, and (f) by the definition of the Gaussian vector channel capacity, while the optimal input covariance matrix, $\mathbf{Q}_n$, is chosen such that the KL constraint is satisfied.

2) {\em {KL constraint analysis:}} Following a similar approach of \cite{Hou_dependentRV,wang2016fundamental}:
\begin{equation}
\begin{aligned}\label{eq:converse1_prt2}
2\,\delta_{kl}^2\geq&\ \mathcal{D}(\mathbb{P}_{\mathbf{Y}_w^n}\parallel \mathbb{P}_{\mathbf{Z}_w^n})\\
=&\ -h({\bf Y}^n_w)+\mathbb{E}_{\mathbb{P}_{\mathbf{Y}_w^n}}\left[\log\frac{1}{f_{\mathbf{Z}_w^n}({\bf Z}^n_w)}\right]\\
=&\ -h({\bf Y}^n_w)+\sum_{i=1}^n\mathbb{E}_{\mathbb{P}_{\mathbf{Y}_{w,i}}}\left[\log\frac{1}{f_{\mathbf{Z}_w}({\bf Z}_{w,i})}\right]\\
\overset{(a)}{\geq}&\ \sum_{i=1}^n-h({\bf Y}_{w,i})+\sum_{i=1}^n\mathbb{E}_{\mathbb{P}_{\mathbf{Y}_{w,i}}}\left[\log\frac{1}{f_{\mathbf{Z}_w}({\bf Z}_{w,i})}\right]\\
=&\ \sum_{i=1}^n\,\mathcal{D}(\mathbb{P}_{\mathbf{Y}_{w,i}}\parallel \mathbb{P}_{\mathbf{Z}_{w,i}})\\
\overset{(b)}{\geq}&\ n\,\mathcal{D}(\bar{\mathbb{P}}_{\mathbf{Y}_w}\parallel \mathbb{P}_{\mathbf{Z}_w}),
\end{aligned}\end{equation}
where $\bar{\mathbb{P}}_{\mathbf{Y}_w}=\frac{1}{n}\sum_{i=1}^n\mathbb{P}_{\mathbf{Y}_{w,i}}$, (a) follows from the chain rule of entropy and removing conditioning, and (b) follows since the KL divergence, $\mathcal{D}(\mathbb{P}_{\mathbf{Y}_w}\parallel \mathbb{P}_{\mathbf{Z}_w})$, is convex in $\mathbb{P}_{\mathbf{Y}_w}$. Consequently, the constraint, $\mathcal{D}(\mathbb{P}_{\mathbf{Y}_w}\parallel \mathbb{P}_{\mathbf{Z}_w})\leq \frac{2\,\delta_{kl}^2}{n}$, is satisfied. Hence, the input distribution that maximizes the first-order approximation of the covert coding rate in \eqref{eq:converse1_prt1}, while minimizing $\mathcal{D}(\mathbb{P}_{\mathbf{Y}_w}\parallel \mathbb{P}_{\mathbf{Z}_w})$, in \eqref{eq:converse1_prt2}, is the zero-mean circularly symmetric complex Gaussian distribution with a covariance matrix $\mathbf{Q}_n$.
\end{IEEEproof}
 
\begin{IEEEproof} {\bf Achievability:} 

In this part, we show that if the scaling is less than the right side of \eqref{eq:thm1}, there exists a sequence of $(2^{{\lceil nR\rceil}},n,\epsilon,\delta_{kl})$-codes with a vanishing average probability of decoding error. 

\noindent{\bf 1- Choosing a sequence of distributions:} Alice chooses a sequence of input distributions $\left\{f_{\mathbf{X}}\right\}$, $\mathbf{X}\sim\mathcal{CN}({\bf 0},{\bf Q}_n)$, such that the sequence of output distributions $\left\{f_{\mathbf{Y}_{u}}\right\}$, $\mathbf{Y}_u\sim\mathcal{CN}({\bf 0},{\bf\Sigma}_{{\mathbf Y}_u}={\bf H}_u\,{\bf Q}_n\,{\bf H}_u ^\dag+{\bf\Sigma}_u)$, $u\in\left\{b,w\right\},$ satisfies an average probability of decoding error, $P^{(n)}_e$, at Bob and the constraint $\mathcal{D}(\mathbb{P}_{\mathbf{Y}_w}\parallel \mathbb{P}_{\mathbf{Z}_w})\leq \frac{2\,\delta_{kl}^2}{n}$ at Willie. 

\noindent{\bf 2- Codebook generation:} We generate a random codebook using i.i.d. $2^{\lceil nR\rceil}$ sequences, ${\bf x}^n(m)$, $m\in\mathcal{M}_n$, according to $f_{\mathbf{X}^n}(\mathbf{x}^n)=\prod_{i=1}^n f_{\mathbf{X}}(\mathbf{x}_i)$. The codebook is kept secret, according to Definition \ref{secret codebook}, between Alice and Bob.

\noindent{\bf 3- Average power constraint:} For an arbitrary small $\delta_{kl}\geq 0$, the average power constraint is not active. For reasoning about satisfying this constraint under the single-letter KL constraint, see the discussion in Subsection \ref{using the product distribution}. 

\noindent{\bf 4- Output distributions:} Since each codeword is generated randomly according to a product distribution $f_{\mathbf{X}^n}$ and the channel is memoryless according to \eqref{memoryless}, the output distribution is i.i.d. according to the product distribution $f_{\mathbf{Y}_u^n}(\mathbf{y}_u^n)=\prod_{i=1}^n f_{\mathbf{Y}_u}(\mathbf{y}_{u,i}),\ u\in\left\{b,w\right\}$. For reasoning about using the product distribution, see the discussion in Subsection \ref{using the product distribution}. 

\noindent{\bf 5- KL constraint at Willie:} The KL constraint is satisfied since the output distribution at Willie is i.i.d. Thus, $\mathcal{D}(\mathbb{P}_{\mathbf{Y}_w^n}\parallel \mathbb{P}_{\mathbf{Z}_w^n})=n\mathcal{D}(\mathbb{P}_{\mathbf{Y}_w}\parallel\mathbb{P}_{\mathbf{Z}_w})\leq 2\delta_{kl}^2$.

\noindent{\bf 6- Vanishing probability of decoding error at Bob:} We generalize the proof of the achievability of Theorem 1 in \cite{wang2016fundamental} for DMC channels to prove the remaining part of the achievability for MIMO AWGN channels. There exists a sequence of $(2^{{\lceil nR\rceil}},n,\epsilon,\delta_{kl})$-codes with a vanishing average probability of decoding error, if the scaling is less than the right side of \eqref{eq:thm1}. Hence, the sequence $\left\{\sqrt{n}R(n,\epsilon,\delta_{kl})\right\}$ is achievable if the random sequence $\left\{\frac{1}{\sqrt{n}}\mathbb{I}({\bf X}^n;{\bf Y}_b^n)\right\}$ converges to $\sqrt{n}\,C_{c}({\bf Q}_n)$ in probability, i.e., the following equation is satisfied:
\begin{equation}\label{eq:P_lim_inf}
\lim_{n \rightarrow \infty} Pr\left[\abs{\frac{1}{\sqrt{n}}\mathbb{I}({\bf X}^n;{\bf Y}^n_b)-\sqrt{n}\,C_{c}({\bf Q}_n)}\geq t\right]=0,\ t>0.
\end{equation}
 
To complete the achievability proof, we exploit Chebyshev’s inequality. We start by calculating $\frac{1}{\sqrt{n}}\mathbb{I}({\bf X}^n;{\bf Y}^n_b)$ and its moments as follows: Let $\mathbb{I}({\bf X}^n;{\bf Y}^n_b)=\mathbb{I}$
\begin{equation}\begin{aligned}\label{reviewer}
\mathbb{I}=&\log\frac{f_{\mathbf{Y}_{\it b}^n|\mathbf{X}^n}(\mathbf{y}_b^n|\mathbf{x}^n)}{f_{\mathbf{Y}_b^n}(\mathbf{y}_b^n)}= \log\prod_{i=1}^n\frac{ f_{\mathbf{Y}_{b}|\mathbf{X}}(\mathbf{y}_{b,i}|\mathbf{x}_i)}{f_{\mathbf{Y}_{b}}(\mathbf{y}_{b,i})}\\
=&\sum_{i=1}^n\,\log\frac{ f_{\mathbf{Y}_{b}|\mathbf{X}}(\mathbf{y}_{b,i}|\mathbf{x}_i)}{f_{\mathbf{Y}_{b}}(\mathbf{y}_{b,i})}\\
=&\sum_{i=1}^n\,\log\abs{ \mathbf{\Sigma}_{\mathbf{Y}_b}\,{\mathbf{\Sigma}_b}^{-1}}\\
+&\sum_{i=1}^n \mathbf{y}_{b,i}^{ \dag} \,{\bf\Sigma}_{\mathbf{Y}_b}^{-1}\,\mathbf{y}_{b,i} -(\mathbf{y}_{b,i} -\mathbf{H}_b\,\mathbf{x}_i)^{ \dag}\, {\bf\Sigma}_b^{-1} (\mathbf{y}_{b,i} -\mathbf{H}_b\,\mathbf{x}_i),
\end{aligned}\end{equation}
where $$f_{\mathbf{Y}_{b}|\mathbf{X}}(\mathbf{y}_{b,i}|\mathbf{x}_i)=\frac{\exp\left(-(\mathbf{y}_{b,i} -\mathbf{H}_b\mathbf{x}_i)^{ \dag} {\bf\Sigma}_b^{-1}(\mathbf{y}_{b,i} -\mathbf{H}_b\mathbf{x}_i)\right)}{\abs{\pi\mathbf{\Sigma}_b}^{-1} },$$ and $$f_{\mathbf{Y}_{b}}(\mathbf{y}_{b,i})= \abs{\pi\,\mathbf{\Sigma}_{\mathbf{Y}_b}}^{-1} \exp\left(-\mathbf{y}_{b,i}^{ \dag} \,{\bf\Sigma}_{\mathbf{Y}_b}^{-1}\,\mathbf{y}_{b,i} \right).$$
Then,
\begin{equation}\begin{aligned}
\mathbb{E}\left[\mathbb{I}\right]=&\ \sum_{i=1}^n\,\log\abs{ \mathbf{\Sigma}_{\mathbf{Y}_b}\,{\mathbf{\Sigma}_b}^{-1}}\\
+&\ \sum_{i=1}^n\mathbb{E}\left[\left(\mathbf{tr}\left(\mathbf{\Sigma}^{-1}_{\mathbf{Y}_b}\mathbf{Y}_{b,i}\,\mathbf{Y}_{b,i}^{ \dag}\right)-\mathbf{tr}\left(\mathbf{\Sigma}_b^{-1}\,\mathbf{Z}_{b,i}\,\mathbf{Z}_{b,i}^{\dag}\right)\right)\right]\\
=&\ n\,\log\abs{ \mathbf{\Sigma}_{\mathbf{Y}_b}\,{\mathbf{\Sigma}_b}^{-1}}\\
+&\ \sum_{i=1}^n\left(\mathbf{tr}\left(\mathbf{\Sigma}^{-1}_{\mathbf{Y}_b}\,\mathbf{\Sigma}_{\mathbf{Y}_b}\right)-\mathbf{tr}\left(\mathbf{\Sigma}_b^{-1}\,\mathbf{\Sigma}_b\right)\right)\\
=&\ n\,\log\abs{\frac{1}{\sigma_b^2}\,{\bf H}_b\,{\bf Q}_n\,{\bf H}_b ^\dag+{\bf I}_{N_b}}= n\,C_{c}({\bf Q}_n),
\end{aligned}
\end{equation}
\begin{equation}\begin{aligned}
\mathbf{var}\left[\frac{1}{\sqrt{n}}\mathbb{I}\right]=&\ \frac{1}{n}\,\mathbf{var}\left[\sum_{i=1}^n\,\log\abs{ \mathbf{\Sigma}_{\mathbf{Y}_b}\,{\mathbf{\Sigma}_b}^{-1}}+\sum_{i=1}^n\,V_i\right]\\
=&\ \frac{1}{n}\,\sum_{i=1}^n\mathbf{var}\left[\log\abs{ \mathbf{\Sigma}_{\mathbf{Y}_b}\,{\mathbf{\Sigma}_b}^{-1}}+V_i\right]=\mathbb{E}\left[\,V^2\right],
\end{aligned}
\end{equation}
where, $ \forall i\in\left\{1,\dots,n\right\}$,
\begin{equation}\begin{aligned}
v_i=&\ (\mathbf{H}_b\,\mathbf{x}_i+\mathbf{z}_{b,i})^{ \dag}\,{\bf\Sigma}_{\mathbf{Y}_b}^{-1}\,(\mathbf{H}_b\,\mathbf{x}_i+\mathbf{z}_{b,i}) -\mathbf{z}_{b,i}^{ \dag}\,{\bf\Sigma}_b^{-1}\,\mathbf{z}_{b,i}\\
=&\ \mathbf{x}_i^{ \dag}\,\mathbf{H}_b^{ \dag}\, {\bf\Sigma}_{\mathbf{Y}_b}^{-1}\,\mathbf{H}_b\,\mathbf{x}_i+2\,\mathbf{x}_i^{ \dag}\,\mathbf{H}_b^{ \dag}\, {\bf\Sigma}_{\mathbf{Y}_b}^{-1}\,\mathbf{z}_{b,i}\\
+&\ \mathbf{z}_{b,i}^{ \dag}\,\left({\bf\Sigma}_{\mathbf{Y}_b}^{-1}-{\bf\Sigma}_b^{-1}\right)\mathbf{z}_{b,i}\\
\overset{(a)}{=}&\ \mathbf{x}_i^{ \dag}\,\mathbf{H}_b^{ \dag}\, {\bf\Sigma}_{\mathbf{Y}_b}^{-1}\,\mathbf{H}_b\,\mathbf{x}_i+2\,\mathbf{x}_i^{ \dag}\,\mathbf{H}_b^{ \dag}\, {\bf\Sigma}_{\mathbf{Y}_b}^{-1}\,\mathbf{z}_{b,i}\\
-&\ \frac{1}{\sigma^2_b}\,\mathbf{z}_{b,i}^{ \dag}\,\mathbf{H}_b\left(\mathbf{Q}_n^{-1}\,\sigma^2_b+\mathbf{H}_b^{ \dag}\,\mathbf{H}_b\right)^{-1}\,\mathbf{H}_b^{ \dag}\,\mathbf{z}_{b,i},
\end{aligned}
\end{equation}
and (a) follows by matrix inversion lemma. Since $\mathbf{Q}_n$ tends to $\mathbf{0}$ as $n$ goes to infinity, ${\bf x}_i$, $ \forall i$, vanishes to a zero vector and ${\bf\Sigma}_{\mathbf{Y}_b}^{-1}$ goes to $\sigma^2_b\,{\bf I}_{N_b}$. Hence, ${v}_i$, $ \forall i$, goes to zero and therefore, $\lim_{n \rightarrow \infty} \mathbf{var}\left[\frac{1}{\sqrt{n}}\mathbb{I}({\bf X}^n;{\bf Y}^n_b)\right]=0$ by the bounded convergence theorem. Consequently, using Chebyshev’s inequality: $\lim_{n \rightarrow \infty}Pr\left[\abs{\frac{1}{\sqrt{n}}\mathbb{I}({\bf X}^n;{\bf Y}^n_b)-\sqrt{n}\,C_{c}({\bf Q}_n)}\geq u\right]\leq\frac{1}{u^2}\lim_{n \rightarrow \infty} \mathbf{var}\left[\frac{1}{\sqrt{n}}\mathbb{I}({\bf X}^n;{\bf Y}^n_b)\right]=0$.
\end{IEEEproof}


\section{Power Allocation}\label{App:power allocation}

The power allocation for MIMO AWGN channels is known to be water-filling across non-zero eigen-directions of Bob's channel \cite{tse2005fundamentals,elgamal_kim_2011}. With the KL constraint, Alice can transmit at full power in the zero eigen-directions of Willie's channel if Bob is not aligned to any of these directions. Thus, we have three different transmission regimes:
\begin{itemize}
 \item Regime 1) Zero eigen-directions of Bob's channel: No transmission.

 \item Regime 2) Non-zero eigen-directions of Bob's channel and zero eigen-directions of Willie's channel: Full power transmission, i.e., the covert capacity is positive.
 
 \item Regime 3) Non-zero eigen-directions of both channels: KL-constrained transmission, i.e., the covert capacity is zero. 
 \end{itemize}
In this appendix, we provide an illustration of these regimes under the KL constraint via solving the power allocation problem in \eqref{eq:thm1}, which is maximizing the first-order approximation of the maximum covert coding rate for a finite blocklength. For an arbitrary number of transmit antennas, Alice can achieve a positive covert capacity in Regime 2 via exploiting the null-space of Willie's channel. When there is not any zero eigen-direction of Willie's channel as in Regime 3, the covert capacity is zero. To give insight into the power allocation problem, we consider the optimization problem in \eqref{eq:thm1} as follows:
\begin{equation}
\begin{aligned}
&\max_{\substack{\mathbf{Q}_{\it n} \succeq \mathbf{0} \\ 
\mathbf{tr}(\mathbf{Q}_{\it n}) \leq P}}
& & \mathrm{\log\abs{\frac{1}{\sigma_{\it b}^2}\,{\bf H}_{\it b}\,{\bf Q}_{\it n}\,{\bf H}_{\it b} ^\dag+{\bf I}_{N_{\it b}}}} \\
& \text{subject to:}
& &\mathcal{D}(\mathbb{P}_{\mathbf{Y}_w}\parallel\mathbb{P}_{\mathbf{Z}_w})\leq\frac{2\,\delta_{kl}^2}{n}.
\end{aligned}
\end{equation}

This problem is a convex optimization problem\footnote{The KL divergence $\mathcal{D}(\mathbb{P}_{\mathbf{Y}_w}\parallel\mathbb{P}_{\mathbf{Z}_w})$ is convex in $\mathbb{P}_{\mathbf{Y}_w}$, and hence, in $\mathbf{Q}_n$.}. Moreover, the Slater's condition holds if $\mathbf{Q}_n$ is chosen to be zero except the first diagonal element, $q_{1}$, is chosen to satisfy $\frac{q_{1}\,\abs{h_{w,1}}^2}{\sigma_w^2}-\log\left(\frac{q_{1}\,\abs{h_{w,1}}^2}{\sigma_w^2}+1\right)$ $\leq\frac{2\,\delta_{kl}^2}{n}$ where $h_{w,1}$ is the first diagonal element of ${\bf H}_w$. Hence, $\mathbf{Q}_n$ satisfies all constraints and the Slater's condition holds. Therefore, there is no duality gap and KKT conditions are necessary and sufficient for optimality. To write the Lagrangian function, we represent the single-letter KL divergence as follows: Let $\mathcal{D}(\mathbb{P}_{\mathbf{Y}_w}\parallel\mathbb{P}_{\mathbf{Z}_w})=\mathcal{D}$.
\begin{equation}\begin{aligned}\label{KL_divergance}
\mathcal{D}&= \mathbb{E}_{\mathbb{P}_{\mathbf{Y}_w}} \left[\log f_{\mathbf{Y}_w}({\mathbf{Y}_w}) - \log f_{\mathbf{Z}_w}({\mathbf{Z}_w}) \right] \\
&= \log{\abs{{\bf\Sigma}_{\mathbf{Y}_w}^{-1}{\bf\Sigma}_w}} +\mathbb{E}_{\mathbb{P}_{\mathbf{Y}_w}} \left[\mathbf{Z}_{w}^{ \dag}\, {\bf\Sigma}_w^{-1}\,\mathbf{Z}_{w} -\mathbf{Y}_{w}^{ \dag}\, {\bf\Sigma}_{\mathbf{Y}_w}^{-1}\,\mathbf{Y}_{w} \right] \\
&=-\log\abs{\frac{1}{\sigma_w^2}\,{\bf H}_w\,{\bf Q}_n\,{\bf H}_w ^\dag+{\bf I}_{N_w}}+\mathbf{tr}\left(\frac{1}{\sigma_w^2}\,{\bf H}_w\,{\bf Q}_n\,{\bf H}_w ^\dag\right),
\end{aligned} \end{equation}
where
\begin{equation}\begin{aligned}
\mathbb{E}_{\mathbb{P}_{\mathbf{Y}_w}}& \left[\mathbf{Z}_{w}^{ \dag}\, {\bf\Sigma}_w^{-1}\,\mathbf{Z}_{w} -\mathbf{Y}_{w}^{ \dag}\, {\bf\Sigma}_{\mathbf{Y}_w}^{-1}\,\mathbf{Y}_{w} \right]\\
&=\mathbb{E}_{\mathbb{P}_{\mathbf{Y}_w}}\left[\mathbf{tr}\left(\mathbf{\Sigma}_w^{-1}\,\mathbf{Z}_{w}\,\mathbf{Z}_{w}^{\dag}\right)-\mathbf{tr}\left(\mathbf{\Sigma}^{-1}_{\mathbf{Y}_w}\,\mathbf{Y}_{w}\,\mathbf{Y}_{w}^{ \dag}\right)\right]\\
&=\mathbf{tr}\left(\mathbf{\Sigma}_w^{-1}\,\mathbf{\Sigma}_{\mathbf{Y}_w}\right)-\mathbf{tr}\left(\mathbf{\Sigma}^{-1}_{\mathbf{Y}_w}\,\mathbf{\Sigma}_{\mathbf{Y}_w}\right)\\
&=\mathbf{tr}\left(\frac{1}{\sigma_w^2}\,{\bf H}_w\,{\bf Q}_n\,{\bf H}_w ^\dag+{\bf I}_{N_w}\right)-N_w\\
&=\mathbf{tr}\left(\frac{1}{\sigma_w^2}\,{\bf H}_w\,{\bf Q}_n\,{\bf H}_w ^\dag\right),
\end{aligned} \end{equation}
\begin{equation}\begin{aligned}
f_{\mathbf{Z}_w}({\mathbf{z}_w}) &= {\abs{\pi\,\mathbf{\Sigma}_w}^{-1} \exp\left(-\mathbf{z}_{w}^{ \dag}\, {\bf\Sigma}_w^{-1}\,\mathbf{z}_{w}\right)},\\
f_{\mathbf{Y}_w}({\mathbf{y}_w}) &= {\abs{\pi\,\mathbf{\Sigma}_{\mathbf{Y}_w}}^{-1} \exp\left(-\mathbf{y}_{w}^{ \dag}\, {\bf\Sigma}_{\mathbf{Y}_w}^{-1}\,\mathbf{y}_{w} \right)}.
\end{aligned}
\end{equation}

Accordingly, the Lagrangian function is given by:
\begin{equation}\label{lagrange_function}
\begin{aligned}
\mathcal{L}(\mathbf{Q}_n,\mu,\eta)=&\ \log \abs{\frac{1}{\sigma_{\it b}^2}\,{\bf Q}_n\,{\bf H}_{\it b} ^\dag\,{\bf H}_{\it b}+{\bf I}_{N_{\it a}}} - \mu\, (\mathbf{tr}(\mathbf{Q}_n) -P)\\
-&\ \eta \,\log\abs{\frac{1}{\sigma_w^2}\,{\bf Q}_n\,{\bf H}_w ^\dag\,{\bf H}_w+{\bf I}_{N_a}}\\+&\ \eta\,\mathbf{tr}\left(\frac{1}{\sigma_w^2}\,{\bf Q}_n\,{\bf H}_w ^\dag\,{\bf H}_w\right)- \frac{2\,\eta\,\delta_{kl}^2}{n}.
\end{aligned}
\end{equation}

From the KKT conditions, the Lagrangian function is optimal in $\mathbf{Q}_n^*$, i.e., $$\bigtriangledown_{\mathbf{Q}_n}\mathcal{L}(\mathbf{Q}_n,\mu^*,\eta^*)|_{\mathbf{Q}_n=\mathbf{Q}_n^*}=\mathbf{0}.$$ Hence, we calculate the gradient of the Lagrangian function and let it equal to a zero matrix of the same dimension as follows:
\begin{equation}\begin{aligned}\label{eq:primal_optimality}
\left[{\bf Q}_n^*+\sigma_{\it b}^2\left({\bf H}_{\it b} ^\dag\,{\bf H}_{\it b}\right)^{-1} \right]^{-1}&+ \eta \left[{\bf Q}_n^*+\sigma_{\it w}^2\left({\bf H}_{\it w} ^\dag\,{\bf H}_{\it w}\right)^{-1} \right]^{-1}\\
&- \mu\,\mathbf{I}_{N_a}- \frac{\eta}{\sigma_{\it w}^2}{\bf H}_{\it w} ^\dag\,{\bf H}_{\it w}=\mathbf{0}.
\end{aligned}\end{equation}

Using the generalized singular value decomposition as in \cite{Paige_GSVD}, there exist unitary matrices ${\bf V}_{u}\in \mathbb{C}^{N_u\,\times N_u}$, $u\in\left\{b,w\right\}$, ${\bf C}\in \mathbb{C}^{N_a\,\times N_a}$ and a non-singular upper triangular matrix ${\bf R}\in \mathbb{C}^{S\,\times S}$, such that 
\begin{equation}\begin{aligned}\label{GSVD}
{\bf H}_{\it w}=&{\bf V}_{\it w}\,{\bf D}_{\it w}\,{\bf U},\\
{\bf H}_{\it b}=&{\bf V}_{\it b}\,{\bf D}_{\it b}\,{\bf U},
\end{aligned}
\end{equation}
where 
\begin{equation}
{\bf U}= \left[{\bf 0}\ {\bf R}\right]\,{\bf C}^\dag,
\end{equation}
and ${\bf D}_{u}=\mathbf{diag}(\lambda_{u,1}^{\frac{1}{2}},\dots,\lambda_{u,N}^{\frac{1}{2}})\in \mathbb{R}^{N_u\,\times N}$, $\lambda_{u,i}\geq 0,\ \forall i\in\left\{1,\dots,N\right\}$. Consequently, the gradient of the Lagrangian function can be rewritten as:
\begin{equation}
\begin{aligned}
\label{eq:projection_to_Hw}
\left[\tilde{\bf Q}_n^*+\sigma_{\it b}^2{\bf \Lambda}_{\it b}^{-1}\right]^{-1}+ \eta \left[\tilde{\bf Q}_n^*+\sigma_{\it w}^2{\bf \Lambda}_{\it w}^{-1}\right]^{-1}
= \mu\mathbf{I}_{N_a}+\frac{\eta}{\sigma_{\it w}^2}{\bf \Lambda}_{\it w},
\end{aligned}
\end{equation}
where $\tilde{\bf Q}_n^*={\bf U}\,{\bf Q}_n^*\,{\bf U}^\dag$ and ${\bf \Lambda}_{u}={\bf D}_{u}^T\,{\bf D}_{u}=\mathbf{diag}(\lambda_{u,1},\dots,\lambda_{u,N})\in \mathbb{R}^{N\,\times N}$, $\lambda_{u,i}\geq 0,\ \forall i\in\left\{1,\dots,N\right\}$. Since $\eta$ is non-negative and all matrices in \eqref{eq:projection_to_Hw} are positive semi-definite, $\tilde{\bf Q}_n^*$ is a diagonal matrix. Also, by Hadamard's inequality, ${\tilde{\mathbf{Q}}_n}^*$ is a diagonal matrix since the KL constraint is an average power constraint \cite{wang2016fundamental}. Hence, the gradient in \eqref{eq:projection_to_Hw} can be decomposed as follows: 
\begin{equation}
\begin{aligned}
\label{eq:projection_to_Hw_scalar}
&\left[q^*_i+\frac{\sigma_{\it b}^2}{{\lambda}_{b,i}}\right]^{-1}+ \eta \left[q^*_i+\frac{\sigma_{\it w}^2}{{\lambda}_{w,i}}\right]^{-1}= \mu +\frac{\eta\,{\lambda}_{\it w,i}}{\sigma_{\it w}^2},
\end{aligned}
\end{equation}
$\forall i\in\left\{1,\dots,N\right\}$. Therefore, the optimal power allocation in each eigen-direction, $q^*_i,\ \forall i\in\left\{1,\dots,N\right\}$, is given by solving \eqref{eq:projection_to_Hw_scalar} as follows:
\begin{subnumcases}{q^*_i=}
 \frac{1}{2}\left[g_i+h_i\right]^+, & ${\lambda}_{w,i}\neq 0\ \mbox{and}\ {\lambda}_{b,i}\neq 0,$ \label{powerAllocation1}
 \\
 \left[\frac{1}{\mu}-\frac{\sigma_{\it b}^2}{{\lambda}_{b,i}}\right]^+, & ${\lambda}_{w,i}= 0\ \mbox{and}\ {\lambda}_{b,i}\neq 0,$ \label{powerAllocationNull}\\
 0, & $\mbox{otherwise},$\label{NoTransmission} 
\end{subnumcases}
$\forall i\in\left\{1,\dots,N\right\}$, where
\begin{equation}
 g_i=\sqrt{\left(\frac{\sigma_{\it w}^2\left(\eta+1\right)}{\sigma_{\it w}^2\,\mu +\eta\,{\lambda}_{\it w,i}}-a_i\right)^2+\frac{4a_i\sigma_{\it w}^2}{\sigma_{\it w}^2\,\mu +\eta\,{\lambda}_{\it w,i}}},
\end{equation}
\begin{equation}
    h_i=\frac{\sigma_{\it w}^2\left(\eta+1\right)}{\sigma_{\it w}^2\,\mu +\eta\,{\lambda}_{\it w,i}}-\left(\frac{\sigma_{\it b}^2}{{\lambda}_{b,i}}+\frac{\sigma_{\it w}^2}{{\lambda}_{w,i}}\right),
\end{equation}
and
\begin{equation}\label{a_i}
 a_i=\frac{\sigma_{\it w}^2}{{\lambda}_{w,i}}-\frac{\sigma_{\it b}^2}{{\lambda}_{b,i}},
\end{equation}

In addition, $\mu$ and $\eta$ are chosen such that they are feasible, i.e., $\mu\geq 0$ and $\eta\geq 0$, and satisfy the complementary slackness conditions: 1) $\mu^* \left(\mathbf{tr}(\mathbf{Q}_n^*) -P\right) = 0$ and 2) $\eta^*\left({\mathcal D}(\mathbf{Q}_n^*)-\frac{2\,\delta_{kl}^2}{n}\right) = 0$. Thus,
\begin{equation} \label{mu_power_allocation}
\sum_{i=1}^{N}q_i^*
\begin{cases}
=P,&\mu>0,\\
<P,&\mu=0,
\end{cases}
\end{equation}
and
\begin{equation}\label{eta_power_allocation}
\sum_{i=1}^{N}\left[{\frac{q^*_i\,{\lambda}_{w,i}}{\sigma_w^2}}-\log{\left(\frac{q^*_i\,{\lambda}_{w,i}}{\sigma_w^2}+1\right)}\right]
\begin{cases}
=\frac{2\,\delta_{kl}^2}{n},&\eta>0,\\
<\frac{2\,\delta_{kl}^2}{n},& \eta=0,
\end{cases}
\end{equation}
where the KL divergence in the right-hand side is calculated in Appendix \ref{App:KL_divergence}.

In \eqref{NoTransmission}, there is no transmission since $\lambda_{b,i}=0$, which represents Regime 1. In the meanwhile, the full power transmission is given by water-filling across the non-zero eigen-directions of Bob's channel as in \eqref{powerAllocationNull} when the corresponding eigen-directions of Willie's channel are zero, i.e., a null-space exists. This case represents Regime 2. Besides, when the eigen-directions of both channels are non-zero, the covert capacity is zero and the power allocation, in this case, is given in \eqref{powerAllocation1}, which represents Regime 3.

\section{Calculating the Single-Letter KL Divergence}\label{App:KL_divergence}
Using \eqref{KL_divergance}, the single-letter KL divergence is calculated as follows: Let $\mathcal{D}(\mathbb{P}_{\mathbf{Y}_w}\parallel\mathbb{P}_{\mathbf{Z}_w})=\mathcal{D}$.
\begin{equation}\begin{aligned}
\mathcal{D}&=\mathbf{tr}\left(\frac{1}{\sigma_w^2}\,{\bf H}_w\,{\bf Q}_n\,{\bf H}_w ^\dag\right)-\log\abs{\frac{1}{\sigma_w^2}\,{\bf H}_w\,{\bf Q}_n\,{\bf H}_w ^\dag+{\bf I}_{N_w}}\\
&=\sum_{i=1}^{N}\left[{\frac{q_i\,{\lambda}_{w,i}}{\sigma_{\it w}^2}}-\log\left(\frac{q_i\,{\lambda}_{w,i}}{\sigma_{\it w}^2}+1\right)\right],
\end{aligned} \end{equation}
where 
\begin{equation}\begin{aligned}
\log&\abs{\frac{1}{\sigma_w^2}\,{\bf H}_w\,{\bf Q}_n\,{\bf H}_w ^\dag+{\bf I}_{N_w}}\\
&{\overset{(a)}{=}\log\abs{\frac{1}{\sigma_w^2}\,{\bf Q}_n\,{\bf H}_w ^\dag\,{\bf H}_w+{\bf I}_{N_a}}}\\
&{{=}\log\abs{\frac{1}{\sigma_w^2}\,{\bf Q}_n\,{\bf U}^\dag\,{\bf D}_{\it w}^T\,{\bf V}_{\it w}^\dag\,{\bf V}_{\it w}\,{\bf D}_{\it w}\,{\bf U}+{\bf I}_{N_a}}}\\
&{\overset{(b)}{=}\log\abs{\frac{1}{\sigma_w^2}\,{\bf Q}_n\,{\bf U}^\dag\,{\bf \Lambda}_{\it w}\,{\bf U}+{\bf I}_{N_a}}}\\
&{\overset{(c)}{=}\log\abs{\frac{1}{\sigma_w^2}\,{\bf U}\,{\bf Q}_n\,{\bf U}^\dag\,{\bf \Lambda}_{\it w}+{\bf I}_{N_a}}}\\
&{{=}\log\abs{\frac{1}{\sigma_w^2}\,\tilde{\bf Q}_n\,{\bf \Lambda}_{\it w}+{\bf I}_{N_a}}}\\
&{\overset{(d)}{=}\sum_{i=1}^{N}\log\left(\frac{q_i\,{\lambda}_{w,i}}{\sigma_{\it w}^2}+1\right)},
\end{aligned} \end{equation}
where (a) follows since $\abs{{\bf A}{\bf B}+{\bf I}}=\abs{{\bf B}{\bf A}+{\bf I}}$, (b) since ${\bf V}_{\it w}$ is a unitary matrix, and ${\bf D}_{\it w}^T\,{\bf D}_{\it w}={\bf \Lambda}_{\it w}=\mathbf{diag}(\lambda_{w,1},\dots,\lambda_{w,N_a})\in \mathbb{R}^{N_a\,\times N_a}$, (c) since $\abs{{\bf A}{\bf B}+{\bf I}}=\abs{{\bf B}{\bf A}+{\bf I}}$, and (d) since $\tilde{\bf Q}_n$ is a diagonal matrix derived in Appendix \ref{App:power allocation}. Similarly,
\begin{equation}\begin{aligned}
\mathbf{tr}\left(\frac{1}{\sigma_w^2}\,{\bf H}_w\,{\bf Q}_n\,{\bf H}_w ^\dag\right)&=\mathbf{tr}\left(\frac{1}{\sigma_w^2}\,\tilde{\bf Q}_n\,{\bf \Lambda}_{\it w}\right)=\sum_{i=1}^{N}{\frac{q_i\,{\lambda}_{w,i}}{\sigma_{\it w}^2}}.
\end{aligned} \end{equation}

\section{Proof of Theorem \ref{thm:Scaling}}\label{App:proof_thm2}

\begin{IEEEproof}{\bf Converse:} Consider a given $(2^{\lceil nR\rceil},n,\epsilon,\delta_{kl})$-code that satisfies the KL constraint, $\mathcal{D}(\mathbb{P}_{\mathbf{Y}_w^n}\parallel \mathbb{P}_{\mathbf{Z}_w^n})\leq 2\,\delta_{kl}^2$, with an average probability of decoding error  $P^{(n)}_e$ such that $\lim_{n \rightarrow \infty} P^{(n)}_e=0$. From the converse of Theorem 1, $nR(n,\epsilon,\delta_{kl})\leq\,n\,C_{c}({\bf Q}_n)+n\,\epsilon_n$, and $\mathcal{D}(\mathbb{P}_{\mathbf{Y}_w}\parallel\mathbb{P}_{\mathbf{Z}_w})\leq\frac{2\,\delta_{kl}^2}{n}$. Hence, the KL divergence can be lower bounded as follows:
\begin{equation}\begin{aligned}\label{bound_on_KL}
\frac{2\,\delta_{kl}^2}{n}&\geq\mathcal{D}(\mathbb{P}_{\mathbf{Y}_w}\parallel\mathbb{P}_{\mathbf{Z}_w})\\
&=\sum_{i=1}^{N}\left[{\frac{q_i\,{\lambda}_{w,i}}{\sigma_{\it w}^2}}-\log\left(\frac{q_i\,{\lambda}_{w,i}}{\sigma_{\it w}^2}+1\right)\right]\\
&{\geq}\sum_{i=1}^{N}{\frac{q_i^2\,{\lambda}_{w,i}^2}{2\,\sigma_{\it w}^4}},
\end{aligned} \end{equation}
where the second inequality follows from the logarithm fact that $\log(x+1)\geq x-\frac{x^2}{2},\ x>0$. Let $\frac{q_i^2\,\lambda_{w,i}^2}{2\,\sigma_{\it w}^4}\leq\frac{2\,c_i\,\delta_{kl}^2}{n}, \ \forall i$ such that the bound in \eqref{bound_on_KL} is satisfied when summing over the available degrees of freedom. Consequently, 
\begin{equation}
q_i\leq\frac{2\,\sigma_{\it w}^2}{{\tilde{\lambda}_{w,i}}}\sqrt{\frac{c_i\,\delta_{kl}^2}{n}},\ \forall i.
\end{equation}

Therefore, an upper bound on the scaling, $L$, can be obtained as follows:
\begin{equation}
\begin{aligned}
L\triangleq& \lim_{\epsilon \downarrow 0} \varliminf_{n \rightarrow \infty} \sqrt{\frac{n}{2\,\delta_{kl}^2}}{R}(n,\epsilon,\delta_{kl})\\
\leq&\varliminf_{n \rightarrow \infty} \sqrt{\frac{n}{2\,\delta_{kl}^2}}\, C_{c}({\bf Q}_n)\\
\leq&\sum_{i=1}^{N}\frac{\sqrt{2\,c_i}\,\sigma_w^2\,{\lambda}_{\it b,i}}{\sigma_{\it b}^2\,\lambda_{w,i}},
\end{aligned}\end{equation}
where
\begin{equation}
\begin{aligned}
C_{c}({\bf Q}_n)=& \log\abs{\frac{1}{\sigma_{\it b}^2}\,{\bf H}_{\it b}\,{\bf Q}_n\,{\bf H}_{\it b} ^\dag+{\bf I}_{N_{\it b}}}\\
=& \log\abs{\frac{1}{\sigma_{\it b}^2}\,\tilde{\bf Q}_n\,{\bf \Lambda}_{\it b}+{\bf I}_{N_{\it a}}}\\
=& \sum_{i=1}^{N}\log\left(\frac{q_i\,{\lambda}_{\it b,i}}{\sigma_{\it b}^2}+1\right)\\
\stackrel{(a)}{\leq}& \sqrt{\frac{n}{2\,\delta_{kl}^2}} \sum_{i=1}^{N}\frac{q_i\,{\lambda}_{\it b,i}}{\sigma_{\it b}^2}\\
\stackrel{(b)}{\leq}& \sum_{i=1}^{N}\frac{2\,\sigma_w^2\,{\lambda}_{\it b,i} }{\sigma_{\it b}^2\,{\lambda}_{w,i}}\sqrt{\frac{c_i\,\delta_{kl}^2}{n}},
\end{aligned}\end{equation}(a) follows from the logarithm inequality and (b) by substituting an upper bound on $q_i,\ \forall i$.

{\bf Achievability:} The KL divergence can be upper bounded as follows:
\begin{equation}\begin{aligned}
\mathcal{D}(\mathbb{P}_{\mathbf{Y}_w}\parallel\mathbb{P}_{\mathbf{Z}_w})&=\sum_{i=1}^{N}\left[{\frac{q_i\,{\lambda}_{w,i}}{\sigma_{\it w}^2}}-\log\left(\frac{q_i\,{\lambda}_{w,i}}{\sigma_{\it w}^2}+1\right)\right]\\
&\stackrel{(a)}{\leq} \sum_{i=1}^{N} \left[{\frac{q_i\,{\lambda}_{w,i}}{\sigma_{\it w}^2}}-\left(\frac{\frac{2\,q_i\,{\lambda}_{w,i}}{\sigma_{\it w}^2}}{2+\frac{q_i\,{\lambda}_{w,i}}{\sigma_{\it w}^2}}\right)\right]\\
&= \sum_{i=1}^{N}{\frac{q_i^2\,{\lambda}^2_{w,i}}{2\,\sigma_{\it w}^4+q_i\,{\lambda}_{w,i}\sigma_{\it w}^2}}\leq\frac{2\,\delta_{kl}^2}{n},
\end{aligned} \end{equation}
where (a) follows from the logarithm inequality, $\frac{2x}{2+x}\leq\log(1+x),\ \forall x\geq 0$. Let ${\frac{q_i^2\,{\lambda}^2_{w,i}}{2\,\sigma_{\it w}^4+q_i\,{\lambda}_{w,i}\sigma_{\it w}^2}}\leq\frac{2\,c_i\,\delta_{kl}^2}{n}, \ \forall i,$ then,
\begin{equation}
q_i\leq\frac{c_i\sigma_w^2\,\delta_{kl}^2 }{n\,{\lambda}_{w,i}}\left[\sqrt{1+\frac{4\,n}{c_i\,\delta_{kl}^2}}+1\right], \ \forall i. 
\end{equation}

Hence, choosing $q_i=\frac{2\,\sigma_w^2}{{\lambda}_{w,i}}\sqrt{\frac{c_i\,\delta_{kl}^2}{n}},\ \forall i$, satisfies the KL constraint. Therefore, from the achievability of Theorem 1 and using \eqref{eq:P_lim_inf}, the following bound on the scaling, $L$, is achievable.
\begin{equation}
\begin{aligned}
L\geq& \varliminf_{n \rightarrow \infty}\frac{1}{\sqrt{2\,n\,\delta_{kl}^2}}\mathbb{E}\left[\mathbb{I}({\bf X}^n;{\bf Y}^n_b)\right]\\
=& \varliminf_{n \rightarrow \infty}\sqrt{\frac{n}{2\,\delta_{kl}^2}}\,C_{c}({\bf Q}_n)
\geq\sum_{i=1}^{N}\frac{\sqrt{2\,c_i}\,\sigma_w^2\,{\lambda}_{\it b,i}}{\sigma_{\it b}^2\,{\lambda}_{w,i}},
\end{aligned} 
\end{equation}
where
\begin{equation}
\begin{aligned}
C_{c}({\bf Q}_n)=& \log\abs{\frac{1}{\sigma_{\it b}^2}\,{\bf H}_{\it b}\,{\bf Q}_n\,{\bf H}_{\it b} ^\dag+{\bf I}_{N_{\it b}}}\\
=& \log\abs{\frac{1}{\sigma_{\it b}^2}\,\tilde{\bf Q}_n\,{\bf \Lambda}_{\it b}+{\bf I}_{N_{\it a}}}\
= \sum_{i=1}^{N}\log\left(\frac{q_i\,{\lambda}_{\it b,i}}{\sigma_{\it b}^2}+1\right)\\
\stackrel{(a)}{\geq}&\sum_{i=1}^{N}\left(\frac{\frac{2\,q_i\,{\lambda}_{\it b,i}}{\sigma_{\it b}^2}}{\frac{q_i\,{\lambda}_{\it b,i}}{\sigma_{\it b}^2}+2}\right)\\
\stackrel{(b)}{\geq}&\sum_{i=1}^{N}\left(\frac{{4\,\sigma_w^2\,{\lambda}_{\it b,i}\,\sqrt{{c_i\,\delta_{kl}^2}}}}{{2\,{\sigma_w^2}\,{\lambda}_{\it b,i}\,\sqrt{{c_i\,\delta_{kl}^2}}}+2\,\sqrt{n}\,{\lambda}_{w,i}\,\sigma_{\it b}^2}\right),
\end{aligned} 
\end{equation}
(a) follows from the logarithm inequality and (b) holds by substituting the chosen value of $q_i,\ \forall i$.
\end{IEEEproof}


\section{Proof Sketch for Corollary \ref{thm:sec_covert}}\label{App:proof_corollary5}
\begin{IEEEproof} The key point is that the KL constraint forces the average power to decay with the blocklength. Note that, this is identical to the situation in \cite{wang2016fundamental}. Thus, in the achievability proof, the sequence of input distributions can be interpreted as an input distribution with a sequence of covariance matrices. In the following, we sketch the achievability proof. Alice chooses a sequence of input distributions such that the sequence of output distributions satisfies an average probability of decoding error at Bob and the single-letter KL constraint at Willie. For every blocklength, we generate a random codebook according to the chosen distribution. The codebook is not kept secret. For an arbitrary small $\delta_{kl}\geq 0$, the average power constraint and the KL constraint are satisfied. 

From the achievability proof of Theorem \ref{thm:Kn_covert}, the random sequence $\left\{\frac{1}{\sqrt{n}}\mathbb{I}({\bf X}^n;{\bf Y}_b^n)\right\}$ converges to $\sqrt{n}\,R_{b}({\bf Q}_n^*)$ in probability. Thus, the sequence $\left\{\sqrt{n}R(n,\epsilon,\delta_{kl})\right\}$ is achievable, i.e., there exists a sequence of $(2^{{\lceil nR\rceil}},n,\epsilon,\delta_{kl})$-codes with a vanishing average probability of decoding error. Since the codebook is public, similarly, the random sequence $\left\{\frac{1}{\sqrt{n}}\mathbb{I}({\bf X}^n;{\bf Y}_w^n)\right\}$ converges to $\sqrt{n}\,R_{w}({\bf Q}_n^*)$ in probability. Thus, it is possible for Willie to decode and get information $\mathcal{O}(\sqrt{n})$ bits with a vanishing average probability of decoding error. That is, the information leakage to Willie is $\mathrm{I}(M;{\bf Y}^n_w)$, which scales as $\mathcal{O}(\sqrt{n})$ bits, even when the information rate to both Bob and Willie is zero due to the KL constraint. This means that Willie is still able to detect the presence of the communication session by directly decoding the message. This implies that the KL constraint is not sufficient without IS. For DMC, the amount of information leaked to Willie scales as $\mathcal{O}(\sqrt{n})$, and the decay of the information leakage with the blocklength is characterized in \cite[Theorem 5]{Bloch_resolvability}.

To prevent Willie from decoding the message, Alice exploits a stochastic encoder with a sequence of randomization rates greater than $\left\{\frac{1}{{n}}\mathbb{I}({\bf X}^n;{\bf Y}_w^n)\right\}$. This overwhelms Willie with a sequence $\left\{2^{\mathbb{I}({\bf X}^n;{\bf Y}_w^n)}\right\}$ of dummy messages that are $\mathcal{O}(\sqrt{n})$ bits due to the KL constraint. On the other hand, Bob is still able to decode a sequence $\left\{2^{\mathbb{I}({\bf X}^n;{\bf Y}_b^n)-\mathbb{I}({\bf X}^n;{\bf Y}_w^n)}\right\}$ of confidential messages that are $\mathcal{O}(\sqrt{n})$ bits as well.
\end{IEEEproof}

\section{Proof of Theorem \ref{thm:secrecy_Scaling}}\label{App:proof_thm3}

\begin{IEEEproof}
{\bf Converse:} Similar to the converse of Theorem 2, an upper bound on the scaling is obtained as follows:
\begin{equation}
\begin{aligned}\label{L_Ssecrecy_converse}
L_{S}&\triangleq \lim_{\epsilon \downarrow 0} \varliminf_{n \rightarrow \infty} \sqrt{\frac{n}{2\,\delta_{kl}^2}}{R}(n,\epsilon,\delta_{kl},\delta_{s})\\
&\leq\varliminf_{n \rightarrow \infty}\sqrt{\frac{n}{2\,\delta_{kl}^2}} \,C_s({\bf Q}_n)\leq\sum_{i=1}^{N}\sqrt{2\,c_i}\left[\frac{\,{\sigma_w^2}\,{\lambda}_{\it b,i}}{\sigma_{\it b}^2\,{\lambda}_{w,i}}-1\right]^+,
\end{aligned} 
\end{equation}
where
\begin{equation}
\begin{aligned}
C_s&= \left[\log\abs{\frac{1}{\sigma_{\it b}^2}\,\tilde{\bf Q}_n\,{\bf \Lambda}_{\it b}+{\bf I}_{N_{\it a}}}-\log\abs{\frac{1}{\sigma_w^2}\,\tilde{\bf Q}_n\,{\bf \Lambda}_{\it w}+{\bf I}_{N_{\it a}}}\right]^+\\
&= \sum_{i=1}^{N}\left[\log\left(\frac{q_i\,{\lambda}_{\it b,i}}{\sigma_{\it b}^2}+1\right)-\log\left(\frac{q_i\,{\lambda}_{\it w,i}}{\sigma_{\it w}^2}+1\right)\right]^+\\
&\stackrel{(a)}{\leq} \sum_{i=1}^{N}\left[\frac{q_i\,{\lambda}_{\it b,i}}{\sigma_{\it b}^2}-\left(\frac{\frac{q_i\,{\lambda}_{\it w,i}}{\sigma_{\it w}^2}}{\frac{q_i\,{\lambda}_{\it w,i}}{\sigma_{\it w}^2}+1}\right)\right]^+\\
&\stackrel{(b)}{\leq} \sum_{i=1}^{N}\left[\frac{2\,{\sigma_w^2}\,{\lambda}_{\it b,i}}{\sigma_{\it b}^2\,{\lambda}_{w,i}}\sqrt{\frac{c_i\,\delta_{kl}^2}{n}}-\left(\frac{2\,\sqrt{\frac{c_i\,\delta_{kl}^2}{n}}}{2\,\sqrt{\frac{c_i\,\delta_{kl}^2}{n}}+1}\right)\right]^+,
\end{aligned} 
\end{equation}
(a) follows from the logarithm inequality and (b) holds by substituting the chosen value of $q_i,\ \forall i$.

{\bf Achievability:} Using the achievability of Theorem 2, the following bound on the scaling law is achievable:
\begin{equation}
\begin{aligned}
\label{L_Ssecrecy_ach}
L_{S}&\geq \varliminf_{n \rightarrow \infty}\frac{1}{\sqrt{2\,n\,\delta_{kl}^2}}\mathbb{E}\left[\mathbb{I}({\bf X}^n;{\bf Y}^n_b)-\mathbb{I}({\bf X}^n;{\bf Y}^n_w)\right]^+\\
&=\varliminf_{n \rightarrow \infty}\,C_s({\bf Q}_n)\geq\sum_{i=1}^{N}\sqrt{2\,c_i}\left[\frac{\,{\sigma_w^2}\,{\lambda}_{\it b,i}}{\sigma_{\it b}^2\,{\lambda}_{w,i}}-1\right]^+,
\end{aligned} 
\end{equation}
where
\begin{equation}
\begin{aligned}
C_s&=\sum_{i=1}^{N}\left[\log\left(\frac{q_i\,{\lambda}_{\it b,i}}{\sigma_{\it b}^2}+1\right)-\log\left(\frac{q_i\,{\lambda}_{\it w,i}}{\sigma_{\it w}^2}+1\right)\right]^+\\
& \stackrel{(a)}{\geq}\sum_{i=1}^{N}\left[\frac{\frac{q_i\,{\lambda}_{\it b,i}}{\sigma_{\it b}^2}}{\frac{q_i\,{\lambda}_{\it b,i}}{\sigma_{\it b}^2}+1}-\frac{q_i\,{\lambda}_{\it w,i}}{\sigma_{\it w}^2}\right]^+\\
&\stackrel{(b)}{\geq}\sum_{i=1}^{N}\left[\frac{{2{\sigma_w^2}{\lambda}_{\it b,i}}\sqrt{{c_i\delta_{kl}^2}}}{{2{\sigma_w^2}{\lambda}_{\it b,i}}\sqrt{{c_i\delta_{kl}^2}}+\sigma_{\it b}^2{\lambda}_{w,i}\sqrt{n}}-2\sqrt{\frac{c_i\delta_{kl}^2}{n}}\right]^+,
\end{aligned} 
\end{equation}
(a) follows from the logarithm inequality and (b) holds by substituting the chosen value of $q_i,\ \forall i$.
\end{IEEEproof}


\section{The Scaling in the Covert Compound Channel Setting} \label{App:compound}

In this appendix, we show, 1) under the KL constraint, we can still choose appropriate quantization levels to guarantee that the maximum probability of decoding error (the maximum overall number of states) of the compound covert DMC with an arbitrary uncertainty set vanishes when using a code designed for an approximated compound covert DMC with a finite uncertainty set; 2) the maximum covert rate of the compound MIMO AWGN channel scales as the covert rate of the worst channel within the underlying class of channels. We follow closely the same arguments that are introduced in \cite{schaefer2015secrecy} for the compound DMC setting given an arbitrary channel uncertainty set and under the strong IS constraint. Although it is natural to start with covert communication with a secret codebook, we directly and briefly investigate a specific case when the codebook is not kept secret and a strong IS constraint is used. In the following, we give related results under both the KL constraint and the strong IS constraint.

{\bf 1- A compound covert DMC with IS and a finite uncertainty set:} Let $\mathcal{X}$ be a finite input set, $\mathcal{Y}_b,\ \mathcal{Y}_w$ be finite output sets, and $\mathcal{S}$ be the channel uncertainty set. For every state $s\in \mathcal{S}$, Bob's and Willie's channels transition are given by $\mathbb{W}^n_s(y^n_b|x^n)=\Pi^n_{i=1}\mathbb{W}_s(y_{b,i}|x_i)$ and $\mathbb{V}^n_s(y^n_w|x^n)=\Pi^n_{i=1}\mathbb{V}_s(y_{w,i}|x_i)$, respectively, while $x^n\in \mathcal{X}^n$ and $y_u^n\in \mathcal{Y}_u^n,\ u\in\left\{b,w\right\},$ are input and output sequences. The input distribution $\mathbb{P}_n$ induces output distributions $\mathbb{Q}_{Y_b}$ and $\mathbb{Q}_{Y_w}$ at Bob and Willie respectively. Then, the compound covert DMC with IS is given by $\mathfrak{M}=\left\{\left(\mathbb{W}_s,\mathbb{V}_s\right):\ s\in\mathcal{S} \right\}$. We consider a $(2^{{\lceil nR\rceil}},n,\epsilon,\delta_{kl},\delta_{s})$-code in Definition \ref{code_IS} but defined for a covert DMC with IS under 1) a maximum probability of decoding error, 2) a maximum KL divergence, and 3) a maximum information leakage, where the maximum is over all states. For a fixed state $s\in\mathcal{S}$, the average probability of decoding error decays as follows \cite[Theorem 5]{Bloch_resolvability}:
\begin{equation}
 P_e^{(n)}\leq 2^{-\xi_1\,\omega_n\sqrt{n}},
\end{equation}
where $\omega_n=o(1)\cap \omega(1/\sqrt{n})$ such that $\lim_{n\rightarrow\infty}\omega_n=0$ and $\lim_{n\rightarrow\infty}\omega_n\sqrt{n}=\infty$. In the meanwhile, the KL divergence and information leakage decay as follows:
\begin{equation}\begin{aligned}\label{information_leakage_decay_DMC}
 \mathcal{D}(\mathbb{Q}_{Y_w^n}\parallel \mathbb{Q}_{Z_w^n})&\leq \mathcal{D}_n+ 2^{-\xi_2\,\omega_n\sqrt{n}},\\
 I(\mathbb{P}_M,\mathbb{V}^n_s)&\leq 2^{-\xi_3\,\omega_n\sqrt{n}},
\end{aligned}
\end{equation}
where $\mathcal{D}_n\rightarrow\infty$ as $n\rightarrow\infty$, and $\xi_i>0,\ \forall i,$ that depend on $\mathbb{W}_s$ and $\mathbb{V}_s$. Using \cite[Theorem 2]{compound_results} for compound wiretap DMC and \cite{wang2016fundamental} for covert communication over DMC, one can derive the following. For a finite uncertainty set, Alice can choose a sequence of input distributions such that there exists a sequence of $(2^{{\lceil nR\rceil}},n,\epsilon,\delta_{kl},\delta_{s})$-codes that achieves the following scaling of the compound covert DMC with IS:
\begin{equation}
\begin{aligned}
L_c\geq\varliminf_{n \rightarrow \infty} \sqrt{\frac{n}{2\,\delta_{kl}^2}}&\max_{\mathbb{P}_n}
& & {\left[\min_{s\in\mathcal{S}}I(\mathbb{P}_n;\mathbb{W}_s)-\max_{s\in\mathcal{S}}I(\mathbb{P}_n;\mathbb{V}_s)\right]} \\
& \text{s. t.}
& &\max_{s\in\mathcal{S}}\mathcal{D}(\mathbb{Q}_{Y_w}\parallel \mathbb{Q}_{Z_w})\leq\frac{2\,\delta_{kl}^2}{n},
\end{aligned}
\end{equation}
with a maximum probability of decoding error, a maximum KL divergence and a maximum information leakage that decay as follows:
\begin{equation}
 \begin{aligned}
\max_{s\in\mathcal{S}}P_e^{(n)}&\leq |\mathcal{S}|^{\frac{1}{4}}2^{-\xi_1\,\omega_n\sqrt{n}},\\
 \max_{s\in\mathcal{S}}\mathcal{D}(\mathbb{Q}_{Y_w^n}\parallel \mathbb{Q}_{Z_w^n})&\leq \mathcal{D}_n+ 2^{-\xi_2\,\omega_n\sqrt{n}},\\
 \max_{s\in\mathcal{S}} I(\mathbb{P}_M,\mathbb{V}^n_s)&\leq 2^{-\xi_3\,\omega_n\sqrt{n}},
\end{aligned}
\end{equation}
where $\xi_i>0,\ \forall i,$ that does not depend on a specific state. 

{\bf 2- Extension to the compound covert DMC with IS and an arbitrary uncertainty set:} Using Lemma 1 and Lemma 2 in \cite{schaefer2015secrecy}, for every integer $\ell_q\geq 2|\mathcal{Y}_b|^2|\mathcal{Y}_w|^2$, there exists a compound covert DMC with IS that is given by $\mathfrak{M}_{\ell_q}=\left\{\left(\bar{\mathbb{W}}_s,\bar{\mathbb{V}}_s\right):\ s\in\mathcal{S}_{\ell_q} \right\}$ with a finite uncertainty set, and $|\mathcal{S}_{\ell_q}|\leq ({\ell_q}+1)^{|\mathcal{X}||\mathcal{Y}_b||\mathcal{Y}_w|}$, such that any pair $\left({\mathbb{W}}_s,{\mathbb{V}}_s\right)\in\mathfrak{M}$ can be closely approximated by a pair $\left(\bar{\mathbb{W}}_s,\bar{\mathbb{V}}_s\right)\in\mathfrak{M}_{\ell_q}$. Moreover, any sequence of $(2^{{\lceil nR\rceil}},n,\epsilon,\delta_{kl},\delta_{s})$-codes for $\left(\bar{\mathbb{W}}_s,\bar{\mathbb{V}}_s\right)$ can be used for $\left({\mathbb{W}}_s,{\mathbb{V}}_s\right)$ with the following decay rates:
\begin{equation}
\begin{aligned}
\max_{s\in\mathcal{S}}P_e^{(n)}\leq 2^{\frac{2n|\mathcal{Y}_b|^2|\mathcal{Y}_w|^2}{{\ell_q}}}&\max_{s\in\mathcal{S}_{\ell_q}}\bar{P}_e^{(n)},\\
|\mathcal{D}(\mathbb{Q}_{Y_w^n}\parallel \mathbb{Q}_{Z_w^n})-\mathcal{D}(\bar{\mathbb{Q}}_{Y_w^n}\parallel \bar{\mathbb{Q}}_{Z_w^n})|&\leq d_n,
\\
| I(\mathbb{P}_M,\mathbb{V}^n_s)-I(\bar{\mathbb{P}}_M,\bar{\mathbb{W}}^n_s)|&\leq d_n,
\end{aligned}
\end{equation}
where 
\begin{equation*}
    d_n=4n|\mathcal{Y}_b||\mathcal{Y}_w|^2\log{|\mathcal{Y}_w|}/{\ell_q} + 4nH_2\left(|\mathcal{Y}_b||\mathcal{Y}_w|^2/{\ell_q}\right),
\end{equation*}
and $H_2\left(\cdot\right)$ is the binary entropy function. Thus, a $(2^{{\lceil nR\rceil}},n,\epsilon,\delta_{kl},\delta_{s})$-code for $\left(\bar{\mathbb{W}}_s,\bar{\mathbb{V}}_s\right)$ can be used for $\left({\mathbb{W}}_s,{\mathbb{V}}_s\right)$ with a probability of decoding error as follows:
\begin{equation}
\begin{aligned}
\max_{s\in\mathcal{S}}P_e^{(n)}&\leq 2^{\frac{2n|\mathcal{Y}_b|^2|\mathcal{Y}_w|^2}{{\ell_q}}}|\mathcal{S}|^{\frac{1}{4}}2^{-\xi_1\,\omega_n\sqrt{n}}\\&\leq ({\ell_q}+1)^{\frac{|\mathcal{X}||\mathcal{Y}_b||\mathcal{Y}_w|}{4}}2^{-\omega_n\sqrt{n}\left(\xi_1-\frac{2\sqrt{n}|\mathcal{Y}_b|^2|\mathcal{Y}_w|^2}{\omega_n\,{\ell_q}}\right)}.
\end{aligned}
\end{equation}
Therefore, from the first multiplicand, ${\ell_q}$ should not scale faster than $2^{-\omega_n\sqrt{n}}$, i.e., sub-exponentially in $n$, and from the second multiplicand, it should not scale slower than $\frac{\sqrt{n}}{\omega_n}$. We can choose ${\ell_q}$ to scale as a power of $n$ to guarantee the maximum probability of decoding error vanishes when using a code designed for the approximated $\mathfrak{M}_{\ell_q}$. Besides, the KL divergence (and similarly the information leakage) is bounded as follows:
\begin{equation}
\begin{aligned}
\max_{s\in\mathcal{S}}\mathcal{D}(\mathbb{Q}_{Y_w^n}\parallel \mathbb{Q}_{Z_w^n})&\leq\mathcal{D}(\bar{\mathbb{Q}}_{Y_w^n}\parallel \bar{\mathbb{Q}}_{Z_w^n})+d_n\\
&\leq\mathcal{D}_n+2^{-\xi_2\,\omega_n\sqrt{n}}+d_n.
\end{aligned}
\end{equation}
Thus, ${\ell_q}$ should scale faster than $n$ but slower than $2^{-\omega_n\sqrt{n}}$. A good choice can be ${\ell_q}=a\,n^2$, which is the same scaling that is given in \cite{schaefer2015secrecy} where $a>2|\mathcal{Y}_b|^2|\mathcal{Y}_w|^2\, \max{\left\{1, 1/\xi_1\right\}}$. Now, for an arbitrary uncertainty set, Alice can choose a sequence of input distributions such that there exists a sequence of $(2^{{\lceil nR\rceil}},n,\epsilon,\delta_{kl},\delta_{s})$-codes that achieves the following scaling of the compound covert DMC with IS:
\begin{equation}
\begin{aligned}
L_c\geq\varliminf_{n \rightarrow \infty} \sqrt{\frac{n}{2\,\delta_{kl}^2}}&\sup_{\mathbb{P}_n}
& & {\left[\inf_{s\in\mathcal{S}}I(\mathbb{P}_n;\mathbb{W}_s)-\sup_{s\in\mathcal{S}}I(\mathbb{P}_n;\mathbb{V}_s)\right]} \\
& \text{s. t.}
& &\sup_{s\in\mathcal{S}}\mathcal{D}(\mathbb{Q}_{Y_w}\parallel \mathbb{Q}_{Z_w})\leq\frac{2\,\delta_{kl}^2}{n}.
\end{aligned}
\end{equation}

{\bf 3- Existence of a sequence of universal quantizers for all input distributions:} using Lemma 3 and Lemma 4 in \cite{schaefer2015secrecy}, there exists a sequence of universal quantizers $\left\{q_{X,k},q_{Y_b,k},q_{Y_w,k}\right\},\ k\in\mathbb{N},$ for all input distributions and $\forall s\in\mathcal{S}$ and $\mathcal{S}$ is compact such that for every $k>k_\epsilon\in\mathbb{N}$,
\begin{equation}
 I(\bar{\mathbb{P}}_n;\bar{\mathbb{W}}_s)-I(\bar{\mathbb{P}}_n;\bar{\mathbb{V}}_s)\geq\inf_{s\in\mathcal{S}}I(\mathbb{P}_n;\mathbb{W}_s)-\sup_{s\in\mathcal{S}}I(\mathbb{P}_n;\mathbb{V}_s)-\epsilon,
\end{equation}
where $\epsilon>0$. Therefore, the scaling of the compound covert channel with IS, continuous input and output alphabets, and a continuous/compact uncertainty set is bounded as follows:
\begin{equation}
\begin{aligned}
L_c\geq\varliminf_{n \rightarrow \infty} \sqrt{\frac{n}{2\,\delta_{kl}^2}}&\sup_{\mathbb{P}_n}
& & {\left[\inf_{s\in\mathcal{S}}I(\mathbb{P}_n;\mathbb{W}_s)-\sup_{s\in\mathcal{S}}I(\mathbb{P}_n;\mathbb{V}_s)\right]} \\
& \text{s. t.}
& &\sup_{s\in\mathcal{S}}\mathcal{D}(\mathbb{Q}_{Y_w}\parallel \mathbb{Q}_{Z_w})\leq\frac{2\,\delta_{kl}^2}{n}.
\end{aligned}
\end{equation}
Consequently, the scaling of the compound covert MIMO AWGN channels with IS is lower bounded as follows:
\begin{equation}\label{lowerboundcompound}
\begin{aligned}
L_c\geq \varliminf_{n \rightarrow \infty} \sqrt{\frac{n}{2\,\delta_{kl}^2}}R_s,
\end{aligned}
\end{equation}
where 
\begin{equation*}
\begin{aligned}
R_s=
&\max_{\substack{\mathbf{Q}_n \succeq \mathbf{0} \\ 
\mathbf{tr}(\mathbf{Q}_n) \leq P}}\min_{\mathbf{W}_b,\,\mathbf{W}_w\in\mathcal{S}}
  {\left[R_b({\bf Q}_n,\mathbf{W}_b)-R_w({\bf Q}_n,\mathbf{W}_w)\right]} \\
& \text{subject to:}  \max_{\mathbf{W}_w\in\mathcal{S}}\mathcal{D}(\mathbb{P}_{\mathbf{Y}_w}\parallel\mathbb{P}_{\mathbf{Z}_w})\leq\frac{2\,\delta_{kl}^2}{n},
\end{aligned}
\end{equation*}
and $\mathbf{W}_u=\mathbf{H}_u\,\mathbf{H}_u^{ \dag},\ u\in\left\{b,w\right\}$.

{\bf 4- The worst-case scaling (an upper bound on the scaling):} we consider the case when Bob's CSI is known and Willie's CSI is unknown but belongs to the set of channels with a bounded spectral norm. Then, the scaling of the worst-case channel is given as follows:
\begin{equation}
\begin{aligned}
L^{wrst}_c= \varliminf_{n \rightarrow \infty} \sqrt{\frac{n}{2\,\delta_{kl}^2}}R_s^{wrst},
\end{aligned}
\end{equation}
where 
\begin{equation*}
\begin{aligned}
R_s^{wrst}&=\min_{\mathbf{W}_w\in\mathcal{S}}
&\max_{\substack{\mathbf{Q}_n \succeq \mathbf{0} \\ \mathbf{tr}(\mathbf{Q}_n) \leq P}}
&  {\left[R_b({\bf Q}_n)-R_w({\bf Q}_n,\mathbf{W}_w)\right]} \\
& & \text{subject to:}
& \max_{\mathbf{W}_w\in\mathcal{S}}\mathcal{D}(\mathbb{P}_{\mathbf{Y}_w}\parallel\mathbb{P}_{\mathbf{Z}_w})\leq\frac{2\,\delta_{kl}^2}{n}\\
&= &\max_{\substack{\mathbf{Q}_n \succeq \mathbf{0} \\ 
\mathbf{tr}(\mathbf{Q}_n) \leq P}}
&  {\left[R_b({\bf Q}_n)-R_w({\bf Q}_n,\sigma_w^2\,\hat{\lambda}_w{\bf I}_{N_a})\right]} \\
& &\text{subject to:}
& \mathcal{D}_{wrst}(\mathbb{P}_{\mathbf{Y}_w}\parallel\mathbb{P}_{\mathbf{Z}_w})\leq\frac{2\,\delta_{kl}^2}{n},
\end{aligned}
\end{equation*}
and $\mathcal{D}_{wrst}$ is the maximum KL divergence calculated at the worst-case channel $\mathbf{W}_{w}=\sigma_w^2\,\hat{\lambda}_w{\bf I}_{N_a}$.
\begin{IEEEproof}
Since $R_w({\bf Q}_n,\mathbf{W}_w)$ is monotonically increasing in $\mathbf{W}_w$ in the sense that
\begin{equation}
 R_w({\bf Q}_n,\mathbf{W}_{w_1})\leq R_w({\bf Q}_n,\mathbf{W}_{w_2}) \mbox{ if } \mathbf{W}_{w_1}\preceq\mathbf{W}_{w_2}.
\end{equation}
Besides, it holds that $\mathcal{D}(\mathbb{P}_{\mathbf{Y}_{w_1}}\parallel\mathbb{P}_{\mathbf{Z}_w})\leq\mathcal{D}(\mathbb{P}_{\mathbf{Y}_{w_2}}\parallel\mathbb{P}_{\mathbf{Z}_w})$ if $\mathbf{W}_{w_1}\preceq\mathbf{W}_{w_2}$. The reason is that each eigenvalue of $\mathbf{W}_{w_2}$ is no less than each eigenvalue of $\mathbf{W}_{w_1}$, and in the expression of KL divergence, $x-\log(x+1)$ is monotonically increasing in $x,\ \forall x\geq 0$. Thus, the maximum KL divergence, $\mathcal{D}_{wrst}$, in the second equality is obtained at the worst-case channel $\mathbf{W}_{w}=\sigma_w^2\,\hat{\lambda}_w{\bf I}_{N_a}$. Subtracting both sides from $R_b({\bf Q}_n)$ and maximizing subject to the KL constraint and all admissible ${\bf Q}_n$ then, minimizing over all possible Willie's channel realizations completes the proof.
\end{IEEEproof}

{\bf 5- The saddle-point property to connect upper and lower bounds on the scaling:} The following saddle-point property holds.
\begin{equation}
\begin{aligned}
&\min_{\mathbf{W}_w\in\mathcal{S}} &\max_{\substack{\mathbf{Q}_n \succeq \mathbf{0} \\ \mathbf{tr}(\mathbf{Q}_n) \leq P}}
 {\left[R_b({\bf Q}_n)-R_w({\bf Q}_n,\mathbf{W}_w)\right]} \\
& &\text{subject to:}\  \mathcal{D}_{wrst}(\mathbb{P}_{\mathbf{Y}_w}\parallel\mathbb{P}_{\mathbf{Z}_w})\leq\frac{2\,\delta_{kl}^2}{n}\\
=& \max_{\substack{\mathbf{Q}_n \succeq \mathbf{0} \\ \mathbf{tr}(\mathbf{Q}_n) \leq P}}&\min_{\mathbf{W}_w\in\mathcal{S}}
{\left[R_b({\bf Q}_n)-R_w({\bf Q}_n,\mathbf{W}_w)\right]} \\
& \text{subject to:} & \mathcal{D}_{wrst}(\mathbb{P}_{\mathbf{Y}_w}\parallel\mathbb{P}_{\mathbf{Z}_w})\leq\frac{2\,\delta_{kl}^2}{n}.
\end{aligned}
\end{equation}
\begin{IEEEproof} Recall that,
\begin{equation}\begin{aligned}
\min_{\mathbf{W}_w\in\mathcal{S}}&{\left[R_b({\bf Q}_n)-R_w({\bf Q}_n,\mathbf{W}_w)\right]}\\
 &={\left[R_b({\bf Q}_n)-R_w({\bf Q}_n,\sigma_w^2\,\hat{\lambda}_w{\bf I}_{N_a})\right]}.
\end{aligned}\end{equation}
Maximizing both sides yields:
\begin{equation}
\begin{aligned}
& \max_{\substack{\mathbf{Q}_n \succeq \mathbf{0} \\ \mathbf{tr}(\mathbf{Q}_n) \leq P}} \min_{\mathbf{W}_w\in\mathcal{S}} {\left[R_b({\bf Q}_n)-R_w({\bf Q}_n,\mathbf{W}_w)\right]} \\
& \text{subject to:} \  \mathcal{D}_{wrst}(\mathbb{P}_{\mathbf{Y}_w}\parallel\mathbb{P}_{\mathbf{Z}_w})\leq\frac{2\,\delta_{kl}^2}{n}\\
= &\max_{\substack{\mathbf{Q}_n \succeq \mathbf{0} \\ \mathbf{tr}(\mathbf{Q}_n) \leq P}} {\left[R_b({\bf Q}_n)-R_w({\bf Q}_n,\sigma_w^2\,\hat{\lambda}_w{\bf I}_{N_a})\right]} \\
 &\text{subject to:}\   \mathcal{D}_{wrst}(\mathbb{P}_{\mathbf{Y}_w}\parallel\mathbb{P}_{\mathbf{Z}_w})\leq\frac{2\,\delta_{kl}^2}{n}\\
=&\ R^{wrst}_s.
\end{aligned}
\end{equation}
\end{IEEEproof}

{\bf 6- The scaling of the compound covert MIMO AWGN channels:} With known Bob's CSI, the scaling, $L_c$, of the compound covert MIMO AWGN channels equals the scaling, $L^{wrst}_c$, of the worst-case channel and is given as follows:
\begin{equation}
\begin{aligned}
 L_c=\varliminf_{n \rightarrow \infty} \sqrt{\frac{n}{2\delta_{kl}^2}}&\max_{\substack{\mathbf{Q}_n \succeq \mathbf{0} \\ \mathbf{tr}(\mathbf{Q}_n) \leq P}} {\left[R_b({\bf Q}_n)-R_w({\bf Q}_n,\sigma_w^2\hat{\lambda}_w{\bf I}_{N_a})\right]} \\
& \text{subject to:}\  \mathcal{D}_{wrst}(\mathbb{P}_{\mathbf{Y}_w}\parallel\mathbb{P}_{\mathbf{Z}_w})\leq\frac{2\delta_{kl}^2}{n}.
\end{aligned}
\end{equation}
\begin{IEEEproof}
The scaling of the compound channel is upper bounded by the scaling of the worst-case channel and is lower bounded as in \eqref{lowerboundcompound}. Also, by the saddle-point property, both bounds are equal. This completes the proof. 
\end{IEEEproof}


\section{Proof of Theorem \ref{prop:one}}\label{App:proof_prop1}

\begin{IEEEproof} Since Alice transmits in the spatial transmit signature in the directional cosine of Bob's channel, it suffices only to investigate the KL constraint as $N_a$ goes to infinity. The KL constraint is given by:
\begin{equation}\label{LPD}
\log\abs{\frac{1}{\sigma_w^2}\,{\bf Q}_n\,{\bf H}_w ^\dag\,{\bf H}_w+{\bf I}_{N_a}}+\mathbf{tr}\left(\frac{1}{\sigma_w^2}\,{\bf Q}_n\,{\bf H}_w ^\dag\,{\bf H}_w\right)\leq \frac{2\,\delta_{kl}^2}{n}.
\end{equation}

Using the spatial transmit direction of Willie, the KL constraint in \eqref{LPD} can be rewritten as:
\begin{equation}\label{LPD_with_Willie}
 \begin{aligned}
&\log\abs{\frac{1}{\sigma_w^2}\,{\bf Q}_n\,{\bf u}_{\it w}(\Omega_w)\,{\bf u}_{\it w} ^\dag(\Omega_w)+{\bf I}_{N_a}}\\
+&\,\mathbf{tr}\left(\frac{1}{\sigma_w^2}\,{\bf Q}_n\,{\bf u}_{\it w}(\Omega_w)\,{\bf u}_{\it w} ^\dag(\Omega_w)
\right)\leq \frac{2\,\delta_{kl}^2}{n}.
 \end{aligned}
\end{equation}

Using the maximum transmit power at Bob's spatial transmit direction, the KL constraint in \eqref{LPD_with_Willie} can be rewritten as:
\begin{equation}\label{LPD_Bob}
{\frac{P\,{\lambda}_w\abs{f(\Omega)}^2}{\sigma_w^2}}-\log{\left(\frac{P\,{\lambda}_w\abs{f(\Omega)}^2}{\sigma_w^2}+1\right)}\leq\frac{2\,\delta_{kl}^2}{n}, 
\end{equation}
where $\Omega=\Omega_b-\Omega_w$. Letting $N_a$ go to infinity and utilizing the antenna array design for either a fixed array length or a fixed antenna separation as in \eqref{case1}-\eqref{case2}, the left side of the previous equation goes to zero, under the stated conditions, i.e., the KL constraint is satisfied $ \forall \delta_{kl}\geq 0$. 

Further, to estimate the number of transmit antennas that satisfies a predefined target probability of detection, $\delta_{kl}$, we solve the KL constraint in \eqref{LPD_Bob} to obtain $\abs{f(\Omega)}$. The solution of this equation is given in terms of a branch of the Lambert function $W_{-1}$ as follows:
\begin{equation}
\abs{f(\Omega)}^2\leq\frac{\sigma_w^2}{\xi_w^2N_a\,N_w\,P}\left[-W_{-1}\left(-e^{-\frac{2\,\delta_{kl}^2}{n}-1}\right)-1\right],
\end{equation}
where $|{f(\Omega)}|=\abs{\frac{\sin(\pi\,L_a\,\Omega)}{N_a\,\sin(\pi\,\Delta_a\,\Omega)}}$. Hence, 
\begin{equation}
N_a\geq \frac{P\xi_w^2N_w\abs{\sin(\pi L_a\Omega)}^2}{\sigma_w^2\abs{\sin(\pi\,\Delta_a\,\Omega)}^2}\left[-W_{-1}\left(-e^{-\frac{2\delta_{kl}^2}{n}-1}\right)-1\right]^{-1}.
\end{equation} 

Besides, to derive the convergence of the maximal covert coding rate of unit-rank MIMO AWGN channels to the maximal coding rate of unit-rank MIMO AWGN channels, we use the first-order approximation of the maximum covert coding rate to formulate the following problem:
{\small \begin{equation}\label{optimization_problem}
\begin{aligned}
\lim_{N_a\rightarrow \infty}{R}(n,0,\delta_{kl})=\lim_{N_a\rightarrow \infty}&\max_{\substack{\mathbf{Q}_n \succeq \mathbf{0} \\ \mathbf{tr}(\mathbf{Q}_n) \leq P}}
 {\log\abs{\frac{1}{\sigma_{\it b}^2}\,{\bf H}_{\it b}\,{\bf Q}_n\,{\bf H}_{\it b} ^\dag+{\bf I}_{N_{\it b}}}} \\
& \text{subject to:}\  \mathcal{D}(\mathbb{P}_{\mathbf{Y}_w}\parallel\mathbb{P}_{\mathbf{Z}_w})\leq\frac{2\,\delta_{kl}^2}{n},
\end{aligned}
\end{equation} }
The optimization problem is investigated in Appendix \ref{App:power allocation}. Rewrite the gradient of the Lagrangian function in \eqref{eq:primal_optimality} using the spatial transmit signatures as follows:
\begin{equation}\begin{aligned}
\label{eq_transmit_signature}
&\left[{\bf Q}_n^*+\frac{\sigma_{\it b}^2}{\lambda_b}\left({\bf u}_{\it b}(\Omega_b)\,{\bf u}_{\it b} ^\dag(\Omega_b)\right)^{-1} \right]^{-1}\\
+ &\ \eta \left[{\bf Q}_n^*+\frac{\sigma_{\it w}^2}{\lambda_w}\left({\bf u}_{\it w}(\Omega_w)\,{\bf u}_{\it w} ^\dag(\Omega_w)\right)^{-1}\right]^{-1}\\=&\  \mu\,\mathbf{I}_{N_a}+\frac{\eta\,\lambda_w}{\sigma_{\it w}^2}{\bf u}_{\it w}(\Omega_w)\,{\bf u}_{\it w} ^\dag(\Omega_w).
\end{aligned}\end{equation}

With unknown CSI of Willie's channel, Alice transmits in the spatial transmit signature in the directional cosine of Bob's channel, and hence,
\begin{equation}\begin{aligned}
\left[q^*+\frac{\sigma_{\it b}^2}{{\lambda}_b}\right]^{-1}+\eta\left[q^*+\frac{\sigma_{\it w}^2}{{\lambda}_w{\small\abs{f(\Omega)}^2}}\right]^{-1}=\mu+\frac{\eta{\lambda}_w{\small\abs{f(\Omega)}^2}}{\sigma_{\it w}^2}.
\end{aligned}
\end{equation}

Accordingly, we utilize the antenna array design for either a fixed array length or a fixed antenna separation as in \eqref{case1}-\eqref{case2}, under the stated conditions, $\left[q^*+\frac{\sigma_{\it b}^2}{{\lambda}_b}\right]^{-1}=\mu$ as $N_a$ goes to infinity. Hence, the maximal covert coding rate of unit-rank MIMO AWGN channels converges to the maximal coding rate of unit-rank MIMO AWGN channels and the KL constraint is satisfied, i.e., $\lim_{N_a\rightarrow \infty}{R}(n,0,\delta_{kl})=\lim_{N_a\rightarrow \infty}{R}(n,0)$, for any given $\delta_{kl}\geq 0$ and under the stated conditions. 
\end{IEEEproof}

\section{Proof of Theorem \ref{prop:two}}\label{App:proof_prop2}

\begin{IEEEproof}Using the angular domain representation, the gradient of the Lagrangian function in \eqref{eq:primal_optimality} can be rewritten as follows: 
\begin{equation}\begin{aligned}\label{eq:asymptotic_decomposition}
\left[\tilde{\bf Q}_n^*+\sigma_{\it b}^2\left(\mathbf{H}_b^{g\dag}\,\mathbf{H}_b^{g}\right)^{-1} \right]^{-1}+\eta&\left[\tilde{\bf Q}_n^*+\sigma_{\it w}^2\left({\bf H}_{\it w}^{g \dag}\,{\bf H}^g_{\it w}\right)^{-1} \right]^{-1}\\&=\mu\,\mathbf{I}_{N_a}+ \frac{\eta}{\sigma_{\it w}^2}{\bf H}_{\it w}^{g \dag}\,{\bf H}^g_{\it w},
\end{aligned}\end{equation}
where $\tilde{\bf Q}_n^*={\bf U}_{\it b}^\dag\,{\bf Q}_n^*\,{\bf U}_{\it b}$ is a diagonal matrix with diagonal elements ${q}_i^*$ and $\bf{tr}({\tilde{\mathbf{Q}}_n^*})=\bf{tr}({\mathbf{Q}_n^*})$, and ${\bf U}_{\it b}$ is a unitary transmit matrix whose columns are the orthonormal basis of the transmitted signal space. Hence, the angular domain representation of the gradient in \eqref{eq:asymptotic_decomposition} can be decomposed in each angular window of Bob's channel as follows, 
\begin{equation}
\left[{q}^*_i+\frac{\sigma_{\it b}^2}{{\lambda}_{b,i}}\right]^{-1}+ \eta \left[{q}^*_i+\frac{\sigma_{\it w}^2}{\tilde{\lambda}_{w,i}}\right]^{-1}= \mu +\frac{\eta\,\tilde{\lambda}_{\it w,i}}{\sigma_{\it w}^2},
\end{equation}
$\forall i\in\left\{1,\dots,N_a\right\}$, where
{\small\begin{equation}
\begin{aligned}
\tilde{\lambda}_{w,i}&=\mathbf{u}^{\dag}_{b,i}(\Omega_{b,j})\left[\sum_{j=1}^{N_a}\lambda_{w,j}\mathbf{u}_{w,j}(\Omega_{w,j})\mathbf{u}^{\dag}_{w,j}(\Omega_{w,j})\right]\mathbf{u}_{b,j}(\Omega_{b,i})\\
&=\sum_{j=1}^{N_a}\lambda_{w,j}\,\abs{f(\Omega_{i,j})}^2,
\end{aligned} 
\end{equation}}
and $\Omega_{i,j}=\Omega_{b,i}-\Omega_{w,j}$. When Alice transmits in the spatial transmit signatures in the directional cosines of Bob's channel, we utilize the antenna array design for either a fixed array length or a fixed antenna separation as in \eqref{case1}-\eqref{case2}, $ \forall i,$ $\left[{q}^*_i+\frac{\sigma_{\it b}^2}{{\lambda}_{b,i}}\right]^{-1}=\mu$, as $N_a$ goes to infinity. Thus, the maximal covert coding rate of multi-path MIMO AWGN channels converges to the maximal coding rate of multi-path MIMO AWGN channels and the KL constraint is satisfied, i.e., $\lim_{N_a\rightarrow \infty}{R}(n,0,\delta_{kl})=\lim_{N_a\rightarrow \infty}{R}(n,0)$, for any given $\delta_{kl}\geq 0$ and under the stated conditions.
\end{IEEEproof}



\end{document}